\theoremstyle{plain}
\newtheorem{theorem}{Theorem}[section]
\newtheorem{corollary}[theorem]{Corollary}
\newtheorem{proposition}[theorem]{Proposition}
\theoremstyle{definition}
\newtheorem{definition}[theorem]{Definition}
\newtheorem{example}[theorem]{Example}
\theoremstyle{remark}
\newtheorem{remark}{Remark}
\def\lstAZ{A, B, C, D, E, F, G, H, I, J, K, L, M, N, O, P, Q, R, S, T, U, V, W, X, Y, Z}
\def\lstaz{a, b, c, d, e, f, g, h, i, j, k, l, m, n, o, p, q, r, s, t, u, v, w, x, y, z}
\def\lstAZBB{B, C, D, E, F, G, H, I, J, K, L, M, N, O, P, Q, R, T, U, V, W, X, Y, Z}
\newcommand{\MkScr}[1]{\expandafter\def\csname s#1\endcsname{\mathscr{#1}}}
\newcommand{\MkUp}[1]{\expandafter\def\csname u#1\endcsname{\mathrm{#1}}}
\newcommand{\MkBold}[1]{\expandafter\def\csname b#1\endcsname{\mathbf{#1}}}
\newcommand{\MkFrak}[1]{\expandafter\def\csname f#1\endcsname{\mathfrak{#1}}}
\newcommand{\MkCal}[1]{\expandafter\def\csname c#1\endcsname{\mathcal{#1}}}
\newcommand{\MkBB}[1]{\expandafter\def\csname #1#1\endcsname{\mathbb{#1}}}
\lstAZ\do{%
	\expandafter\MkScr \i  %
	\expandafter\MkFrak \i  %
	\expandafter\MkUp \i %
	\expandafter\MkBold \i %
	\expandafter\MkCal \i  %
}    
\lstaz\do{%
	\expandafter\MkUp \i   }    
\lstAZBB\do{%
	\expandafter\MkBB \i     }
\newcommand{\param}[1]{\leftrightarrow  \mathcal{P}\left(#1\right)}
\newcommand{\parambar}[1]{\leftrightarrow  \bar{\mathcal{P}}\left(#1\right)}
\tikzset{ed/.style={auto,inner sep=2pt,font=\scriptsize}} %
\tikzset{>=stealth}
\tikzset{vert/.style={draw,circle, minimum size=6mm, inner sep=0pt, fill=white}}
\tikzset{vertblank/.style={ minimum size=6mm, inner sep=0pt, fill=white}}
\tikzset{vertbig/.style={draw,circle, minimum size=8mm, inner sep=0pt, fill=white}}
\tikzset{->-/.style={decoration={
      markings,
      mark=at position #1 with {\arrow{>}}},postaction={decorate}}}
\tikzset{edge/.style={line width=0.5pt, decoration={markings,mark=at position 1 with %
    {\arrow[scale=1.5,>=stealth]{>}}},postaction={decorate}}}
\tikzset{dotted/.style={black!30, line width=0.5pt}}
\tikzstyle{morphism}=[fill=white, draw=black, shape=rectangle]
\tikzstyle{medium box}=[fill=white, draw=black, shape=rectangle, minimum width=0.8cm, minimum height=0.9cm]
\tikzstyle{large morphism}=[fill=white, draw=black, shape=rectangle, minimum width=1.7cm, minimum height=1cm]
\tikzstyle{bn}=[fill=black, draw=black, shape=circle, inner sep=1.5pt]
\tikzstyle{state}=[fill=white, draw=black, regular polygon, regular polygon sides=3, minimum width=0.8cm, shape border rotate=180, inner sep=0pt]
\tikzstyle{medium state}=[fill=white, draw=black, regular polygon, regular polygon sides=3, minimum width=1.3cm, inner sep=0pt, shape border rotate=180]
\tikzstyle{large state}=[fill=white, draw=black, regular polygon, regular polygon sides=3, minimum width=2.2cm, shape border rotate=180, inner sep=0pt]
\tikzstyle{wn}=[fill=white, draw=black, shape=circle, inner sep=1.5pt]
\tikzstyle{arrow}=[->]
\tikzstyle{dashed box}=[-, dashed]
\tikzset{none/.style={%
     append after command={%
       \pgfextra{\node [right] at (\tikzlastnode.mid east) {{\tiny\tikzlastnode}};}
     }}}
\tikzstyle{none}=[]
\newcommand{\dd}{{\,\mathrm d}}
\renewcommand{\th}{\theta}
\newcommand{\la}{\lambda}
\newcommand{\ga}{\gamma}
\newcommand{\eps}{\varepsilon}
\renewcommand{\phi}{\varphi}
\newcommand{\scr}[1]{{\mathcal #1}}
\newcommand{\ind}{\mathbf{1}}
\newcommand{\bs}[1]{{\boldsymbol #1}}
\newcommand{\T}{{\prime}}
\newcommand{\id}{\operatorname{id}}
\providecommand{\trace}{{\operatorname{tr}}}
\newcommand{\E}{\mathbb{E}}
\newcommand{\on}[1]{\operatorname{#1}}
\newcommand{\ch}{\mathrm{ch}}
\newcommand{\pa}{\mathrm{pa}}
\newcommand{\Bm}{\begin{bmatrix}}
\newcommand{\Em}{\end{bmatrix}}
\newcommand{\weight}[3]{w_{#1,#2}(#3)}
\renewcommand{\P}{\mathbb{P}} %
\tikzset{vert/.style={draw,circle, minimum size=6mm, inner sep=0pt, fill=white}}
\begin{document}

\title[Backward Filtering Forward Guiding]{Automatic Backward Filtering Forward Guiding for Markov processes and graphical models}
\title[Backward Filtering Forward Guiding]{Backward Filtering Forward Guiding}

\author{F.~H. van der Meulen}
\address{Department of Mathematics, Vrije Universiteit Amsterdam, The Netherlands}
\email{f.h.van.der.meulen@vu.nl}

\author{M. Schauer}
\address{Department of Mathematical Sciences, 
Chalmers University of Technology and University of Gothenburg, Sweden}
\email{smoritz@chalmers.se}

\author{S. Sommer}
\address{Department of Computer Science DIKU, University of Copenhagen, Denmark}
\email{sommer@di.ku.dk}

\keywords{Backward information filter, Bayesian network, Branching diffusion process, directed acyclic graph, conditioned Markov process, Doob's $h$-transform, exponential change of measure,  
 guided process.}
 \subjclass[2020]{Primary MSC2020; 60J05, 60J25, 60J27, 60J60, 62M05} 
 
\maketitle

\begin{abstract}
We develop a general methodological framework for probabilistic inference in discrete- and continuous-time stochastic processes evolving on directed acyclic graphs (DAGs). The process is observed only at the leaf nodes, and the challenge is to infer its full latent trajectory: a smoothing problem that arises in fields such as phylogenetics, epidemiology, and signal processing. Our approach combines a backward information filtering step, which constructs likelihood-informed potentials from observations, with a forward guiding step, where a tractable process is simulated under a change of measure constructed from these potentials. This Backward Filtering Forward Guiding (BFFG) scheme yields weighted samples from the posterior distribution over latent paths and is amenable to integration with MCMC and particle filtering methods. We demonstrate that BFFG applies to both discrete- and continuous-time models,  enabling probabilistic inference in settings where standard transition densities are intractable or unavailable. Our framework  opens avenues for incorporating structured stochastic dynamics into probabilistic programming. We numerically illustrate our approach for a branching diffusion process on a directed tree. 
\end{abstract}

\section{Introduction}

Probabilistic inference in structured dynamical systems is a fundamental task in machine learning, with applications ranging from time series modeling to phylogenetics. In many problems, such as evolutionary biology, signal processing, or latent variable modeling, data is observed only at the \emph{leaves} of a structure, while the latent dynamics evolve along a branching (tree-like) structure. The goal is to infer the latent process given these noisy or partial observations---a problem known as \emph{smoothing}.

In this work, we start from a  smoothing problem on a directed tree with root vertex $r$ and leaf vertices represented by open circles (see Figure~\ref{fig:tree_dag_withkernels}). Each branch either evolves according to a continuous-time stochastic process---such as a diffusion process or continuous-time Markov chain---or follows a transition given by a Markov kernel. At branching vertices, such as $0$ and $3$, the process splits and evolves conditionally independently, given its value at the vertex. Observations are made at the leaf vertices via emission maps that transform the latent state at the parent into an observed value; for example, vertex $5$ might observe a noisy version of vertex $4$. The root state $x_r$ is assumed known, and the branch from $r$ to $0$ encodes a prior.

This setting generalizes classical \emph{state-space models}, where transitions occur in discrete time, and has received attention in \emph{phylogenetics}, where continuous-time models like Brownian motion or Ornstein--Uhlenbeck processes are tractable and well studied (e.g., \cite{hassler2023data}, \cite{ronquist2004bayesian}, \cite{zhang2021large}). However, for many models of interest, closed-form transition densities or analytical smoothing solutions are not available.

We propose a general framework to address this challenge based on three key components:
\begin{enumerate}
    \item \textit{Backward information filtering}, mapping leaf observations into a family of functions $g_u$ on the tree, typically using a simplification of the stochastic process evolving on the tree.
    \item {\it Constructing a \it guided process} on the tree with distribution obtained by an {\it exponential change of measure} to the law of the unconditional (forward) process, using  the family of functions $g_u$. 
    \item \textit{Stochastic simulation}, generating trajectories from the guided process and computing the {\it weight} of each trajectory.
\end{enumerate}
Step (3) involves simulating a \emph{guided process}---a forward process modified by the exponential change of measure to encourage paths consistent with the observations. This overall strategy, which we call \emph{Backward Filtering Forward Guiding (BFFG)}, yields weighted samples from the smoothing distribution and is compatible with MCMC and particle-based inference methods such as the guided particle filter \cite[Chapter 10]{chopin2020introduction}. 
Our approach allows smoothing in complex continuous-time models on tree structures, even when transition densities are intractable or unknown, significantly broadening the class of models amenable to Bayesian inference.

 For both discrete and continuous cases, we identify the key computations required in applying BFFG. Through multiple examples, we demonstrate the versatility of our approach. Moreover, to further illustrate our approach, we show that the  key computations from  \cite{ju2021sequential} and \cite{stoltz2021bayesian} fit within our framework. In a companion paper \cite{schauer2023compositionality} we use category theory to formally prove that BFFG is compositional, a property which we believe to be highly desirable. In a nutshell, it means that BFFG is unaffected by merging two adjacent edges. 
\medskip

We subsequently extend our framework from trees to general directed acyclic graphs (DAGs), ensuring that the guided process retains the same conditional independence structure as the original forward process.
This case is considerably more difficult as, contrary to a directed tree case, there exists no conditioned process that follows the same dependency structure as the unconditional process. Nevertheless, we enforce this dependency structure by retaining the definition of the guided process on the tree. Our main motivation for this choice is that it enables automatic program transform of the unconditional process to the guided process. This entails that the programme structure of BFFG is the same as that of the forward process. This makes our approach amenable to be incorporated in probabilistic programming languages.  The exact way of carrying out the backward filtering step is however problem specific and not automatic.

The literature for inference in graphical models is vast. Nevertheless, as far as we are aware, few works have considered a unified approach to both discrete- and  continuous-time processes evolving along the edges of the graph.

\begin{figure}
\begin{center}
\begin{tikzpicture}

\tikzstyle{empty}=[fill=white, draw=black, shape=circle,inner sep=1pt, line width=0.7pt]
\tikzstyle{solid}=[fill=black, draw=black, shape=circle,inner sep=1pt,line width=0.7pt]

\begin{pgfonlayer}{nodelayer}
		\node [style=empty,label={$r$},] (r) at (-9, 1) {};
		\node [style=solid,label={$0$},] (0) at (-7, 1) {};
		\node [style=solid,label={$1$},] (1) at (-4, -0.5) {};
		\node [style=solid,label={$3$},] (3) at (-4.5, 2) {};
		\node [style=empty,label={$6$},] (6) at (-3, 1) {};
		\node [style=solid,label={$4$},] (4) at (-3, 3) {};
		\node [style=solid,label={$2$}] (2) at (-2, -0.5) {};
		\node [style=empty,label={$7$}] (7) at (-0, -0.5) {};
		\node [style=empty,label={$5$},] (5) at (-1, 3) {};
\end{pgfonlayer}

\begin{pgfonlayer}{edgelayer}
		\draw [style=edge] (r) to (0);
		\draw [style=edge, color=blue] (0) to (1);
		\draw [style=edge, color=blue] (0) to (3);
		\draw [style=edge] (3) to (6);
		\draw [style=edge, color=blue] (3) to (4);
		\draw [style=edge, color=blue] (1) to (2);
		\draw [style=edge] (2) to (7);
		\draw [style=edge] (4) to (5);
\end{pgfonlayer}

\end{tikzpicture}
\end{center}
\caption{Example of a tree with known root vertex $r$, with observations at vertices $5$, $6$ and $7$. A continuous time stochastic process evolves on the branches $(0, 3)$, $(3, 4)$, $(0, 1)$ and $(1, 2)$ which are coloured blue. \label{fig:tree_dag_withkernels}}
\end{figure}
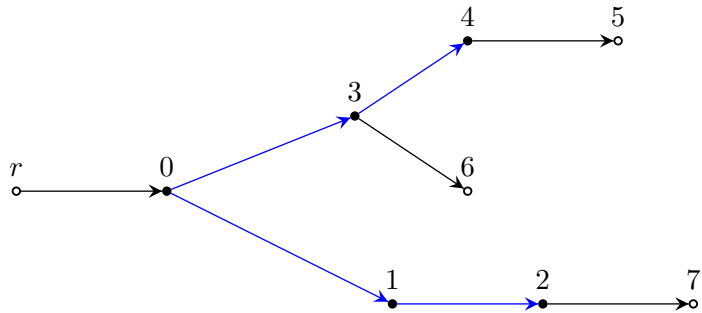

\subsection{Related work}\label{sec:related work}
In case of a line graph with edges to observation leaves at each vertex (a hidden Markov model) there are two well known cases for computing the smoothing distribution: {\it (i)} if $X$ is discrete the forward-backward algorithm (\cite{Murphy2012}, section 17.4.3), {\it (ii)} for linear Gaussian systems the Kalman Smoother, also known as Rauch-Tung-Striebel smoother (\cite{Murphy2012}, section 18.3.2) for the marginal distributions or its sampling version, where samples from the smoothing distribution are obtained by the forward filtering, backward sampling algorithm, \cite{CarterKohn(1994)}.  \cite{Pearl1988} gave an extension of the forward-backward algorithm from chains to trees by an algorithm known as  ``belief propagation'' or ``sum-product message passing", either on trees or poly-trees. This algorithm  consists of two message passing phases. In the ``collect evidence'' phase, messages are sent from leaves to the root; the ``distribute evidence'' phase ensures updating of marginal probabilities or sampling joint probabilities from the root towards the leaves.
The algorithm can be applied to junction trees as well, and furthermore, the approximative loopy belief propagation applies belief propagation to sub-trees of the graph. A review is given in \cite{Jordan2004}. 

\cite{Chou1994} extended the classical Kalman smoothing algorithm for linear Gaussian systems to dyadic trees by using a fine-to-coarse filtering sweep followed by a course-to fine smoothing sweep. This setting arises as a special case of our framework. Extensions of filtering on  Triplet Markov Trees and pairwise Markov trees are dealt with in \cite{BardelDesbouvries2012}  and \cite{DesbouvriesLecomtePieczynski} respectively. 

Particle filters have been employed in related settings, see \cite{doucet2018sequential} for an overview, but the resampling operations of particle filters 
result in the simulated likelihood function being non-differentiable in the parameters, which unlike our approach is an obstacle to gradient based inference. 
\cite{briers2010smoothing} consider particle methods for  state-space models using (an approximation to) the backward-information filter. Twisted particle samplers (\cite{Guarniero2017}) and controlled Sequential Monte Carlo (SMC) (\cite{Heng2020}) essentially use particle methods to find an optimal control policy to approximate the backward-information filter. The approximate backward filtering step in this paper builds on the same idea but extends it to a broader context beyond particle filtering and discrete-time models. A recent application of learning twist functions to large language models is \cite{zhao2404probabilistic}.

For variational inference, \cite{ambrogioni2021automatic} propose an approach which, similar to ours, preserves the Markovian structure of the target  by learning local approximations to the conditional dynamics.
\cite{lindsten2018graphical} consider a  sequential Monte Carlo (SMC) algorithm for general probabilistic graphical models  which can leverage the output from deterministic inference methods.  It shares the idea of using a substitute for optimal twisting functions but does not cover continuoos-time transitions over edges. 
In earlier work (\cite{schauer2017guided}, \cite{mider2021continuous}, \cite{corstanje2023guided}), we introduced guided processes on state-space models for partially observed diffusion processes and chemical reaction networks. These are specific instances of our approach.

\subsection{Outline} 
We start with a short recap of Markov kernels in Section \ref{sec:Markovkernels}. The backward information filter and forward guiding are discussed in sections \ref{sec:BIF} and \ref{sec:guiding} respectively. We then explain the key algorithms of BFFG in Section \ref{sec:implementation}.
Examples that illustrate BFFG are given in sections \ref{sec:examples_discrete} and \ref{sec:examples_continuous} for  discrete and continuous-time procesesses respectively. 
 The extension to a DAG is given in Section \ref{sec:extension_to_DAG}. We illustrate our results in  numerical examples in Section 
\ref{sec:sde_tree}. The appendix contains some technical results and proofs.

Sections \ref{sec:BIF}, \ref{sec:guiding},  \ref{sec:implementation} and  \ref{sec:extension_to_DAG} contain the core ideas of our approach.



\section{Preliminaries: Markov kernels}\label{sec:Markovkernels}
We first recap some elementary definitions on Markov kernels, as these are of key importance in all that follows.

Let $S=(E,\mathfrak{B})$ and $S'=(E',\mathfrak{B}')$ be 
Borel measurable spaces. A Markov kernel between $S$ and $S'$ is denoted by $\kappa\colon S \rightarrowtriangle S'$  (note the special type of arrow), where $S$ is the ``source''  and $S'$ the ``target''. That is, $\kappa\colon E \times \mathfrak{B}' \to [0,1]$, where {\it (i)} for fixed $B\in \mathfrak{B}'$ the map $x\mapsto \kappa(x, B)$ is $\mathfrak{B}$-measurable and {\it (ii)} for fixed $x\in E$, the map $B\mapsto \kappa(x, B)$ is a probability measure on $S'$. 
On a measurable space $S$, denote the sets of bounded measures and bounded measurable functions (equipped with the supremum norm)  by $\bM(S)$ and $\mathbf{B}(S)$ respectively. 
The kernel $\kappa$ induces a {\it pushforward} of the measure $\mu$ on $S$ to $S'$   via 
\begin{equation}\label{eq: pushforward}
\mu \kappa(\cdot)  = \int_{E} \kappa(x, \cdot) \mu(\!\dd x), \qquad \mu \in \bM(S).
\end{equation}
The linear continuous operator $\kappa\colon \mathbf B(S') \rightarrow \mathbf B(S)$ is defined by 
\begin{equation}\label{eq: pullback}
\kappa h(\cdot) = \int_{E} h(y) \kappa(\cdot, \dd y),\qquad  h \in \mathbf{B}(S').
\end{equation} 
We will refer to this operation as the {\it pullback} of $h$ under the kernel $\kappa$. 
Markov kernels $\kappa_1\colon S_0 \rightarrowtriangle S_1 $ and $\kappa_2\colon S_1 \rightarrowtriangle S_2$  
can be composed  by the Chapman-Kolmogorov equations, here written as juxtaposition
\begin{equation}\label{eq:chapman}
(\kappa_1 \, \kappa_2)(x_0, \cdot) = \int_{E_1}  \kappa_2(x_1, \cdot) \kappa_1(x_0, \dd x_1),\qquad x_0 \in E_0.
\end{equation}
The unit $\id$ for this composition is the  identity function considered as a Markov kernel, $\id(x, \dd y) = \delta_{x}(\!\dd y)$.
The product kernel $\kappa \otimes \kappa'$ on $S \otimes S'$
 is defined on the cylinder sets by $
(\kappa  \otimes \kappa') ((x, x'), B \times B') = \kappa (x, B) \kappa' ( x', B'), $  	(where $x\in E$, $x' \in E'$, $B \in \mathfrak{B}$, $B' \in \mathfrak{B}'$) 
and then extended to a kernel on the product measure space.

\section{Backward Information Filter}\label{sec:BIF}
We consider a stochastic process on a tree with vertex set  $\scr{T}$, where the root vertex $r$ is excluded. At each vertex where the process splits, it evolves conditionally independent towards its children vertices. Let $\scr{V}$ denote the set of leaf vertices and  define  $\scr{S} = \scr{T}\setminus \scr{V}$ (the set of non-leaf vertices) Set $\scr{S}_r = \scr{S} \cup \{r\}$.
Thus, in Figure \ref{fig:tree_dag_withkernels}, $\cV=\{5,6,7\}$, $\cS=\{0,1,2,3,4\}$, $\cS_r=\{r, 0,1,2,3,4\}$.  Let $\scr{E}$ denote the set of all edges in the tree. 

 For $t\in \scr{T}$ the state of the process is given by $X_{t}$. For $T \subset \scr{T}$ let $X_T=\{X_t,\, t\in T\}$.  Denote the (unique) parent vertex of vertex $t$ by $\pa(t)$. We assume that the transition to $X_t$, conditional on $X_{\pa(t)}=x$ is captured by a Markov kernel $\kappa_{\pa(t), t}(x, \cdot)$ with source $(E_{\pa(t)}, \scr{B}_{\pa(t)})$ and target $(E_t, \scr{B}_t)$. 

In this section we are interested in the distribution of $X_{\scr{S}}$ conditional on the event  $\{X_{\scr{V}}=x_{\scr{V}}\}$. That is, we are only interested in the values at vertices and not along branches where the continuous time stochastic process evolves. Once we understand this case, we will explain how to adjust for continuous time transitions over an edge.

We assume that there exists a dominating measure $\lambda$ such that for $t\in \scr{V}$
\begin{equation}\label{eq:density_from_leaves} \frac{\dd \kappa_{\pa(t), t}(x,\dd y)}{\lambda_t(\dd y)}(x) =  k_{\pa(t), t}(x,y) .\end{equation}

\subsection{Discrete edges}
We first assume that the probabilistic evolution of the process $X$ is over discrete edges.
\begin{definition}
We call an edge  $e=(s,t)$ {\it discrete}  if it is assumed that on $e$  the probabilistic evolution is governed by the Markov kernel $\kappa_{s,t}$. 
\end{definition}
The process $X$ conditioned on the event $\{X_{\scr{V}}=x_{\scr{V}}\}$ follows the same dependency structure as the unconditional process with transition kernels given by 
\[  \kappa^\star_{\pa(t), t}(x, A) :=\frac{\int_A  \kappa_{\pa(t), t}(x, \dd y) h_t(y)}{\int \kappa_{\pa(t), t}(x, \dd y) h_t(y)} = \frac{(\kappa_{\pa(t), t} h_t \ind_A)(x)}{(\kappa_{\pa(t), t} h_t)(x)}, \qquad t\in \scr{S}. \] Here, if $\scr{V}_t$ denotes the set of leaves descending from vertex $t$, $h_t(x)$ is the density of $X_{\scr{V}_t}$, conditional on $X_t=x$.  In Figure \ref{fig:tree_dag_withkernels} for example, $h_{3}$ is the density of $(X_{5}, X_{6})$ conditional on $X_{3}=x$. The transform of the kernel $\kappa$ to $\kappa^\star$ is known as {\it Doob's $h$-transform}. 

It is well known how the functions $\{h_t,\, t \in \scr{S}\}$   can be computed recursively  starting from the leaves back to the root. These recursive relations have  reappeared in many papers, see for  instance \cite{Felsenstein1981},  \cite{briers2010smoothing}, \cite{Guarniero2017} and \cite{Heng2020}. This recursive computation is known as the {\it Backward Information Filter (BIF)}. 
Firstly, for any leaf vertex  we define
\begin{equation}\label{eq:hleaf} h_{\pa(v) , v} (x)  := 
k_{\pa(v) , v}(x,x_v)   \qquad v \in \scr{V} .\end{equation}
For other vertices $t$ for which $h_t$ has already been computed, set
\begin{equation}\label{eq:hparent} 
h_{\pa(t) , t} :=\kappa_{\pa(t) , t} h_t , \qquad t \in \scr{S}. 
\end{equation}
For a given vertex $t\in \scr{S}_r$, once $h_{t, u}$ has been computed for all children vertices of $t$, which we denote by $\ch(t)$, we get by the Markov property (i.e.\ the process evolves conditionally independent when it branches)
\begin{equation}\label{eq:split_tochilds} h_t(x) = \prod_{u\in \ch(t)} h_{t, u}(x), \qquad t\in \scr{S}_r.
\end{equation}
This can be interpreted as \emph{fusion}, collecting all \emph{messages} $h_{t, u}$  from children at vertex $t$. Summarising, the BIF provides a recursive algorithm that we use to compute $\{h_t,\, t\in \scr{S}\}$. The conditioned process evolves according to the kernels $\kappa^\star$, which can be viewed as applying a change of measure to the kernels $\kappa$.

\subsection{Continuous edges}\label{sec:continuoustime}

Up to this point, we have assumed edges to represent ``discrete time'' Markov-transitions. The probabilistic evolution over a single edge is then captured by a Markov kernel $\kappa$.  Now suppose the process transitions over a continuous edge.
\begin{definition}
We call an edge  $e=(s,t)$ {\it continuous}  if it is assumed that on $e$  a continuous-time process evolves over the time interval $[0, \tau_e]$. The process is assumed to be defined on the filtered probability space $(\Omega, \scr{F}, \mathbb{F}, \PP)$ where  $\mathbb{F}=\{\scr{F}_u,\, u \in [0,\tau_e]\}$. 	The process $X^e:=(X_u,\, u\in [0,\tau_e])$ is assumed to be a  right-continuous, $\mathbb{F}$-adapted Markov process taking values in a metric space $E$.   We identify time $0$ with node $s$ and similarly time $\tau_e$ with node $t$. That is,  $X^e_0=X_s$ and $X^e_{\tau_e}=X_t$. We denote the  infinitesimal generator of the space-time process $((u,X_u),\, u \in [0,\tau_e])$ by  $\cA$ and its domain by $\cD_\cA$.
	\end{definition}
Fix a continuous edge $e$.  
 Assume $X_0=x_s$ and existence of transition kernels  $\kappa_{s,t}$ such that  
 \[
 \PP(X_u\in \dd y \mid X_{u'} = x) = \kappa_{u',u}(x, \dd y), \qquad 0<u'<u<\tau_e.
 \] 
Suppose $h_t$ is given and set $h_{\tau_e}:=h_t$. 
For $u\in [0,\tau_e)$ define the function $(u,x) \mapsto h_u(x)$ as the pullback of $h_{\tau_e}$ under $\kappa_{u, \tau_e}$: 
\begin{equation}\label{eq:pull_u} h_u(x) = (\kappa_{u,\tau_e}h_{\tau_e})(x)= \int h_{\tau_e}(y) \kappa_{u,\tau_e}(x, \dd y). \end{equation}
For convenience, we interchangeably write $h_u(x)$ and $h(u,x)$. For fixed $u$, the map $h_u$ is defined by $x \mapsto h_u(x)$. 
It is well known that  $(u,x) \mapsto h(u,x)$ satisfies the Kolmogorov backward equation  (\cite{bass2011stochastic}, Chapter 37, in particular (37.5))
\begin{equation}\label{eq:kolmbackward}
\cA h_u=	0,\qquad \text{for } u\in [0,\tau_e],\quad  \text{subject to } h_{\tau_e}.
\end{equation} 
Put differently, $h$ is space-time harmonic. 
 This provides the BIF when an edge is governed by a continuous-time process evolving. Finally, to blend this into the description given in the previous section, we have the following definition:
 \begin{definition}
 	For a continuous edge $e=(s,t)$, we define $h_{s,t} = h_0$, where $h_0$ is the solution to \eqref{eq:kolmbackward} at time $0$. 
 \end{definition}
Next, we identify the dynamics of the conditioned process.  Let $\PP_u$ denote the restriction of $\PP$ to $\cF_u$. Define
\[ Z_u = \frac{h_u(X_u)}{h_0(x_s)}. 
\]
By \eqref{eq:pull_u},  $h_u(X_u) = \EE [h_{\tau_e}(X_{\tau_e}) \mid \cF_u]$ is a mean-one martingale.  
For $u\in [0,\tau_e)$ define the  measures $\PP^\star_u$ by 
$\dd \PP^\star_u = Z_{u} \dd \PP_u$ 
and note that $\PP_u^\star$ is the law of the process $(X_u,\, u\in [0,\tau_e])$, restricted to $\scr{F}_u$,  that is conditioned on all observations at leaves descending from vertex $s$. It follows from formula (3.2) in \cite{PalmowskiRolski2002} that the infinitesimal generator $\cA^\star$ of the process under $\PP^\star_t$ can be expressed in terms of $\cL$ and $h$ in the following way
\begin{equation}\label{eq:infgenerator_star} {\cA}^\star f = h^{-1} {\cA}(f h) - h^{-1} f {\cA}  h\end{equation}
(see also \cite{chetrite2015nonequilibrium}, Section 4.1). 
Here, the second term on the right-hand-side must be understood as the composition of three operators:  multiplication by $h^{-1}$, multiplication by $f$ and multiplication by the operator $\cA$  applied to $h$ (the first term on the right-hand-side can be similarly understood). 
Equation \eqref{eq:infgenerator_star} is useful for deriving  the dynamics of the conditioned process.

\begin{example}\label{ex:fig1}
Consider  Figure \ref{fig:tree_dag_withkernels} and observations $\{x_5, x_6, x_7\}$ at leaf vertices. The BIF consists of the following steps:
\begin{itemize}
	\item Initialise from the leaves: set $h_{4,5}(x) = k_{4,5}(x, x_{5})$, $h_{3,6}(x) = k_{3,6}(x, x_{6})$ and $h_{2,7}(x) = k_{2,7}(x, x_{7})$.
	\item Collect incoming messages at vertices $2$ and $4$:  $h_2=h_{2,7}$ and $h_4 = h_{4,5}$.
	\item Solve \eqref{eq:kolmbackward} on edge $(3, 4)$ (with end value $h_4$) to get $h_{3,4}$.\\ Solve \eqref{eq:kolmbackward} on edge $(1, 2)$ (with end value $h_2$) to get $h_{1,2}$.
	\item Collect incoming messages at vertices $1$ and $3$: $h_1=h_{1,2}$ and $h_3=h_{3,4} h_{3,6}$.
	\item Solve \eqref{eq:kolmbackward} on edge $(0, 3)$ (with end value $h_3$) to get $h_{0,3}$.\\ Solve \eqref{eq:kolmbackward} on edge $(0, 1)$ (with end value $h_1$) to get $h_{0,1}$.
	\item Collect incoming messages at vertex $0$: $h_0=h_{0,3}h_{0,1}$. 
\end{itemize}
If a prior measure $\Pi=\kappa_{r,0}$ is specified on the value at vertex $0$. Then $h_r(x_r)= \int h_0(x) \Pi(\dd x)$. We can see that \emph{each edge passes a message to the parent vertex of that edge}. 
	\end{example}

\begin{remark}
 If transitions over edges depend on an unknown parameter $\theta$ then $h_t$ will depend on $\theta$ in general. For notational convenience, we have suppressed this dependency from the notation. 
The likelihood for $\theta$ is, by definition, given by $L(\theta, x_{\cV}):= h_r(x_r)$. 
	\end{remark}

\section{Guiding} \label{sec:guiding}

Example \ref{ex:fig1} shows that the BIF consists of 3 types of operation: {\it (i)} solving Equation \eqref{eq:kolmbackward}, {\it (ii)} pullback under a Markov kernel as defined in \eqref{eq: pullback}, {\it (iii)} pointwise multiplication of functions.  Unfortunately, few cases exist where this can be done in closed form. We propose to resolve this by working with an  approximation $g$ of $h$, for example by choosing a Gaussian approximation to the dynamics of $X$ and computing the $h$-transform for that approximation. Using $g$ rather than $h$ to change the dynamics of the forward process leads to the notion of a guided process. We  show that, contrary to the true conditioned process, the guided  process is tractable. In particular, we can simulate the process. This in turn can be used with stochastic simulation to correct for the  discrepancy induced by using $g$ instead of  $h$.

\subsection{Discrete case}
We start with the definition of the guided process in the discrete case, i.e.\ the case where the process transitions over the vertices $t\in \cT$ and continuous-time transitions over an edge are absent. 

\begin{definition}\label{defn: guided process} 
Assume the maps $x\mapsto g_s(x)$ are specified for each $s\in \cS$.
 We define the \emph{guided process} $X^\circ$   as the process starting in $X^\circ_r = x_r$
 and from the root onwards evolving  \emph{on}  $\scr{S}_r$ according to  transition kernel
\[
\kappa^\circ_{\pa(s), s}(x,\dd y)  =\frac{g_s(y)\kappa_{\pa(s), s}(x,\dd y)}{\int g_s(y)\kappa_{\pa(s), s}(x,\dd y)}
, \qquad s\in \cS.  \] 
\end{definition}
Here, implicitly, we assume that the dominator is strictly positive and finite. 
If $\P^\circ$ denotes the law of $X^\circ_\cS$, then it follows from the definition of $\kappa^\circ$ 
\[  \frac{\dd \P^\circ}{\dd \P}(X_\cS) = \prod_{s\in \cS} \left( \frac{g_s(X_s)}{\int g_s(y)\kappa_{\pa(s), s}(X_{\pa(s)},\dd y)}\right).  \]
\begin{theorem}\label{thm:lr_xstar_xcirc_tree}
\begin{enumerate}
	\item If $\P^\star$ denotes the law of $X_\cS$ conditional on $X_\cV=x_\cV$, then 
\begin{equation}\label{eq:evidence_}
 \frac{\dd \P^\star}{\dd \P}(X_\cS) = \frac{\prod_{v\in \cV} k_{\pa(v), v}(X_{\pa(v)}, x_v)}{h_r(x_r)}. \end{equation}
\item  Assume maps $x \mapsto g_{s, t}(x)$ are specified for each edge $e=(s,t) \in \scr{E}$ and  define $g_s$ by fusion, i.e.\
\begin{equation}\label{eq:split_tochilds_g} g_s(x)= \prod_{t\in \ch(s)} g_{s, t}(x), \qquad s\in \scr{S}_r. \end{equation}
If $\P^\star \ll \P^\circ$, then for $f \in \bB(S_\cS)$
\[\E^\star  f(X_{\cS}) 
 =\frac{g_{r}(x_{r})}{h_r(x_r)} \E^\circ  \left[ f(X_{\cS})  \prod_{s\in \cS \cup \cV} w_{\pa(s), s}(X_{\pa(s)})  \right],
\] with weights defined by  
\begin{equation}\label{eq:treeweights}
w_{\pa(s), s}(x) =
\begin{cases}
\int g_s(y)\kappa_{\pa(s), s}(x,\dd y)   \Big/ g_{\pa(s), s}(x)    & \text{if} \quad s\in \cS\\
h_{\pa(v) , v}(x) \Big/ g_{\pa(v), v}(x) &\text{if} \quad s\in \cV. 
\end{cases}
\end{equation} 
\end{enumerate}
 \end{theorem}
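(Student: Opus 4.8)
The plan is to prove the two parts in sequence, using part (1) as the engine for part (2) and reserving the genuine combinatorial work for the fusion step.

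For part (1), I would argue by Bayes' rule on the tree. Under $\P$ the full configuration $X_{\cT}=X_{\cS\cup\cV}$ (started from $X_r=x_r$) has law factorising as $\prod_{t\in\cT}\kappa_{\pa(t),t}(X_{\pa(t)},\dd X_t)$; integrating out the leaves, each of whose terminal kernels integrates to one, leaves the marginal law of $X_{\cS}$ as $\prod_{s\in\cS}\kappa_{\pa(s),s}(X_{\pa(s)},\dd X_s)$. Conditioning on $\{X_{\cV}=x_{\cV}\}$ then inserts the leaf densities \eqref{eq:density_from_leaves}, so that $\dd\P^\star/\dd\P(X_\cS)$ is proportional to $\prod_{v\in\cV}k_{\pa(v),v}(X_{\pa(v)},x_v)$. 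It remains to identify the normalising constant---the marginal density of $X_{\cV}$ at $x_{\cV}$ given $X_r=x_r$---with $h_r(x_r)$. This I would establish by induction over the tree from leaves to root, checking that integrating $\prod_{v\in\cV}k_{\pa(v),v}$ against the forward kernels reproduces exactly the BIF recursion \eqref{eq:hleaf}--\eqref{eq:split_tochilds}; since $h_t(x)$ is by construction the conditional density of $X_{\cV_t}$ given $X_t=x$, the root value $h_r(x_r)$ is the sought normaliser.

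For part (2) the idea is to compose Radon--Nikodym derivatives. Writing $\E^\star f(X_\cS)=\E[f(X_\cS)\,\dd\P^\star/\dd\P]$ via part (1), and then using the explicit derivative $\dd\P^\circ/\dd\P$ displayed just before the theorem, the hypothesis $\P^\star\ll\P^\circ$ lets me re-express the expectation under $\P^\circ$ through the chain rule $\dd\P^\star/\dd\P^\circ=(\dd\P^\star/\dd\P)(\dd\P/\dd\P^\circ)$, legitimate on $\{g_s>0\}$. This yields
\[
\E^\star f(X_\cS)=\frac{1}{h_r(x_r)}\,\E^\circ\!\left[f(X_\cS)\prod_{v\in\cV}k_{\pa(v),v}(X_{\pa(v)},x_v)\prod_{s\in\cS}\frac{\int g_s(y)\kappa_{\pa(s),s}(X_{\pa(s)},\dd y)}{g_s(X_s)}\right],
\]
so the task reduces to the purely algebraic identity that the bracketed integrand equals $g_r(x_r)\prod_{s\in\cS\cup\cV}w_{\pa(s),s}(X_{\pa(s)})$ with the weights from \eqref{eq:treeweights}.

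The main obstacle, and the only genuinely nontrivial step, is this last identity, where the vertex-indexed denominators $g_s(X_s)$ must be converted into the edge-indexed denominators $g_{\pa(s),s}(X_{\pa(s)})$ appearing in the weights. The key is the fusion relation \eqref{eq:split_tochilds_g}: grouping the product $\prod_{s\in\cS\cup\cV}g_{\pa(s),s}(X_{\pa(s)})$ over edges by their common parent $u$ gives $\prod_{u}\prod_{t\in\ch(u)}g_{u,t}(X_u)=\prod_{u\in\cS_r}g_u(X_u)=g_r(x_r)\prod_{s\in\cS}g_s(X_s)$, since the vertices possessing children are precisely $\cS_r$ and $X_r=x_r$. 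Substituting this telescoped product, together with $h_{\pa(v),v}(x)=k_{\pa(v),v}(x,x_v)$ from \eqref{eq:hleaf} for the leaf weights, into $g_r(x_r)\prod_{s\in\cS\cup\cV}w_{\pa(s),s}(X_{\pa(s)})$ reproduces exactly the bracketed integrand above, completing the proof. I would take care to flag the positivity and finiteness of the denominators $g_s$ (assumed just after Definition~\ref{defn: guided process}) and the dominance $\P^\star\ll\P^\circ$ as the standing hypotheses that render each change of measure valid.
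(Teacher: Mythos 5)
Your proposal is correct, and for part (2) it is essentially the paper's own proof: the chain rule $\dd\P^\star/\dd\P^\circ=(\dd\P^\star/\dd\P)\,(\dd\P^\circ/\dd\P)^{-1}$ combined with the fusion-driven cancellation over edges. Your telescoping identity $\prod_{s\in\cS\cup\cV}g_{\pa(s),s}(X_{\pa(s)})=\prod_{u\in\cS_r}g_u(X_u)=g_r(x_r)\prod_{s\in\cS}g_s(X_s)$ is exactly the paper's Equation~\eqref{eq:cancellation} read in reverse, and your substitution of $h_{\pa(v),v}(x)=k_{\pa(v),v}(x,x_v)$ into the leaf weights matches the paper verbatim. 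Where you genuinely diverge is part (1): the paper obtains \eqref{eq:evidence_} as a by-product of the very same cancellation computation, observing that when the $g$-family is replaced by the exact $h$-family, the recursion \eqref{eq:hparent} forces every interior weight $\int h_s(y)\kappa_{\pa(s),s}(x,\dd y)\big/h_{\pa(s),s}(x)$ to equal one, so that \eqref{eq:Pcirc-P} with $h$ in place of $g$ is precisely \eqref{eq:evidence_}; this is slick but implicitly leans on the earlier assertion that the conditioned process is the $h$-transformed process with kernels $\kappa^\star$. Your route---Bayes' rule on the factorised tree law followed by a leaves-to-root induction verifying that the normaliser reproduces the BIF recursion \eqref{eq:hleaf}--\eqref{eq:split_tochilds}---is more self-contained: it does not presuppose the $h$-transform identity and makes explicit exactly where the backward recursion enters, at the cost of a few extra lines of bookkeeping. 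Both arguments are valid; the paper's buys brevity given its Section~3 setup, yours buys independence from it. Your flagged hypotheses (positivity of the denominators from the remark after Definition~\ref{defn: guided process}, and $\P^\star\ll\P^\circ$) are the right ones.
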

\begin{proof}
By definition of the weights \begin{equation}\label{eq;Pcirc_P}  \frac{\dd \P^\circ}{\dd \P}(X_\cS) = \prod_{s\in \cS} \left( \weight{\pa(s)}{s}{X_{\pa(s)}}^{-1} \times  \frac{g_s(X_s)}{ g_{\pa(s), s}(X_{\pa(s)})}\right).  \end{equation}
By Equation \eqref{eq:split_tochilds_g}, the product of the second term in brackets can be written as 
\begin{equation}\label{eq:cancellation}
	 \frac{ \prod_{s\in\cS}\prod_{t\in \ch(s)} g_{s, t}(X_s) }{ \prod_{s\in\cS} g_{\pa(s), s}(X_{\pa(s)})} = \frac{\prod_{v\in \cV} g_{\pa(v), v}(X_{\pa(v)}) }{g_r(x_r)}
\end{equation} which follows by cancellation of terms (note that the numerator is a product over all edges except those originating from the root node, whereas the denominator is a product over all edges except those ending in a leaf node). Hence 
\begin{equation}\label{eq:Pcirc-P}  \frac{\dd \P^\circ}{\dd \P}(X_\cS) = \frac{\prod_{v\in \cV} g_{\pa(v), v}(X_{\pa(v)}) }{g_r(x_r)} \cdot \left(\prod_{s\in \cS}  w_{\pa(s),s}(X_{\pa(s)})\right)^{-1} . \end{equation} 

The same argument reveals \eqref{eq:evidence_} since the weights  that
 corresponding to $s\in \cS$ equal $1$ when \eqref{eq:dagweights} is evaluated with $h$ rather than $g$. 

The result now follows from 
\[ \frac{\dd \P^\star}{\dd \P^\circ}(X_\cS) =  \frac{\dd \P^\star}{\dd \P}(X_\cS) \cdot \left(  \frac{\dd \P^\circ}{\dd \P}(X_\cS)\right)^{-1} \]
and substituting \eqref{eq:evidence_} and \eqref{eq:Pcirc-P}. 
\end{proof}
Recall that  at vertex $s$, $h_s(x)$ is the likelihood in the subtree with root note $s$ with known value $x$. 
It stands to reason that  good choices for $g_{s}$ approximate $h_s$. 
The definition leaves open the choice of $g_{s, t}$. One way of defining it, is to attach to each kernel $\kappa_{\pa(s),s}$ a kernel $\tilde\kappa_{\pa(s),s}$ (with both kernels having the same source and target) and setting $g_{\pa(s),s}(x)=\int g_s(y)\tilde\kappa_{\pa(s),s}(x,\dd y)$. Then the choice of $\tilde\kappa$ should be such that  calculations in the BIF get simpler. This idea is explored in more detail in concrete examples.
\begin{remark}
	Within the language of Feynman-Kac models (Chapter 5, \cite{chopin2020introduction}), the kernels $\kappa^\circ_{pa(s), s}$ are mutation kernels that can be used in a guided particle filter. Often, these are denoted by $M$ in existing literature. 
\end{remark}

\subsection{Continuous case: exponential change of measure}\label{subsec:continuous}
Suppose $e$ is a continuous edge. The maps $h_u$ for $u\in [0,\tau_e]$ can only be computed in closed form in highly specific settings. For this reason, we propose to replace those maps by substitutes $g_u$. Assume  $(u,x)\mapsto g_u(x)$ is in the domain of $\cA$.
Under weak conditions on $\{g_u,\, u \in [0,\tau_e]\}$ we get that
\[ Z_u^g:=\frac{g_u(X_u)}{g_0(x_s)} \exp\left( -\int_0^u \frac{(\cA g_\tau)(X_\tau)}{g_\tau(X_\tau)} \dd \tau\right) \]
is a mean-one $\cF_u$-local martingale. Corollary \ref{cor:sufficient_localmartingale} in the appendix provides a sufficient conditions for this.

A function $g \colon [0,\tau_e] \times E \to \RR$ for which $Z_u$ is actually a  martingale is called  a \emph{good} function, the terminology being borrowed from \cite{PalmowskiRolski2002}. Sufficient conditions for $g$ to be good  are given in Proposition \ref{prop_goodfunction}. 
Assume we are given a  good function $g$, postpone  the question on how to find such functions in specific settings.
For $u\in [0,\tau_e)$ we can define the  probability measure $\PP_u^\circ$ by 
$\dd \PP_u^\circ = Z_u^g \dd \PP_u$. 
Similar to \eqref{eq:infgenerator_star}, under $\PP^\circ_t$ the process $X$ has infinitesimal generator $\cL^\circ$ satisfying \begin{equation}\label{eq:infgen-guided}
	{\cL^\circ} f = g^{-1} {\cL}(f g) - g^{-1} f {\cL}  g
\end{equation} 
from which the dynamics of the process under $\PP^\circ_u$  can be derived. 
\begin{definition}\label{defn:guidedcontinuous}
The process $X=(X_u,\, u\in [0,\tau_e])$, with $X_0=x_s$, under the law $\PP_{\tau_e}^\circ$ is denoted by $X^\circ$ and referred to as the \emph{guided process induced by $g$ on the edge $e$}.
 \end{definition}

\begin{proposition}\label{prop:continuousguided}
Suppose $X$ evolves on the continuous edge $e=(s,t)$.  If  $X^\circ$ is the guided process induced by $g$ on $e$, then Theorem \ref{thm:lr_xstar_xcirc_tree} remains valid upon defining 
	 \[ g_{s,t}(x) := g_0(x) \qquad \text{and} \qquad  w_{s,t}\left(X^\circ\right)= \exp\left(\int_0^{\tau_e} \frac{{\cA} g}{g} (u, X^\circ_u) \dd u \right).\]
\end{proposition}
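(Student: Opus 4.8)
The plan is to reduce the continuous edge to the discrete case of Theorem~\ref{thm:lr_xstar_xcirc_tree} by computing the single-edge Radon--Nikodym derivative between the conditioned law and the guided law, and checking that it has exactly the same functional form as in the discrete setting. First I would record the two relevant change-of-measure martingales on the edge $e=(s,t)$. The exact conditioning martingale is $Z_u = h_u(X_u)/h_0(x_s)$; it is a genuine mean-one martingale because $h$ is space-time harmonic, i.e.\ $\cA h = 0$ by \eqref{eq:kolmbackward}, so that if $h$ were plugged into the good-function formula for $Z^g$ the exponential correction would equal~$1$ and reduce to $Z_u$. The guiding martingale is $Z^g_u$, which is mean-one precisely by the assumption that $g$ is good. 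Restricting both to $\cF_{\tau_e}$ defines $\PP^\star_{\tau_e}$ and $\PP^\circ_{\tau_e}$, and (assuming $g>0$ and $\PP^\star\ll\PP^\circ$) one has $\dd\PP^\star_{\tau_e}/\dd\PP^\circ_{\tau_e} = Z_{\tau_e}/Z^g_{\tau_e}$.

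Next I would carry out the elementary computation of this ratio. Using $h_{\tau_e}=h_t$ and $g_{\tau_e}=g_t$ together with the definitions $h_{s,t}=h_0$ and $g_{s,t}:=g_0$, the exponential term of $Z^g_{\tau_e}$ is inverted, and one obtains
\[ \frac{\dd\PP^\star_{\tau_e}}{\dd\PP^\circ_{\tau_e}} = \frac{Z_{\tau_e}}{Z^g_{\tau_e}} = \frac{h_t(X_t)}{g_t(X_t)}\cdot\frac{g_{s,t}(x_s)}{h_{s,t}(x_s)}\cdot\exp\left(\int_0^{\tau_e}\frac{\cA g}{g}(u,X^\circ_u)\dd u\right), \]
where the exponential factor is exactly $w_{s,t}(X^\circ)$ as defined in the statement.

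The key observation is then that this is identical in form to the discrete single-edge ratio. Indeed, writing $\kappa^\circ_{s,t}(x,\dd y)=g_t(y)\kappa_{s,t}(x,\dd y)/(\kappa_{s,t}g_t)(x)$ and the Doob transform $\kappa^\star_{s,t}(x,\dd y)=h_t(y)\kappa_{s,t}(x,\dd y)/(\kappa_{s,t}h_t)(x)$, one finds $\dd\kappa^\star_{s,t}/\dd\kappa^\circ_{s,t}(x,y) = \frac{h_t(y)}{g_t(y)}\cdot\frac{g_{s,t}(x)}{h_{s,t}(x)}\cdot w_{s,t}(x)$, with the discrete weight $w_{s,t}(x)=(\kappa_{s,t}g_t)(x)/g_{s,t}(x)$ of \eqref{eq:treeweights}. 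Thus the continuous edge enters the global product-over-edges computation in exactly the same way as a discrete edge: a factor $h_t/g_t$ at its target, a factor $g_{s,t}/h_{s,t}$ at its source, and a weight $w_{s,t}$ --- the only change being that the weight is now a path functional of $X^\circ$ rather than a function of $X_s$. Since the proof of Theorem~\ref{thm:lr_xstar_xcirc_tree} used each edge only through this single-edge ratio and the fusion identity \eqref{eq:split_tochilds_g}, which is unchanged, and since the telescoping cancellation \eqref{eq:cancellation} depends only on the tree's edge structure, the conclusion of Theorem~\ref{thm:lr_xstar_xcirc_tree} extends verbatim under the stated definitions.

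I expect the main obstacle to be rigour rather than algebra. One must justify that $Z^g_{\tau_e}$ is a genuine (not merely local) martingale, so that $\PP^\circ_{\tau_e}$ is a probability measure and the change of measure is licit; this is exactly the \emph{good function} hypothesis, deferred to Proposition~\ref{prop_goodfunction}. One must also invoke the Markov and conditional-independence structure to glue the single-edge identity into the tree, so that conditioning on the endpoint value $X_t$ correctly encapsulates all downstream leaf observations through $h_t$. Finally, the pathwise nature of $w_{s,t}$ requires checking that the expectation $\E^\circ$ over full guided trajectories still yields the stated identity, but this is automatic once the single-edge Radon--Nikodym derivative above is in hand.
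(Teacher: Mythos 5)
Your proposal is correct and takes essentially the same route as the paper: the paper's argument (developed in Section~\ref{subsec:continuous} and left implicit for the proposition itself) is precisely to compare the two exponential change-of-measure martingales, obtaining $\dd\PP^\star_{\tau_e}/\dd\PP^\circ_{\tau_e}=Z_{\tau_e}/Z^g_{\tau_e}$, whose factors $h_t/g_t$, $g_{s,t}/h_{s,t}$ and $\exp\bigl(\int_0^{\tau_e}(\cA g/g)(u,X^\circ_u)\,\dd u\bigr)$ slot the continuous edge into the telescoping cancellation of Theorem~\ref{thm:lr_xstar_xcirc_tree} exactly as a discrete edge does. Your identification that $\cA h=0$ makes $Z^h=Z$ and that the good-function hypothesis (Proposition~\ref{prop_goodfunction}) is what upgrades $Z^g$ from local martingale to martingale correctly pinpoints the only genuinely analytic ingredients.
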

Using a tractable approximation $g$ to $h$ we aim to approximate the information brought by future observations, and let this approximation guide the process in a natural way. 
In absence of information from observations, the process evolves just as the unconditional one.

\subsubsection{Choice of $g$}\label{subsec:choice_g}

To forward simulate the guided process, the maps $g_u$ need to be specified along the graph. Ideally, $g_u$ should be like $h_u$ solving $\cA h_u=0$. Typically, the operator $\cA$ can be written as $\partial_u + \cL$, where $\cL$ is the infinitesimal generator of the process $(X_u,\, u\in [0,\tau_e])$, applied to functions where the ``time''-variable $u$ is considered fixed. 

 One option for defining $g_u$ consists of replacing $\cL$  by  $\tilde\cL$, chosen to be  the generator of another continuous-time Markov process. That is, $g$ solves
\begin{equation}\label{eq:kolm_tilde}
	(\tilde\cL + \partial_u) g_u=0,\quad u\in [0,\tau_e]\quad  \text{subject to}\quad  g_{\tau_e}.
\end{equation} 
In this case
$\cA g = (\cL + \partial_u)g= (\cL -\tilde \cL) g$. Sections \ref{ex:sde} and \ref{subsec:ctmc} provide examples of this approach.

Another choice consists of imposing a truncated series expansion of the ansatz  $g(t,x)= \sum_{k=1}^K \alpha_k(t) \psi_k(x)$ and solving 
\[ \partial_t g + P_K (\cL g) =0. \]
Here $P_K f$ projects the function $f$ onto $\{g \colon g(t,x)= \sum_{k=1}^K \alpha_k(t) \psi_k(x)\}$.  In Section \ref{subsec:stoltz} we show how this approach is utilised in \cite{stoltz2021bayesian}.

\section{Backward Filtering Forward Guiding}\label{sec:implementation} 

Application of Theorem \ref{thm:lr_xstar_xcirc_tree} requires to compute $g_t$, $t\in \cS$, from the leaves back to the root vertex, hence traversing the tree backwards. Once this computation is performed, the guided process can be forward simulated according to Definition \ref{defn: guided process}. In simulating the guided process, we traverse the tree once again, in reverse direction. For this reason, we call the joint operation \emph{Backward Filtering Forward Guiding (BFFG)}. Note that the dependency structure of the guided process is inherited from the (unconditional) forward process. 

If we simulate multiple guided processes, then  Theorem \ref{thm:lr_xstar_xcirc_tree} reveals that the relative weight of $X$ is given by
\[g_{r}(x_{r})  \prod_{s\in \cS \cup \cV} w_{\pa(s), s}(X_{\pa(s)}) .\]
Moreover, it forms a positive unbiased estimator for the likelihood $h_r(x_r)$. For multiple algorithms, such as Sequential Monte Carlo (SMC), estimation of the weight by an positive unbiased estimator suffices.

 Denote the collection of $g$-functions that appear in the definition of $\PP^\circ$ by $\scr{G}:=\{g_{s,t},\, e=(s,t)\in \scr{E}\}$. Here, implicitly, for a continuous edge $e$ this includes $\{ g_u,\, u \in [0,\tau_e]\}$. In a companion paper \cite{schauer2023compositionality} we study the compositional structure of BFFG in the language of category theory. While we do not go into these details here, we do want to identify the main requirements.
\begin{enumerate}
	\item $\scr{G}$ can be computed in closed form or a ``good'' choice can be specified right away. 
	\item Sampling under $\PP^\circ$ is tractable. Recall that $\PP^\circ$ is obtained by a change of measure of $\PP$ using $\scr{G}$. It is the law of the guided process.
	\item The weights showing up in either Theorem \ref{thm:lr_xstar_xcirc_tree}  or Proposition \ref{prop:continuousguided} (in case of a continuous edge) can be evaluated. 
\end{enumerate}

For a discrete edge, one generic approach for (1) consists of pairing each kernel $\kappa_{s,t}$ with a kernel $\tilde\kappa_{s,t}$ for which the backward information filter can easily be computed in closed form. In Section \ref{subsec:choice_g} we already listed some strategies for approaching (1) in case of a continuous edge.

\medskip

In sections \ref{sec:examples_discrete} and \ref{sec:examples_continuous} we provide  examples of  discrete- and continuous time guided processes respectively.
In each of these examples  $g$ can be identified from a parameter $\zeta$. In that case  we write  $g\param{\zeta}$. Backward filtering is tractable when
\begin{itemize}
	\item if $g  \param{\zeta}$, then $\tilde \kappa g \param{\zeta'}$ for some $\zeta'$ (this corresponds to the pullback step for a discrete edge);
	\item if $g_{\tau_e}  \param{\zeta}$, then for all $u\in [0,\tau_e]$, the solution to \eqref{eq:kolm_tilde} satisfies $g_u \param{\zeta_u}$ for some $\zeta_u$ (this corresponds to the pullback step for a continuous edge $e$); 
	\item if $g_i  \param{\zeta_i}$, then $\prod_i g_i  \param{\zeta'}$  for some $\zeta'$ (this corresponds to the fusion step).
\end{itemize}
For each of the examples that follow, the results are written in the following form:
\begin{enumerate}
\item {\it Computing $\scr{G}$:}
\begin{itemize}
	\item {\it Pullback for $\tilde\kappa$:}
	\item {\it Fusion:}
	\item {\it Initialisation from leaves:}
\end{itemize}
\item {\it Sampling under $\PP^\circ$:}
\item {\it Computation of the weight:}
\end{enumerate}
Here, in case of a continuous edge $e$, ``pullback for $\tilde\kappa$'' refers to the backward filtering step on the edge $e$ under simplified dynamics of the forward process. 

\section{Examples of guided processes for discrete edges}\label{sec:examples_discrete}

\subsection{Nonlinear Gaussian kernels}\label{subsec:gauss}

Consider the stochastic process on $\cG$ with transitions defined by 
\begin{equation}\label{eq:gaussian_model}X_t \mid X_s = x \sim N(\mu_t(x), Q_t(x)), \qquad t\in \ch(s).  \end{equation}
Unfortunately, for this class of models (parametrised by $(\mu_t, Q_t)$), only in very special cases Doob's $h$-transform is tractable. For that reason,  we look for  a tractable $h$-transform to define a guided process $X^\circ$. This is obtained in the specific case where 
\[\tilde X_t \mid \tilde X_s =x \sim N(\Phi_t x +  \beta_t, Q_t), \qquad t  \in \ch(s). \]
Below we show that backward filtering, sampling from the forward map and computing  the weights are all  fully tractable.

In the following, we write $\phi(x; \mu, \Sigma)$ for the density of the $N(\mu,\Sigma)$-distribution, evaluated at $x$. Similarly, we write $\phi^{\rm{can}}(x; F, H)$ for the density of canonical Normal distribution with potential $F=\Sigma^{-1} \mu$ and precision matrix $H = \Sigma^{-1}$, evaluated at $x$.  

The $g$ functions  can be parametrised by the triple $(c, F, H)$:\begin{equation}\label{eq:gaussian_h} \begin{split} g(y) &= \exp \left(c  + y^\T F -\frac12 y^\T H y \right) \\ & =  \varpi(c,F,H) \phi^{\rm{can}}(y; F, H)= \varpi(c, F, H) \phi(y; H^{-1}F, H^{-1}), \end{split} \end{equation}
where $\log \varpi(c, F, H) = c - \log \phi^{\on{can}}(0, F, H)$.
We write $g\param{c, F, H}$ for $g$ and parameters as in \eqref{eq:gaussian_h}.

In the following result we write $\kappa\equiv \kappa_{\pa(s),s}$, $\tilde\kappa\equiv \tilde\kappa_{\pa(s),s}$ and $w\equiv w_{\pa(s),s}$. 
\begin{theorem}\label{thm:gaussianformulas}
Let $\kappa(x,\dd y) = \phi(y; \mu(x), Q(x)) \dd y$ and $\tilde\kappa(x,\dd y) = \phi(y; \Phi x + \beta, Q)\dd y$.  Assume that $g\param{c, F, H}$ with invertible $H$. 
\begin{enumerate}
\item {\it Computing $\scr{G}$}:
\begin{itemize}
\item Pullback for $\tilde \kappa$: With  $C= Q+ H^{-1}$ invertible,
\[
  \tilde \kappa g  \param{\bar c, \bar F, \bar H}	
  \quad\text{where}\quad
\begin{cases}
	\bar H = \Phi^\T C^{-1} \Phi \\
\bar F = \Phi^\T C^{-1} (H^{-1}F-\beta)\\
	\bar c =  c   - \log \phi^{\rm can}(0, F, H) + \log  \phi(\beta; H^{-1}F, C)
\end{cases}.
\]

\item Fusion: if $g_i \param{c_i, F_i, h_i}$, $i = 1, \dots, k$, then
\[  \prod_{i=1}^k g_i  \param{\sum_{i=1}^k c_i, \sum_{i=1}^k F_i, \sum_{i=1}^kH_i}. \]

\item Initialisation from leaves:  if $v \sim N(\Phi x + \beta, Q)$ is observed at a leaf, then $g_{\pa(v),v}   \param{c, F, H}$ where 
\[
c =   \log  \phi(\beta; v, Q)
 \qquad 
 F = \Phi^\T Q^{-1} (v - \beta) \qquad 
 H = \Phi^\T Q^{-1} \Phi	.
\]
\end{itemize}

\item {\it Sampling under $\PP^\circ$}: \[X^\circ_s \mid X^\circ_{\pa(s)}=x \sim N^{\mathrm{can}}\left( F + Q(x)^{-1} \mu(x), H + Q(x)^{-1} \right).\]

\item {\it Computation of the weight:} if $C(x)= Q(x)+ H^{-1}$ is invertible, then 
$(\kappa g)(x) = \varpi(c, F, H) \phi(H^{-1}F; \mu(x), C(x))$ and  $w(x)$ is obtained from
\[ w(x) = \frac{(\kappa g)(x)}{(\tilde\kappa g)(x)}.\]   
\end{enumerate}
\end{theorem}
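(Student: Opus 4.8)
The plan is to reduce every claim to two elementary facts about Gaussians. The first is that a function of the form \eqref{eq:gaussian_h} is, up to the multiplicative constant $\varpi(c,F,H)$, itself a Gaussian density $\phi(y; H^{-1}F, H^{-1})$. The second is the convolution identity
\[ \int \phi(y; m_1, \Sigma_1)\,\phi(y; m_2, \Sigma_2)\dd y = \phi(m_1;\, m_2,\, \Sigma_1 + \Sigma_2), \]
which rewrites the integral of a product of two Gaussian densities in $y$ as a single Gaussian density in the means. With these in hand the core computation is the pullback, and the remaining items follow either by specialisation or by elementary manipulation.

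For the \textbf{pullback}, I would write $\tilde\kappa(x,\dd y) = \phi(y;\Phi x + \beta, Q)\dd y$ and use the canonical form of $g$ to obtain
\[ (\tilde\kappa g)(x) = \varpi(c,F,H)\int \phi(y; H^{-1}F, H^{-1})\,\phi(y; \Phi x + \beta, Q)\dd y = \varpi(c,F,H)\,\phi\!\bigl(H^{-1}F;\, \Phi x + \beta,\, C\bigr), \]
with $C = Q + H^{-1}$. It then remains to expand the exponent of $\phi(H^{-1}F; \Phi x + \beta, C)$ as a quadratic in $x$. Setting $a = H^{-1}F - \beta$, this exponent equals $-\tfrac12(a - \Phi x)^\T C^{-1}(a - \Phi x)$; reading off the quadratic, linear, and constant coefficients in $x$ gives $\bar H = \Phi^\T C^{-1}\Phi$ and $\bar F = \Phi^\T C^{-1}(H^{-1}F - \beta)$ immediately. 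The \textbf{weight} in item (3) is then essentially free: $(\kappa g)(x)$ comes from exactly the same computation with $(\Phi x + \beta, Q)$ replaced by $(\mu(x), Q(x))$ and $C$ by $C(x) = Q(x) + H^{-1}$, and the definition \eqref{eq:treeweights} together with the identification $g_{\pa(s),s} = \tilde\kappa g$ gives $w(x) = (\kappa g)(x)/(\tilde\kappa g)(x)$.

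The three remaining items are short. \textbf{Fusion} is immediate: multiplying functions of the form \eqref{eq:gaussian_h} adds the exponents, hence adds the triples $(c,F,H)$. \textbf{Initialisation from leaves} specialises the leaf rule \eqref{eq:hleaf}, since $h_{\pa(v),v}(x) = \phi(v; \Phi x + \beta, Q)$, whose exponent $-\tfrac12(v - \beta - \Phi x)^\T Q^{-1}(v - \beta - \Phi x)$ is expanded in $x$ exactly as above to read off $H = \Phi^\T Q^{-1}\Phi$ and $F = \Phi^\T Q^{-1}(v-\beta)$. \textbf{Sampling} follows by completing the square in $y$ in the product $g_s(y)\kappa_{\pa(s),s}(x,\dd y)$ appearing in Definition \ref{defn: guided process}: the quadratic term contributes precision $H + Q(x)^{-1}$ and the linear term contributes potential $F + Q(x)^{-1}\mu(x)$, which is the asserted canonical Normal law of the guided kernel.

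I expect the one genuinely fiddly point to be the bookkeeping of the constant $\bar c$. One must collect the $x$-free part of $\log(\tilde\kappa g)(x)$, namely $\log\varpi(c,F,H)$ together with the normalising-plus-quadratic term $-\tfrac{d}{2}\log(2\pi) - \tfrac12\log|C| - \tfrac12 a^\T C^{-1}a$, and recognise these latter three terms as the single log-density $\log\phi(\beta; H^{-1}F, C)$ (using that $a = H^{-1}F - \beta$ makes $a^\T C^{-1}a$ precisely the Gaussian exponent evaluated at $\beta$ with mean $H^{-1}F$ and covariance $C$). Combined with $\log\varpi(c,F,H) = c - \log\phi^{\on{can}}(0,F,H)$, this reassembles into the stated $\bar c = c - \log\phi^{\on{can}}(0,F,H) + \log\phi(\beta; H^{-1}F, C)$. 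The same reconciliation, with $(Q, v-\beta)$ in place of $(C, H^{-1}F - \beta)$, yields $c = \log\phi(\beta; v, Q)$ in the leaf case. None of this is conceptually hard, but it is where sign and normalisation errors are easiest to make.
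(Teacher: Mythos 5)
Your proposal is correct and follows essentially the same route as the paper's proof: both compute $(\kappa g)(x)$ via the Gaussian convolution identity, obtain $\tilde\kappa g$ by specialising to $(\Phi x+\beta, Q)$ and expanding the exponent as a quadratic in $x$ to read off $(\bar c,\bar F,\bar H)$, treat fusion as addition of exponents, and derive the sampling kernel by completing the square in $y$ in $g(y)\kappa(x,\dd y)$. If anything, your bookkeeping of the constant $\bar c$ (and of $c$ in the leaf case) is more explicit than the paper's, which dispatches that step with ``a bit of algebra''.
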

\begin{proof}
The proof is elementary.  For the pullback, similar results have appeared in the literature,  for example  Chapter 7 in \cite{chopin2020introduction},   \cite{wilkinson2002conditional} and Chapter 5 in \cite{cappe2005springer}.

First note that 
\begin{align*} (\kappa g)(x) &= \int g(y) \kappa(x,\dd y) =\int \varpi(c,F,H)\phi(y; H^{-1}F, H^{-1}) \phi(y; \mu(x), Q(x)) \dd y
 \\ &= \varpi(c, F, H) \phi(H^{-1}F; \mu(x), Q(x) + H^{-1}).
\end{align*}
By using the specific form of $\mu(x)$ and $Q(x)$ for the kernel $\tilde\kappa$ the expression for the weight follows.

To find the parametrisation of $\tilde\kappa h$, let $C=Q+H^{-1}$. We have
\begin{align*}
(\tilde \kappa g)(x) &= \varpi(c, F, H) \phi(H^{-1}F; \Phi x + \beta, Q + H^{-1}) \\ 
& =\varpi(c, F, H) (2\pi)^{-d/2} |C|^{-1/2} \\ & \quad \times \exp\left(-\frac12 x^\T \Phi^\T C^{-1} \Phi x + (H^{-1}F-\beta)^\T C^{-1} \Phi x - \frac12 (H^{-1}F-\beta)^\T C^{-1} (H^{-1}F-\beta)\right).
\end{align*}
The result follows upon collecting terms. 
If $\Phi$ is invertible, then 
\[ \bar F^\T \bar H \bar F = (H^{-1} F -\beta)^\T C^{-1} (H^{-1} F -\beta). \]
A bit of algebra then gives the stated result. 

The derivation of the parametrisation of the fusion step is trivial. 

 To derive forward simulation under $\PP^\circ$, we write $\propto$ to denote proportionality with respect to $y$
\begin{align*} \frac{g(y) \kappa(x,\dd y)}{\dd y} &\propto \exp\left(-\frac12 y^\T H y + y^\T F\right) \exp\left(-\frac12 (y-\mu(x))^\T Q(x)^{-1} (y-\mu(x))\right) \\ &\propto \exp\left(-\frac12 y^\T (H+ Q(x)^{-1}) y+ y^\T (F+Q(x)^{-1}\mu(x))\right)\\ & \propto \phi^{\mathrm{can}}(y; F + Q(x)^{-1} \mu(x),H + Q(x)^{-1})
\end{align*}
which suffices to be shown.
\end{proof}

Taking an inverse of $H$ respective $C$ can be avoided using
$
(Q+H^{-1})^{-1} H^{-1} = (Q H+I)^{-1} , 
(Q+H^{-1})^{-1}  = H - H (H+Q^{-1})^{-1} H .  
$
While backward filtering for a linear Gaussian process on a tree is well known (see e.g.\ \cite{Chou1994}, section 3), results presented  in the literature often don't state update formulas for the constant $\varpi$ (or equivalently $c$). If the mere goal is smoothing with known dynamics, this suffices. However, we will also be interested in estimating parameters in $\mu$ and or $Q$ then the constant cannot be ignored. In case the dynamics of $X$ itself are linear, then one can take $\tilde\kappa=\kappa$ which implies  $w=1$. Moreover, if additionally the process lives on a ``line graph with attached observation leaves'', where each non-leaf vertex has one child in $\cS$ and at most one child in $\cV$, our procedure is essentially equivalent to Forward Filtering Backward Sampling (FFBS, \cite{CarterKohn(1994)}), the difference being that our procedure applies in time-reversed order. On a general directed tree however the ordering cannot be changed and the filtering steps must be done backwards, as we propose, just like in message passing algoritheorems in general.

\subsection{Discrete state-space Markov chains and particle systems}\label{sec:discrete}

\subsubsection{Branching particle on a tree}
Assume a  ``particle'' takes values in a finite state space $E=\{1,\ldots, R\}$ according to the  $R\times R$ transition matrix $K$.  At each vertex, the particle is allowed to copy itself and branch. 

The algorithmic elements  can easily be identified. The forward kernel $\kappa(x,\dd y)$ can be identified with the  matrix $K$. Furthermore, the map $x\mapsto g(x)$ can be identified with the column vector $\bs{g}=[g_1,\ldots, g_R]^\T$, where $g_i=g(i)$. 
Hence $g$ is parametrised by $\bs{g}$ and we write $g\param{\bs{g}}$.

 Let $\bs{e}_k$ denote the $k$-th standard basis-vector in $\RR^R$ and denote the $j$-th element of the vector $a$ by $\langle a\rangle_j$.
\begin{theorem}\label{thm:finitestate}
Let $\kappa(x,\dd y)$ and  $\tilde\kappa(x,\dd y)$ be represented by the stochastic matrices $K$ and $\tilde K$ respectively.  Assume that $h\param{\bs{g}}$. 
\begin{enumerate}
\item {\it Computing $\scr{G}$:}
\begin{itemize}
	\item {\it Pullback for $\tilde\kappa$:} $\tilde \kappa g  \param{\tilde  K\bs{g}}$
	\item {\it Fusion:} if $g_i  \param{\bs{g}_i}$, $i = 1, \dots, k$, then
\[  \prod_{i=1}^k g_i   \param{\bigcirc_{i=1}^k \bs{g}_i},\] where $\bigcirc$ denotes the Hadamard (entrywise) product.
	\item {\it Initialisation from leaves:} at a leaf vertex $v$ with observation $k\in E$, set $\bs{g}_{\pa(v),v}= \bs{e}_k$.
\end{itemize}
\item {\it Sampling under $\PP^\circ$:}  
\[ \PP^\circ(X_s = k \mid X_{\pa(s)}=\ell) =\frac{\langle \bs{z}_\ell\bigcirc \bs{g}_s\rangle_k}{\bs{z}_\ell^\T \bs{g}_s}, \]
where $\bs{z}_\ell = K_{\pa(s),s}^\T \bs{e}_\ell$ is the $\ell$-th row of $K_{\pa(s),s}$. 
\item {\it Computation of the weight:} The weight at $x\in E$ is given by  \[ w(x) = \frac{(\kappa g)(x)}{(\tilde\kappa g)(x)} =   \frac{\langle K\bs{g}\rangle_x}{\langle\tilde K\bs{g}\rangle_x} \]
\end{enumerate}

\end{theorem}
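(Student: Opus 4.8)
The plan is to verify the three claims of Theorem~\ref{thm:finitestate} by translating each abstract operation from the general theory into its concrete matrix/vector realisation on the finite state space $E=\{1,\ldots,R\}$. The guiding observation is that for $g\param{\bs g}$ the map $x\mapsto g(x)$ is literally the vector $\bs g=[g_1,\ldots,g_R]^\T$, measures on $E$ are row vectors, and a Markov kernel is a stochastic matrix; under this dictionary the integral operations \eqref{eq: pushforward}--\eqref{eq:chapman} become matrix--vector products. So almost everything reduces to matching definitions, and I expect the proof to be short.

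First I would handle the computation of $\scr G$. For the pullback, the defining integral \eqref{eq: pullback} specialised to the kernel represented by $\tilde K$ reads $(\tilde\kappa g)(x)=\sum_{y\in E}\tilde K_{x,y}\,g(y)=\langle \tilde K\bs g\rangle_x$, which is exactly the statement $\tilde\kappa g\param{\tilde K\bs g}$. For fusion, the general rule \eqref{eq:split_tochilds_g} is pointwise multiplication of functions, and pointwise multiplication of vectors indexed by $E$ is by definition the Hadamard product $\bigcirc$, giving $\prod_i g_i\param{\bigcirc_i\bs g_i}$. For the initialisation, the leaf rule \eqref{eq:hleaf} sets $h_{\pa(v),v}(x)=k_{\pa(v),v}(x,x_v)$; here the substitute $g$ should encode the indicator of the observed state $k$, so that the corresponding vector is the standard basis vector $\bs e_k$. (If one prefers, one checks that pulling $\bs e_k$ back through $\tilde K$ recovers the $k$-th column, i.e.\ the likelihood of observing state $k$, confirming the choice is consistent with the continuous-density formula $k_{\pa(v),v}(x,x_v)$.)

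Next I would treat sampling under $\PP^\circ$. Here I specialise Definition~\ref{defn: guided process}: the guided transition kernel is $\kappa^\circ_{\pa(s),s}(x,\dd y)\propto g_s(y)\,\kappa_{\pa(s),s}(x,\dd y)$, with normalisation the denominator $\int g_s(y)\kappa_{\pa(s),s}(x,\dd y)$. Taking $x=\ell$ and evaluating at the point mass $y=k$, the numerator is $K_{\ell,k}\,g_s(k)=\langle\bs z_\ell\bigcirc\bs g_s\rangle_k$ with $\bs z_\ell=K^\T\bs e_\ell$ the $\ell$-th row of $K$, and the denominator is $\sum_k K_{\ell,k}g_s(k)=\bs z_\ell^\T\bs g_s$. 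Dividing gives the stated conditional probability. Finally, for the weight I invoke the first branch of \eqref{eq:treeweights} together with the recipe $g_{\pa(s),s}=\tilde\kappa_{\pa(s),s}g_s$ (the paired-kernel construction suggested after Theorem~\ref{thm:lr_xstar_xcirc_tree}), so that $w(x)=(\kappa g)(x)/(\tilde\kappa g)(x)$; the two pullbacks then evaluate to $\langle K\bs g\rangle_x$ and $\langle\tilde K\bs g\rangle_x$ exactly as in the pullback step, yielding the ratio.

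The main subtlety, rather than any computational obstacle, is bookkeeping about which function plays which role: one must be careful that $g_{\pa(s),s}$ is defined as $\tilde\kappa g_s$ (so that the weight is the stated ratio of the \emph{true} pullback to the \emph{approximate} pullback), and that the fusion identity $g_s=\prod_{t\in\ch(s)}g_{s,t}$ corresponds to the Hadamard product of the incoming message vectors. Once this dictionary between functions-on-$E$ and vectors, and between kernels and stochastic matrices, is fixed, every assertion is an immediate specialisation of the already-established general formulas, so no genuinely hard step remains.
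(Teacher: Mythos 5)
Your proposal is correct and coincides with the paper's treatment: the paper states Theorem~\ref{thm:finitestate} without proof, regarding it as the immediate specialisation of Definition~\ref{defn: guided process}, the pullback \eqref{eq: pullback}, the fusion rule \eqref{eq:split_tochilds_g}, and the weight formula \eqref{eq:treeweights} under the dictionary kernels $\leftrightarrow$ stochastic matrices and functions $\leftrightarrow$ vectors, which is precisely the dictionary you set up and verify. Your bookkeeping (in particular $g_{\pa(s),s}=\tilde\kappa g_s$ so the weight is the ratio of the true to the approximate pullback, and $\bs e_k$ as the indicator of an exactly observed leaf state) matches the paper's conventions, so nothing is missing.
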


While we can simply take $\tilde\kappa=\kappa$ on a tree, yielding unit weights, in case $R$ is very large, it can nevertheless be advantageous to  use a different (simpler) map $\tilde\kappa$. To see, this, note that we only need to compute one element of the matrix vector product $K\bs{g}$, but need to compute the full vector $\tilde{K}\bs{g}$ in when backward filtering. Hence, choosing $\tilde K$ sparse can give computational advantages.

\subsubsection{Interacting particles -- cellular automata -- agent based models}

Now consider a discrete time interacting particle process, say with $n$ particles, where each particle takes values in $\{1,\ldots, R\}$. Hence, a particle configuration $x$ at a particular time-instant takes values in $E=\{1,\ldots, R\}^n$.

A simple example consists of particles with state $E = \{1\equiv \mathbf{S}, 2\equiv \mathbf{I}, 3\equiv \mathbf{R}\}$ that represent the health status of individuals that are either {\bf S}usceptible, {\bf I}nfected or {\bf R}ecovered. Then the probability to transition from state {\bf S} to {\bf I} may depend on the number of nearby particles (individuals) that are infected, hence the particles are ``interacting''. A similar example is obtained from time-discretisation of the contact process (cf.\ \cite{Liggett05}).
Note that each particle has multiple parents defined by a local neighbourhood and that a model like this is sometimes referred to as a cellular automaton.

Let $x\in E$. Without additional assumptions, the forward transition kernel 
can be represented by a $R^n \times R^n$ transition matrix. With a large number of particles such a model is not tractable computationally. To turn this into a more tractable form, we make the simplifying assumption that  conditional on $x$ each particle transitions independently. This implies that  the forward evolution kernel $\kappa(x,\dd y)$ can be represented by $(K_1(x),\ldots, K_n(x))$, where $K_i(x)$ is the $R\times R$ transition matrix for the $i$-th particle. Note that the particles interact because the transition kernel for the $i$th particle  may depend on the state of {\it all} particles. 

Due to interactions, it will computationally be very expensive to use $\kappa$ for  backward filtering. Instead, in the backward map we propose to ignore/neglect all interactions between particles. This means that effectively we  backward filter upon assuming all particles move independently, and the evolution of the $i$-th particle  depends only on $x_i$ (not $x$, as in the forward kernel). Put differently, in backward filtering, we simplify the graphical model to $n$ line graphs, one for each particle.

This choice implies that  $\tilde\kappa$ can be represented by $(\tilde K_1,\ldots, \tilde K_n)$ and the map $x\mapsto g(x) = \prod_{i=1}^n g_i(x)$ can be represented by $(\bs{g}^1,\ldots,\bs{g}^n)$. We write $g\param{\bs{g}^1,\ldots,\bs{g}^n}$. The following result is a straightforward consequence of Theorem \ref{thm:finitestate}.

\begin{theorem}
Assume $\kappa$ and $\tilde \kappa$ are represented by $R\times R$ stochastic matrices $K_1,\ldots, K_n$ and $\tilde K_1,\ldots, \tilde K_n$ respectively.  Assume that $g\param{\bs{g}^1,\ldots,\bs{g}^n}$.
\begin{enumerate}
\item {\it Computing $\scr{G}$:}
\begin{itemize}
	\item {\it Pullback for $\tilde\kappa$:} $\tilde \kappa g  \param{\tilde K_1\bs{g}^1,\ldots,\tilde K_n\bs{g}^n}$
	\item {\it Fusion:} if $g_i  \param{\bs{g}^i_1,\ldots, \bs{g}^i_n}$, then
\[ \prod_{i=1}^k g_i  \param{ \bigcirc_{i=1}^k \bs{g}^i_1,\ldots,  \bigcirc_{i=1}^k \bs{g}^i_n}.\]
	\item {\it Initialisation from leaves:} for observation $v\in E$ set $\bs{g}_{\pa(v),v}^i = \bs{e}_{\langle v\rangle_i}$ for $i \in \{1,\ldots, n\}$. 

\end{itemize}
\item {\it Sampling under $\PP^\circ$:} if the forward transition on the edge $(s,t)$ is represented by $(K_1(x),\ldots, K_n(x))$, then with $g_t=\param{\bs{g}^1,\ldots, \bs{g}^n}$
\begin{align*}
	 \PP^\circ(X_t = y \mid X_{s}=x) &= \prod_{i=1}^n  \PP^\circ\left(\langle X_t\rangle_i = \langle  y \rangle_i \mid X_{s}=x\right) \\ &=  \prod_{i=1}^n \frac{\langle \bs{z}^i_\ell\bigcirc \bs{g}^i\rangle_{\langle y\rangle_i}}{(\bs{z}^i_\ell)^\T \bs{g}^i}, 
\end{align*}
where $\bs{z}^i_\ell = K_i(x)^\T \bs{e}_\ell$ is the $\ell$-th row of $K_i(x)$.

\item {\it Computation of the weight:}  Since $(\kappa g)(x) \param{K_1(x)\bs{g}_1,\ldots,K_n(x)\bs{g}_n}$, the weight at $x$ is given by \[ w(x) = \frac{(\kappa g)(x)}{(\tilde \kappa g)(x)} = \prod_{i=1}^n \frac{\langle K_i(x) \bs{g}_i\rangle_{x_i}}{\langle\tilde K_i \bs{g}_i\rangle_{x_i}}. \]

\end{enumerate}
 \end{theorem}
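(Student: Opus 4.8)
The plan is to exploit the product structure carried by $g$, $\tilde\kappa$, and $\kappa$ across the $n$ particles, and then to invoke Theorem \ref{thm:finitestate} one coordinate at a time. Concretely, I read the parametrisation $g\param{\bs{g}^1,\ldots,\bs{g}^n}$ as the statement that $g(x)=\prod_{i=1}^n \langle\bs{g}^i\rangle_{x_i}$, a tensor (rank-one) function on $E=\{1,\ldots,R\}^n$; likewise $\tilde\kappa$ is the product kernel $\bigotimes_{i=1}^n \tilde K_i$, and for each fixed source configuration $x$ the forward kernel $\kappa(x,\cdot)$ is the product $\bigotimes_{i=1}^n K_i(x)(x_i,\cdot)$. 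The whole proof then reduces to checking that each of the three backward operations and the two forward operations respects this product structure, after which every factor is dispatched to Theorem \ref{thm:finitestate}.

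First I would treat the computation of $\scr{G}$. For the pullback, since $\tilde\kappa$ is a product kernel and $g$ is a product function, Fubini for finite products gives $(\tilde\kappa g)(x)=\prod_i \sum_{y_i}\langle\bs{g}^i\rangle_{y_i}(\tilde K_i)_{x_i,y_i}=\prod_i\langle\tilde K_i\bs{g}^i\rangle_{x_i}$, again a product function with parameters $(\tilde K_1\bs{g}^1,\ldots,\tilde K_n\bs{g}^n)$, each factor being precisely the single-particle pullback. Fusion is pure bookkeeping: writing each message as a product over coordinates and interchanging the two finite products turns the product of messages into a product of coordinatewise Hadamard products. Initialisation from a leaf observation $v=(v_1,\ldots,v_n)$ factorises the exact-observation indicator $\prod_i \ind\{x_i=v_i\}$, yielding $\bs{e}_{\langle v\rangle_i}$ in each coordinate, i.e.\ the single-particle initialisation applied per particle.

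Next I would handle the forward operations, where the only point requiring care is that $\kappa$ depends on the full configuration through the interacting matrices $K_i(x)$. The key observation is that both the sampling kernel $\kappa^\circ$ of Definition \ref{defn: guided process} and the weight are evaluated at a \emph{fixed} source configuration $x$, and at fixed $x$ the target law $\kappa(x,\cdot)$ is a genuine product $\bigotimes_i K_i(x)$. Hence $g_t(y)\kappa(x,\dd y)\propto\prod_i\langle\bs{g}^i\rangle_{y_i}(K_i(x))_{x_i,y_i}$ factorises over $i$, and so does its normaliser, so the conditional law of $X_t$ given $X_s=x$ is a product over particles whose $i$-th factor is exactly the single-particle guided transition of Theorem \ref{thm:finitestate} applied to $K_i(x)$ and $\bs{g}^i$. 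For the weight I would compute $(\kappa g)(x)=\prod_i\langle K_i(x)\bs{g}^i\rangle_{x_i}$ by the same factorisation, divide by the analogous product for $\tilde\kappa g$, and use that the ratio of products is the product of ratios.

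The computation is thus genuinely straightforward, and the only conceptually delicate step --- the one I would state explicitly rather than leave implicit --- is the asymmetry between the two passes: the backward filter factorises because $\tilde\kappa$ was deliberately chosen independent of the interactions (a product kernel with constant factors), whereas the forward kernel $\kappa$ retains the interactions through $K_i(x)$ and only factorises once the source $x$ is frozen. It is precisely this observation that explains why conditional independence of the particles is preserved in the guided process even though the backward pass discarded the interactions.
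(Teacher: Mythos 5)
Your proposal is correct and follows exactly the route the paper intends: the paper gives no written proof, stating only that the result is a straightforward consequence of Theorem \ref{thm:finitestate}, and your argument supplies precisely the missing details, namely that the rank-one structure of $g$, the product form of $\tilde\kappa$, and the conditional product form of $\kappa(x,\cdot)$ at frozen source $x$ let every operation (pullback, fusion, initialisation, guided sampling, weight) factorise coordinatewise into instances of the single-particle theorem. Your closing remark on the asymmetry --- the backward pass factorises because $\tilde K_i$ depends only on $x_i$ by construction, while the forward kernel factorises only after fixing $x$ --- is exactly the point the paper makes informally in the surrounding text (``backward diagonalisation''), so nothing is missing.
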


\subsubsection{Backward diagonalisation}\label{sec:backward diagonlisation}
The example of the previous section shows a generic way to deal with interacting particles systems. 
Here, conditional on the state of all particles at a particular ``time'', all particles transition independently, though with transition probabilities that may depend on the state of {\it all} particles. The backward filtering is however done on  separate line graphs, thereby fully bypassing the need of a tractable fusion step. We call this backward diagonalisation.

\subsection{Line graph with independent Gamma increments}\label{subsec:Gamma_example}

In this example, we consider a process with Gamma distributed increments. We consider a line graph  with a single observational leaf $v$. 
We write $Z \sim \on{Gamma}(\alpha,\beta)$ if $Z$ has density $\psi(x; \alpha, \beta) =  \beta ^{\alpha }\Gamma (\alpha )^{-1} x^{\alpha -1}e^{-\beta x} \ind_{ (0,\infty)}(x)$.  

Choose  mappings $\alpha_t$,  $\beta_t\colon (0,\infty) \to [1/C, C]$ for some $C>0$. We define a Markov process  with Gamma increments on $\cG$  by 
\[   	X_{t} - X_s \mid X_s=x \sim  \on{Gamma}(\alpha_t, \beta_t(x)), \quad \text{if $s = \pa(t)$}, \qquad  X_0 = x_0. \] 
This implies
\[ h_{\pa(t),t}(x,y)  = \psi(y - x; \alpha_t, \beta_t(x)). \]
 Note that $X$ can be thought of as a time discretised version of the  SDE $\dd X_t = \beta^{-1}(X_t)\dd L_t$ driven by a Gamma  process  $(L_t)$ with scale parameter 1, observed at final time $T$.  See \cite{Belomestnyetal2019} for a continuous time perspective on this problem. A statistical application will typically involve multiple observations and henceforth multiple line graphs. Simulation on each line graph corresponds to conditional (bridge) simulation for the process $X$.

As the $h$-transform is not tractable, we introduce the process $\tilde{X}$ which is defined as  the Markov process  with Gamma increments  induced by  $(\alpha_t, \tilde\beta)$,  with $\tilde\beta$ a user specified nonnegative constant (which may depend on the vertex $t$; we drop this from the notation). This process is tractable and is used to define $\scr{G}$.

As before, we state our results using the kernels $\kappa$ and $\tilde\kappa$. Define for $A, \beta>0$
\begin{equation}\label{eq:h_gammaexample} g(y) =g(y; A, \beta)= \psi(x_v-y; A, \beta) \end{equation}
and write $g\param{A, \beta}$.
\begin{definition}
 Define the exponentially-tilted Beta-distribution with parameters $\gamma_1, \gamma_2>0$ and $\lambda\in \RR$ as the distribution with density
\begin{equation}\label{eq:exptilted_beta} q_{\gamma_1, \gamma_2, \lambda}(z) \propto z^{\gamma_1-1} (1-z)^{\gamma_2-1} e^{-\lambda z} \ind_{(0,1)}(z). \end{equation}
We denote this distribution by $\mathrm{ExpBeta}(\gamma_1, \gamma_2, \lambda)$.	
\end{definition}
 Sampling from this distribution can be accomplished for example  using rejection sampling, with importance sampling distribution Beta($\gamma_1, \gamma_2$).

\begin{theorem}\label{lem:gamma_incr_htilde_update}
Let $\kappa(x,\dd y) = \psi(y-x; \alpha, \beta(x))\dd y $ and $\tilde\kappa(x,\dd y) = \psi(y-x; \alpha, \tilde\beta)\dd y $.  Assume that  $g\param{A, \beta}$. 

\begin{enumerate}
\item {\it Computing $\scr{G}$:}
\begin{itemize}
	\item {\it Pullback for $\tilde\kappa$:} $\tilde \kappa g \param{A+\alpha, \tilde\beta}$. 
	\item {\it Fusion:} need not be defined as we consider a line-graph.
	\item {\it Initialisation from leaves:} $g_{\pa(v), v)}(x) = \psi(x_v-x; A, \tilde\beta)$. 
\end{itemize}
\item {\it Sampling under $\PP^\circ$:} 
\[ X^\circ_s \mid X_{\pa(s)}^\circ = x \sim x + Z (x_v-x), \]
where $Z\sim \mathrm{ExpBeta}(\alpha, A, \xi(x))$.
\item {\it Computation of the weight:}  the weight at $x$ is given by  \[ w(x) = \frac{(\kappa g)(x)}{(\tilde\kappa g)(x)} =  \left(\frac{\beta(x)}{\tilde\beta}\right)^{\alpha} \EE\, e^{-\xi(x) Z},\]
 where $Z \sim \mathrm{Beta}(\alpha, A)$ and $\xi(x)=(\beta(x)-\tilde\beta)(x_v-x)$.
\end{enumerate}
\end{theorem}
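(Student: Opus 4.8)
The plan is to reduce all three parts to elementary integrals of products of Gamma densities, using two classical facts: the convolution of two Gamma densities with a \emph{common} rate is again Gamma with the shapes added, and $\int_0^1 z^{a-1}(1-z)^{b-1}e^{-\lambda z}\dd z = B(a,b)\,\EE e^{-\lambda Z}$ for $Z\sim\mathrm{Beta}(a,b)$, with $B$ the Beta function. The one structural point to track is which rate each factor carries: along the backward recursion every $g$ inherits the rate $\tilde\beta$ of the simplified kernel (the leaf is initialised with rate $\tilde\beta$ and, as shown below, pullback preserves it), whereas the true forward kernel $\kappa$ carries the state-dependent rate $\beta(x)$.

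For the pullback I would take $g\param{A,\tilde\beta}$, i.e.\ $g(y)=\psi(x_v-y;A,\tilde\beta)$, substitute $u=y-x$, and write
\[
(\tilde\kappa g)(x)=\int_0^{x_v-x}\psi(u;\alpha,\tilde\beta)\,\psi\big((x_v-x)-u;A,\tilde\beta\big)\dd u .
\]
This is the convolution of a $\on{Gamma}(\alpha,\tilde\beta)$ and a $\on{Gamma}(A,\tilde\beta)$ density evaluated at $x_v-x$; since the rates agree it equals $\psi(x_v-x;A+\alpha,\tilde\beta)$, giving $\tilde\kappa g\param{A+\alpha,\tilde\beta}$. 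The leaf initialisation is immediate from the definition: $g_{\pa(v),v}(x)=\psi(x_v-x;A,\tilde\beta)$ is just the density of $\tilde\kappa$ evaluated at the observation $x_v$, and fusion is vacuous on a line graph.

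Both remaining parts use the single substitution $y=x+z(x_v-x)$, $z\in(0,1)$, which maps the support $(x,x_v)$ of $y$ onto the unit interval. For sampling, the unnormalised guided density is $g(y)\psi(y-x;\alpha,\beta(x))=\psi(x_v-y;A,\tilde\beta)\psi(y-x;\alpha,\beta(x))$; discarding all factors independent of $z$ leaves a density proportional to $z^{\alpha-1}(1-z)^{A-1}\exp(-\xi(x)z)$ with $\xi(x)=(\beta(x)-\tilde\beta)(x_v-x)$, which is exactly $\mathrm{ExpBeta}(\alpha,A,\xi(x))$, so $X^\circ_s=x+Z(x_v-x)$ with $Z\sim\mathrm{ExpBeta}(\alpha,A,\xi(x))$. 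For the weight, the same substitution turns the $z$-integral in $(\kappa g)(x)$ into $B(\alpha,A)\,\EE e^{-\xi(x)Z}$ with $Z\sim\mathrm{Beta}(\alpha,A)$, and dividing by $(\tilde\kappa g)(x)=\psi(x_v-x;A+\alpha,\tilde\beta)$ from the pullback causes the powers of $x_v-x$, the exponential in $\tilde\beta(x_v-x)$, and the $\Gamma(\alpha+A)$ factors to cancel, leaving $w(x)=(\beta(x)/\tilde\beta)^\alpha\,\EE e^{-\xi(x)Z}$.

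The computations are all elementary, so I expect the only real care to lie in the normalising-constant bookkeeping. The conceptual crux is the rate mismatch in the weight: because $\kappa$ has rate $\beta(x)\neq\tilde\beta$, the forward convolution does \emph{not} collapse to a single Gamma, and it is precisely this mismatch that produces the exponential-tilting factor $\EE e^{-\xi(x)Z}$ (and reduces $w$ to $1$ exactly when $\beta\equiv\tilde\beta$). Keeping $g$ and $\tilde\kappa$ on the common rate $\tilde\beta$ while letting $\kappa$ carry $\beta(x)$ is the bookkeeping device that makes each of the three formulas drop out cleanly.
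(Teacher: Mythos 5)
Your proposal is correct and follows essentially the same route as the paper: the same substitution $y = x + z(x_v - x)$ reduces $(\kappa g)(x)$ to the integral $B(\alpha,A)\,\EE e^{-\xi(x)Z}$ with $Z\sim\mathrm{Beta}(\alpha,A)$, from which the weight, the $\mathrm{ExpBeta}$ sampling law, and (via $\beta(x)=\tilde\beta$, i.e.\ $\xi\equiv 0$) the pullback all follow exactly as in the paper's proof. The only cosmetic difference is that you obtain the pullback directly from the common-rate Gamma convolution identity rather than as the $\beta(x)=\tilde\beta$ special case of the general computation, and your explicit remark that every backward-filtered $g$ carries the rate $\tilde\beta$ usefully clarifies a point the theorem's notation $g\param{A,\beta}$ leaves implicit.
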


\begin{proof} We have
\begin{align*}
(\kappa g)(x) & = \int_x^{x_v} \psi(x_v-y; A, \tilde\beta) \psi(y-x; \alpha, \beta(x)) \dd y \\ & = \frac{\tilde\beta^A \beta(x)^\alpha}{\Gamma(A) \Gamma(\alpha)} e^{-\tilde\beta x_v +\beta(x) x} \int_x^{x_v} (x_v-y)^{A-1} (y-x)^{\alpha-1} e^{(\beta-\beta(x)) y } \dd y \\ & = 	
\frac{\beta^A \beta(x)^\alpha}{\Gamma(A) \Gamma(\alpha)} e^{-\beta (x_v-x)} (x_v-x)^{A+\alpha-1} \int_0^1 z^{\alpha-1} (1-z)^{A-1} e^{-z\xi(x)} \dd z \\&= \frac{\beta^A \beta(x)^\alpha}{\Gamma(A) \Gamma(\alpha)} e^{-\tilde\beta (x_v-x)} (x_v-x)^{A+\alpha-1} B(\alpha,A) \EE e^{-\xi(x) Z}
\end{align*}
where we made the substitution $y=x+z(x_v-x)$ at the third equality and $\mathrm{B}(\alpha, \beta)$-denotes the Beta-function, evaluated at $(\alpha, \beta)$.
Using the definition of $g$ we obtain 
\[ (\kappa g)(x)= g(x;A+\alpha, \tilde\beta) \left(\frac{\beta(x)}{\tilde\beta}\right) \EE e^{-\xi(x) Z}.
 \]
Upon taking $\xi(x)=\tilde\xi$ we get $\tilde\kappa g = g(\cdot; A+\alpha, \tilde\beta)$. This  also directly gives the expression for the weight. 

Sampling from $\PP^\circ$ over an edge entails drawing from a density which is proportional to $g(y) \kappa(x, \dd y)$ (proportionality with respect to $y$). The claim now follows upon inspecting the derivation of $(\kappa g)(x)$ and keeping all terms under the integral proportional to $z$. 

\end{proof}

Note that the expression for the weight immediately suggests a method to estimate $w(x)$ unbiasedly. 
The setting is restricted to a line graph, as  fusion of $g_1$ and $g_2$ of the form \eqref{eq:h_gammaexample} does not  lead to a fused function of the same form.

The model can be extended to $n$ Markov processes with Gamma increments, which evolve conditionally independent, where it is assumed that the forward evolution consists of composing kernels
 \[ \kappa(\bs{x}, \dd \bs{y})  = \prod_{i=1}^n  \psi(\bs{y}_i - \bs{x}_i; \alpha, \beta(\bs{x})) \dd \bs{y}. \]	
 The backward kernel $\tilde \kappa$ is then taken to be the same, with $\beta(\bs{x})$ replaced with $\beta$. This is an instance of backward diagonalisation.

\subsection{Model by \cite{ju2021sequential}}
\label{subsec:ju}

Consider $N$ particles (agents) taking values in $\{0,1\}$. Let $x \in E:=\{0,1\}^N$. One of the models put forward in \cite{ju2021sequential} is a discrete-time hidden Markov model, where the hidden Markov process has  transition probabilities  given by 
\begin{equation}\label{eq:kappa_ju}
	\kappa(x,y) = \prod_{i=1}^N \alpha_i(x)^{y_i}(1-\alpha_i(x))^{1-y_i}, \qquad  x, y\in E 
\end{equation} 
where 
\begin{equation}\label{eq:kappa_ju2} \alpha_i(x) = \left(\lambda_i a_i(x)\right)^{1-x_i} \left(1-\gamma_i\right)^{x_i},\qquad \lambda_i \in (0,1),\: \gamma_i \in (0,1).\end{equation}
Here, with  $\scr{N}_i$ denoting the indices of the neighbours of particle $i$, 
$\alpha_i(x) = |\scr{N}_i|^{-1} \sum_{m\in \scr{N}_i} \ind\{x_m=1\}$ 
is  the fraction of neighbours of particle $i$ that take value $1$. The parameters $\lambda_i$ and $\gamma_i$ can be further parametrised by a common parameter $\theta$ but  are considered fixed in our discussion. 
The transition probabilities can be summarised by the table
\begin{center}
	\begin{tabular}{c | c c}
		  & $y_i=0$ & $y_i=1$ \\ \hline
		$x_i=0$  & $1- \lambda_i a_i(x)$ & $\lambda_i a_i(x)$  \\
		$x_i=1$  & $\gamma_i$ & $1-\gamma_i$ 
	\end{tabular}
\end{center}

A typical application consists of particles being individuals, with $0$  and $1$ encoding ``susceptible'' and ``infected'' respectively. Furthermore, individual $i$ has its own parameters $\la_i$ and $\ga_i$, which may depend on its  characteristics.  The probability of a susceptible individual becoming infected increases according to the fraction of individuals in its neighbourhood that are infected. 
Let
\[ I(x) = \sum_{i=1}^N \ind\{x_i=1\} \]
denote the total number of infected individuals in the population. 
At time $k$, $k\in \{0,1,\ldots, n\}$, we assume to observe a realisation from the random variable  $V_k \sim \mbox{Bin}(I(x), \rho)$, where $x$ is the state at time $k$ and  the parameter  $\rho$ can be interpreted as a ``reporting probability''.

\medskip
As the dimension of $E$ grows exponentially with $N$, exact backward filtering is not tractable. Following \cite{ju2021sequential}, we now explain how to overcome this problem within the framework of  BFFG. We  will parametrise the functions  
 $g \colon E \to \RR$  by  $g = \psi\circ I$ where $\psi \colon \{0, 1, \ldots, N\} \to \RR$. As the domain of $\psi$ is finite, we can identify $g$ with the vector $\bs{\psi}\in \RR^{N+1}$, where $\bs{\psi}_i := \psi(i)$. Thus  
$g \param{\bs{\psi}}$

The pullback operation is given by
$ (\kappa g)(x) = \sum_{y\in \{0,1\}^N} \kappa(x,y) g(y)$.
The computational cost grows exponentially in $N$. For this reason, consider the ``simpler'' Markov kernel
\begin{equation}\label{eq:tildekappa_ju}
	\tilde\kappa(x,y) = \prod_{i=1}^N \tilde\alpha(x_i, I(x))^{y_i}\left(1-\tilde\alpha(x_i, I(x))\right)^{1-y_i}, 
\end{equation} 
where for $s\in \{0,1,\ldots, N\}$
\begin{equation}\label{eq:tildekappa_ju2} \tilde\alpha(0, s) = \tilde\lambda N^{-1} s, \qquad 
	\tilde\alpha(1,s)  =  1-\bar\gamma.
\end{equation}
Note that under $\tilde\kappa$ all particles evolve conditionally independently with either probability $\tilde\alpha(0,I(x))$ or $\tilde\alpha(1,I(x))$. Crucially for what follows, these probabilities depend on $x$ only via $I(x)$. 

\begin{definition}
Suppose $U_k \sim \mbox{Ber}(\theta_k)$, $1\le k \le N$. If $U_1,\ldots, U_n$ are independent, then the random variable $U=\sum_{k=1}^N U_k$ is said to have the $\mbox{PoiBin}(\theta_1,\ldots, \theta_N)$ distribution. Its probability mass function is given by \[ \PP(U=\ell) = \sum_{x\in \{0,1\}^N} \ind_{\{\sum_{i=1}^N x_i =\ell \}} \prod_{i=1}^N \theta_i^{x_i} (1-\theta_i)^{1-x_i}, \qquad 0\le \ell \le N. \]
\end{definition}

\begin{theorem}\label{thm:ju}
Let $\kappa$ be defined by \eqref{eq:kappa_ju}--\eqref{eq:kappa_ju2} and $\tilde\kappa$ by  \eqref{eq:tildekappa_ju}--\eqref{eq:tildekappa_ju2}.

\begin{enumerate}
\item {\it Computing $\scr{G}$:}
\begin{itemize}
	\item {\it Pullback for $\tilde\kappa$:} if $g  \param{\bs{\psi}_1}$, then $\tilde\kappa g  \param{\bs{\psi}_2}$ with $\bs{\psi_2}$ determined by 
	\[ \psi_2(x) = \EE \psi_1(Z(x)), \]
where $Z(x)$ is distributed as $Z_0(x)+Z_1(x)$, with  $Z_0(x) \sim Bin(N-I(x), \tilde\alpha(0,I(x)))$ and  
$Z_1(x) \sim Bin(I(x), \tilde\alpha(1,I(x))$. Thus,
\begin{align*}	\psi_2(x) &=\sum_{k=0}^N \psi_1(k)\sum_{j=0}^k {N-I(x) \choose k-j} \tilde\alpha(0,I(x))^{k-j} \\ &  \qquad\times (1-\tilde\alpha(0,I(x))^{N-I(x)-k+j} {x \choose j} \tilde\alpha(1,I(x))^j (1-\tilde\alpha(1,I(x))^{I(x)-j}. \end{align*}
This agrees with Equation (32) in \cite{ju2021sequential}. 
	\item {\it Fusion:} Suppose $g_i \param{\bs{\psi}_i}$, $i=1,\ldots, k$.  Then 
\[ \prod_{i=1}^k g_i = \prod_{i=1}^k \psi_i \circ I  \param{\bigcirc_{i=1}^k \bs{\psi}_i},\] where $\bigcirc$ denotes the Hadamard (entrywise) product. 

	\item {\it Initialisation from leaves:} If $v$ is observed at  a leaf  node (i.e.\ an observation), then $g_{\pa(v),v}(x)=\psi_v(I(x)) \param{\bs{\psi}_v}$ with
\[ \bs{\psi}_v(i) = \binom{i}{v} \rho^v (1-\rho)^{i-v}\ind_{\{v \le i \le N\}} ,\qquad i =0,\ldots, N \]
Here, if $i<v$, we can define ${i\choose v}$ arbitrarily, as the indicator ensures that  $\bs{\psi}_v(i)=0$ in that case (alternatively, define ${i\choose v}=0$ for $i<v$ and drop the indicator).

\end{itemize}
\item {\it Sampling under $\PP^\circ$:}
Given state $x$, transition kernel $\kappa$ and $g \param{\bs{\psi}}$, sampling from the guided process can be done in two steps: 
\begin{enumerate}
\item sample  $j^\star$ from  $\{0,\ldots, N\}$ with probabilities\[ p(j)= \frac{\sum_{z\colon I(z)=j} \kappa(x,z) \bs{\psi}_j}{\sum_k\sum_{z\colon I(z)=k} \kappa(x,z) \bs{\psi}_k},\]
\item sample $y$ conditional on $j^\star$ according to 
\[ p(y\mid j^\star) = \frac{\kappa(x,y) \ind\{I(y)=j^\star\}}{\sum_{z\in E}\kappa(x,z) \ind\{I(z)=j^\star\}},\qquad y\in E.\]
\end{enumerate}
Here, we have dropped dependence of $y$ and $j^\star$ on $x$ in the notation.
This agrees with Equation (38) in \cite{ju2021sequential}. 
\item {\it Computation of the weight:} if $g \param{\bs{\psi}}$, then the weight at $x$ is given by 
$w(x) = (\kappa g)(x)/(\tilde\kappa g)(x)$, where 
\[ (\kappa g)(x) = \EE \bs{\psi}_{U(x)},\]
and $U(x) \sim \mbox{PoiBin}(\alpha_1(x),\ldots, \alpha_N(x))$. This agrees with Equation (37) in \cite{ju2021sequential}. 
\end{enumerate}
\end{theorem}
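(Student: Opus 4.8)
The plan is to treat the four items as direct computations resting on a single structural observation: under \emph{both} $\kappa$ and $\tilde\kappa$ the $N$ particles transition conditionally independently given the current configuration $x$, so the next count of infected $I(Y)=\sum_{i=1}^N Y_i$ is a sum of independent Bernoulli variables. The only difference between the two kernels lies in how the success probabilities depend on $x$: for $\kappa$ the probability $\alpha_i(x)$ depends on the full configuration through the neighbour fractions $a_i(x)$, whereas for $\tilde\kappa$ it takes only the two values $\tilde\alpha(0,I(x))$ and $\tilde\alpha(1,I(x))$, both depending on $x$ solely through $I(x)$. This is precisely what makes the parametrisation $g=\psi\circ I$ closed under pullback by $\tilde\kappa$, and what separates the Poisson--Binomial law arising for $\kappa$ from the simpler two-binomial convolution arising for $\tilde\kappa$.

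The core step is therefore the pullback for $\tilde\kappa$. I would write $(\tilde\kappa g)(x)=\sum_{y\in E}\tilde\kappa(x,y)\psi_1(I(y))=\EE\,\psi_1(I(Y))$, where $Y$ has law $\tilde\kappa(x,\cdot)$. Since the $N-I(x)$ particles with $x_i=0$ each reach state $1$ independently with probability $\tilde\alpha(0,I(x))$, and the $I(x)$ particles with $x_i=1$ each remain in state $1$ independently with probability $\tilde\alpha(1,I(x))$, the count splits as $I(Y)=Z_0(x)+Z_1(x)$ with the two stated independent binomials. Hence $\psi_2(x)=\EE\,\psi_1(Z(x))$ depends on $x$ only through $I(x)$, so $\tilde\kappa g\param{\bs{\psi}_2}$ and the family is closed; the explicit double sum is then obtained by convolving the two binomial mass functions, after which matching it with Equation (32) of \cite{ju2021sequential} is a matter of relabelling.

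Fusion and initialisation are immediate. If $g_i=\psi_i\circ I$, then $\prod_i g_i=\bigl(\prod_i\psi_i\bigr)\circ I$, and a pointwise product of functions on $\{0,\ldots,N\}$ is the Hadamard product of the associated vectors. For initialisation I would invoke \eqref{eq:hleaf}: since a leaf observation $v$ arises from $V\sim\mathrm{Bin}(I(x),\rho)$, the emission likelihood is the binomial mass function $\binom{I(x)}{v}\rho^v(1-\rho)^{I(x)-v}$, already a function of $I(x)$, which yields $\bs{\psi}_v$ as stated (the indicator merely handling the vacuous range $i<v$, where the binomial coefficient is irrelevant).

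For sampling under $\PP^\circ$ I would use Definition \ref{defn: guided process}, which gives $\kappa^\circ(x,y)\propto g(y)\kappa(x,y)=\bs{\psi}_{I(y)}\kappa(x,y)$, and factor this law through the statistic $j=I(y)$: summing over $\{y:I(y)=j\}$ produces the marginal $p(j)$ of step (a), with $\bs{\psi}_j$ pulled out of the inner sum, while dividing $\kappa^\circ$ by $p(j^\star)$ gives the conditional $p(y\mid j^\star)$ of step (b), in which the factor $\bs{\psi}_{j^\star}$ cancels. For the weight I would adopt the convention $g_{\pa(s),s}=\tilde\kappa g$, so that Theorem \ref{thm:lr_xstar_xcirc_tree} yields $w(x)=(\kappa g)(x)/(\tilde\kappa g)(x)$; computing $(\kappa g)(x)=\EE\,\psi(I(Y))$ with $Y$ of law $\kappa(x,\cdot)$ and each $Y_i$ independent $\mathrm{Bernoulli}(\alpha_i(x))$ identifies $I(Y)$ as $\mathrm{PoiBin}(\alpha_1(x),\ldots,\alpha_N(x))$, so $(\kappa g)(x)=\EE\,\bs{\psi}_{U(x)}$. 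The only genuine obstacle is the pullback: everything hinges on recognising that $\tilde\kappa$ was engineered so that $I(Y)$ depends on $x$ through $I(x)$ alone and reduces to a convolution of two binomials; once this is seen, the remaining items are bookkeeping and verification against the formulas of \cite{ju2021sequential}.
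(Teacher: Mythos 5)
Your proposal is correct and follows essentially the same route as the paper's proof: the pullback via the observation that under $\tilde\kappa$ the infected count splits into the convolution of $Z_0(x)\sim\mathrm{Bin}(N-I(x),\tilde\alpha(0,I(x)))$ and $Z_1(x)\sim\mathrm{Bin}(I(x),\tilde\alpha(1,I(x)))$, fusion and leaf initialisation as immediate bookkeeping, the guided sampler obtained by factoring $\kappa^\circ(x,y)\propto\kappa(x,y)\bs{\psi}_{I(y)}$ through the statistic $I(y)$ (the paper verifies $\sum_j p(y\mid j)p(j)=\kappa^\circ(x,y)$ directly, whereas you derive the marginal and conditional, which is the same computation read in the other direction), and the weight via $(\kappa g)(x)=\EE\,\bs{\psi}_{U(x)}$ with $U(x)\sim\mathrm{PoiBin}(\alpha_1(x),\ldots,\alpha_N(x))$. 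The only cosmetic difference is that the paper first identifies $Z(x)$ as Poisson--Binomial with two-valued success probabilities and then decomposes, while you split the particles into the two groups at the outset; the content is identical.
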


\begin{remark}
Note that when initialising from the  leaves that if $v$ is close to $N$, most $\bs{\psi}_i$ are zero. This sparsity propagates through the backward filtering, as seen from  the pullback and  fusion steps. 
\end{remark}
\begin{remark}
 Sampling from  $y\mid j^\star$ entails sampling from  independent, non-identically distributed Bernoulli variables, conditional on their sum (this is denoted the $\mbox{CondBer}$-distribution in \cite{ju2021sequential}.

\end{remark}

\begin{proof}
We first derive the expression for the pullback. We have 
\begin{align*}
(\tilde\kappa g)(x) &=
\sum_{y\in \{0,1\}^N} \tilde\kappa(x,y) \psi_1(I(y)) \\ &=\prod_{i=1}^N \tilde\alpha(x_i, I(x))^{y_i}\left(1-\tilde\alpha(x_i, I(x))\right)^{1-y_i} \psi_1(I(y))\\ & = \prod_{i=1}^N \tilde\alpha(0,I(x))^{y_i} (1-\tilde\alpha(0,I(x))^{1-y_i} \psi_1(I(y))
, 	\end{align*}	
Hence
\[ (\tilde\kappa g)(x)= \EE\, \psi_1(Z(x)) = \sum_{k=0}^N \psi_1(k) \PP(Z(x)=k)\]
where $Z(x)$ is the number of successes in $N$ independent Bernoulli trials, where the $i$-th Bernoulli trial has success probability $A_i:=\tilde\alpha(x_i, I(x))$. That is, $Z\sim PoisBin\left(\{A_i\}_{i=1}^N\right)$. 
Now note that $A_i$ can only takes values $\tilde\alpha(0,I(x))$ and $\tilde\alpha(1,I(x))$. Hence, we consider $I(x)$ Bernoulli random variables with success probability $\tilde\alpha(1,I(x))$, and $N-I(x)$ Bernoulli random variables with success probability $\tilde\alpha(0,I(x))$. Then indeed $Z(x)$ is distributed as the sum of $Z_0(x)$ and $Z_1(x)$. The final expression simply follows from the probability mass function for the convolution of two discrete random variables. 

The fusion operation follows directly. 

The expression for the initialisation from the leaves follows directly from $V \sim \mbox{Bin}(I(x),\rho)$. 

To sample the guided process,
 \[ \kappa^\circ(x,y) \propto \kappa(x,y) g(y) = \frac{\kappa(x,y) \psi(I(y))}{\sum_{z\in E} \kappa(x,z) \psi(I(z))} \]
Then note that (the summation is over $j\in \{0,\ldots, N\}$ and $k\in \{0,\ldots, N\}$ ), 
\begin{align*} & p(y)= \sum_j p(y\mid j) p(j)=\sum_j  \frac{\kappa(x,y) \ind\{I(y)=j\}}{\sum_{z\in E}\kappa(x,z) \ind\{I(z)=j\}} \frac{\sum_{z\colon I(z)=j} \kappa(x,z) \psi(j)}{\sum_k\sum_{z\colon I(z)=k} \kappa(x,z) \psi(k)}\\ &=\frac1{\sum_k\sum_{z\colon I(z)=k} \kappa(x,z) \psi(k)} \sum_j\frac{\kappa(x,y) \sum_{z\colon I(z)=j) }\kappa(x,z) \psi(j) \ind\{I(y)=j\}}{\sum_{z\in E}\kappa(x,z) \ind\{I(z)=j\}}\\ & = \frac{\kappa(x,y) \psi(I(y))}{\sum_{z} \kappa(x,z) \psi(I(z))} \sum_j\frac{\sum_{z\colon I(z)=j) }\kappa(x,z)  \ind\{I(y)=j\}}{\sum_{z\in E}\kappa(x,z) \ind\{I(z)=j\}}.
	\end{align*}
Note that the last term on the third line equals $1$, since
\[ \sum_j \ind\{I(y)=j\}  \frac{\sum_{z\colon I(z)=j) }\kappa(x,z)  }{\sum_{z\colon I(z)=j}\kappa(x,z) }=1.\]
Therefore, $p(y) =\kappa^\circ(x,y)$. 
 
 Finally, we have
 \begin{align*}
 	(\kappa g)(x) & = \sum_{y\in \{0,1\}^N} \kappa(x,y) g(y) \\ & =
 	 \sum_{y\in \{0,1\}^N} \prod_{i=1}^N \alpha_i(x)^{y_i} (1-\alpha_i(x))^{1-y_i} \psi(I(y)) \\ & =
 	 \sum_{k=0}^N  \sum_{y\in \{0,1\}^N} \ind_{\{\sum_{i=1}^N y_i=k\}}\prod_{i=1}^N \alpha_i(x)^{y_i} (1-\alpha_i(x))^{1-y_i} \psi\left(\sum_{i=1}^N y_i\right)\\ & =
 	 \sum_{k=0}^N \bs{\psi}_k \sum_{y\in \{0,1\}^N} \ind_{\{\sum_{i=1}^N y_i=k\}}\prod_{i=1}^N \alpha_i(x)^{y_i} (1-\alpha_i(x))^{1-y_i} \\ & = \sum_{k=0}^N \PP(U(x)=k) \bs{\psi}_k.
 \end{align*}

\end{proof}

We refer to \cite{ju2021sequential} for methods to efficiently perform the steps in Theorem \ref{thm:ju} and application within the context of controlled Sequential Monte Carlo (\cite{Heng2020}).

\section{Examples of guided processes for continuous edges}
\label{sec:examples_continuous}
\subsection{Stochastic differential equations}\label{ex:sde}

Assume a directed tree where on each edge $e=(s,t)$ not pointing to a leaf vertex, the Markov process $X$ is defined as the solution to the stochastic differential equation (SDE) \begin{equation}\label{eq:sde}
	\dd X_u = b(u,X_u) \dd u + \sigma(u,X_u) \dd W_u,\qquad u\in [0,\tau_e]
\end{equation} 
As transition densities are not known in closed form, computing the BIF is infeasible. However, for the SDE
\begin{equation}\label{eq:auxiliary_sde} \dd \tilde X_u = (B(u) \tilde X_u + \beta(u)) \dd u + \tilde\sigma(u) \dd W_u \end{equation} 
this is possible. Hence, suppose that on each edge $e$ we choose the triplet of maps  $(B_e, \beta_e, \tilde\sigma_e)$. If follows from the results in \cite{mider2021continuous} that for $u\in (0,\tau_e]$ we have 
$g_u(x) = \exp\left(c(u) + F(u)^\T x -\frac12 x^\prime H(u) x\right)$, for scalar-valued $c$, vector-valued $F$ and matrix-valued $H$. As such, we write $g_u \param{c(u),F(u),H(u)}$. 

\begin{theorem}\label{thm:sde}
Assume a directed tree with leaf observatrions $x_v \mid x_{\pa(v)} \sim N(L_v x_{\pa(v)}, \Sigma_v)$. On  each edge $e$, the process evolves according to \eqref{eq:sde}. 
\begin{enumerate} 
\item {\it Computing $\scr{G}$:}
\begin{itemize}
	\item {\it Pullback for $\tilde\kappa$:} if on an edge $e=(s,t)$ we are given $g_t \param{c(t),F(t),H(t)}$, then  for $u\in (0,\tau_e]$, $g_u \param{c(u),F(u),H(u)}$, where 
 $(c(u), F(u), H(u))$ are solved backwards from 
\begin{equation}
  \label{eq:backwardODE}
    \begin{split}
        \dd H(u) &= \left( -B(u)^\T H(u) - H(u)B(u)+H(u)\tilde{a}(u)H(u) \right)\dd u,\\
        \dd F(u) &= \left( -B(u)^\T F(u) + H(u)\tilde{a}(u)F(u) + H(u)\beta(u) \right)\dd u,\\
        \dd c(u) &= \left( \beta(u)^\T F(u) + \frac{1}{2}F(u)^\T \tilde{a}(u)F(u) - \frac{1}{2}\trace\left(H(u)\tilde{a}(u)\right) \right)\dd u,
    \end{split},
  \end{equation}
 subject to $(c(t), F(t), H(t))$. Here, $\tilde a=\tilde\sigma \tilde\sigma^\T$. 

	\item {\it Fusion:} if $g_i \param{c_i, F_i, H_i}$, then $\prod_i g_i \param{\sum_i c_i, \sum_i F_i, \sum_i H_i}$. 
	\item {\it Initialisation from leaves:} 
\begin{equation}\label{eq:sde_initleaves} g_{\pa(v),v} \param{ \log \phi(v; 0, \Sigma_v), v^\T \Sigma_v^{-1} L_v ,  L_v^\T \Sigma_v^{-1} L_v}.   \end{equation}
\end{itemize}
\item {\it Sampling under $\PP^\circ$:} on the edge $e=(s,t)$ the guided process evolves according to the SDE
\begin{equation}\label{eq:Xcirc_sde} \dd X^\circ_u = \left(b(u,X^\circ_u) + a(u,X^\circ_u) (F(u)-H(u) X^\circ_u) \right) \dd u + \sigma(u,X^\circ_u) \dd W_u,\qquad u\in [0,\tau_e],  \end{equation}
subject to $X^\circ_0=x$, with $x$ the state of the process at vertex $s$. 
\item {\it Computation of the weight:} $\log w_{s,t}(X^\circ)=\int_0^{\tau_e} \frac{({\cL}-\tilde{{\cL}})  g}{g}(u, X^\circ_u) \dd u$. 
\end{enumerate}
\end{theorem}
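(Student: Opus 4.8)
The plan is to read the theorem as the instantiation of the continuous-edge machinery of Section~\ref{subsec:continuous} and Proposition~\ref{prop:continuousguided} for the particular choice in which $\tilde\cL$ is the generator of the linear SDE~\eqref{eq:auxiliary_sde}. Once this identification is in place, the four items reduce to: (a) solving the auxiliary backward equation~\eqref{eq:kolm_tilde} within the Gaussian exponential family; (b) two elementary algebraic identities for fusion and leaf initialisation; (c) reading off the guided dynamics from the generator identity~\eqref{eq:infgen-guided}; and (d) specialising the weight formula of Proposition~\ref{prop:continuousguided}. The only genuine computation is (a).

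For the pullback I would substitute the ansatz $g_u(x)=\exp\bigl(c(u)+F(u)^\T x-\tfrac12 x^\T H(u)x\bigr)$ into~\eqref{eq:kolm_tilde} with $\tilde\cL f=(Bx+\beta)^\T\nabla f+\tfrac12\trace(\tilde a\,\nabla^2 f)$ and $\tilde a=\tilde\sigma\tilde\sigma^\T$. Using $\nabla g_u/g_u=F-Hx$ and $\nabla^2 g_u/g_u=(F-Hx)(F-Hx)^\T-H$, the equation $(\partial_u+\tilde\cL)g_u=0$ becomes, after dividing by $g_u$, a polynomial identity in $x$. Matching the quadratic, linear and constant parts of $x$ separately yields the three coupled ODEs in~\eqref{eq:backwardODE}: the quadratic part gives the matrix Riccati equation for $H$, the linear part the equation for $F$, and the constant part that for $c$; the terminal data $(c(t),F(t),H(t))$ are inherited from $g_t$. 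Fusion is immediate from the exponential form, since a product of such $g_i$ is again of the same form with parameters $(\sum_i c_i,\sum_i F_i,\sum_i H_i)$, exactly as in Theorem~\ref{thm:gaussianformulas}. Leaf initialisation follows by writing the Gaussian emission density $\phi(v;L_v x,\Sigma_v)$, expanding the quadratic form in the exponent, and collecting the constant, linear and quadratic coefficients in $x$, which produces~\eqref{eq:sde_initleaves}.

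For the guided dynamics I would apply~\eqref{eq:infgen-guided}: the standard computation $g^{-1}\cL(fg)-g^{-1}f\cL g=\cL f+(\nabla\log g)^\T a\,\nabla f$ with $a=\sigma\sigma^\T$ shows that $\cL^\circ$ is the generator of a diffusion with \emph{unchanged} diffusion coefficient $\sigma$ and drift augmented by $a(u,x)\nabla\log g_u(x)=a(u,x)(F(u)-H(u)x)$, which is precisely~\eqref{eq:Xcirc_sde}. For the weight I would start from Proposition~\ref{prop:continuousguided}, so that $\log w_{s,t}(X^\circ)=\int_0^{\tau_e}(\cA g/g)(u,X^\circ_u)\dd u$, and use that $g$ solves~\eqref{eq:kolm_tilde}, i.e.\ $\partial_u g=-\tilde\cL g$; since $\cA=\partial_u+\cL$ this gives $\cA g=(\cL-\tilde\cL)g$ and hence the stated integrand, exactly as anticipated in Section~\ref{subsec:choice_g}.

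The main obstacle is not the algebra but the analytic justification that Proposition~\ref{prop:continuousguided} genuinely applies in this setting: one must verify that $g_u$ is a \emph{good} function so that $Z^g$ is a true martingale (via Proposition~\ref{prop_goodfunction}/Corollary~\ref{cor:sufficient_localmartingale}), and that the matrix Riccati equation for $H$ admits a solution on all of $[0,\tau_e]$ keeping $g_u$ a well-defined, integrable potential (so that the normalising integrals in the change of measure remain finite). I would address the Riccati well-posedness by passing to a covariance-type variable, and exploit the linear-Gaussian structure of~\eqref{eq:auxiliary_sde} together with the stated boundedness assumptions to check the integrability and martingale conditions; the ODE derivation itself can otherwise be imported from~\cite{mider2021continuous}.
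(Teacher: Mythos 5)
Your route is, in substance, the paper's own: the paper proves Theorem~\ref{thm:sde} in one line, ``This is a consequence of Theorem 2.5 in \cite{mider2021continuous}'', and what you propose is precisely the content of that citation reconstructed from ingredients already in the paper --- the Gaussian ansatz substituted into \eqref{eq:kolm_tilde} with coefficient matching for the pullback, the generator identity \eqref{eq:infgen-guided} giving the drift $b+a(F-Hx)$ with unchanged diffusion coefficient, and Proposition~\ref{prop:continuousguided} combined with $\cA g=(\partial_u+\cL)g=(\cL-\tilde\cL)g$ for the weight, exactly as Section~\ref{subsec:choice_g} anticipates. Your closing caveats (goodness of $g$ via Proposition~\ref{prop_goodfunction}, well-posedness of the Riccati equation on $[0,\tau_e]$) correctly locate where the real analytic work sits; in the paper this is outsourced to the standing assumptions of the cited reference. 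As an expansion of the citation, the plan is sound and more self-contained than the paper's proof.

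One concrete caution, however: you assert that the coefficient matching ``yields the three coupled ODEs in~\eqref{eq:backwardODE}'', but if you actually carry it out, the constant part gives
\[
\dd c(u) = \left( -\beta(u)^\T F(u) - \tfrac12 F(u)^\T \tilde a(u) F(u) + \tfrac12 \trace\left(H(u)\tilde a(u)\right) \right)\dd u,
\]
the \emph{negative} of the displayed $c$-equation, while the $H$- and $F$-equations come out exactly as displayed. A sanity check with $\tilde X$ a scalar Brownian motion ($B=0$, $\beta=0$, $\tilde a=1$), where $g_u(x)=\phi(v;x,\tau_e-u)$ so that $H(u)=(\tau_e-u)^{-1}$, $F(u)=v(\tau_e-u)^{-1}$ and $c(u)=-\tfrac12\log\left(2\pi(\tau_e-u)\right)-v^2/\left(2(\tau_e-u)\right)$, confirms $\dot H=H\tilde a H$ and $\dot F=H\tilde a F$ as displayed, but $\dot c=\tfrac12\trace(H\tilde a)-\tfrac12 F^\T\tilde a F$, i.e.\ the opposite sign. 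In other words, the displayed $c$-equation is in the time-reversed (``time-to-go'') convention while the $H$- and $F$-equations are in forward time, so under your single, correct convention the computation does not reproduce \eqref{eq:backwardODE} verbatim. This is not a harmless constant: the paper itself stresses after Theorem~\ref{thm:gaussianformulas} that the update of $c$ cannot be ignored when parameters are estimated, and $c$ enters the estimator through $g_r(x_r)$. Your write-up should therefore either flag the sign discrepancy or reconcile the conventions explicitly, rather than asserting the match without performing it.
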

\begin{proof}
This is a consequence of Theorem 2.5 in \cite{mider2021continuous}.
\end{proof}

\begin{remark}
The weight is obtained by integrating 
\[  \frac{({\cL}-\tilde{{\cL}})  g}{g}  = \sum_i (b_i-\tilde{b}_i) \frac{\partial_i  g}{g} + \frac12 \sum_{i,j} (a_{ij}-\tilde{a}_{ij}) \frac{\partial^2_{ij}  g}{g} \]
 over $[0,\tau_e]$. Here, 
 we have omitted arguments $(u,X^\circ_u)$ from the functions and $\partial_i$ denotes the partial derivative with respect to $x_i$. If we set $r_i = (\partial_i  g)/g$, then $(\partial_{ij}  g)/  g = \partial_j  r_i +  r_i  r_j$ and therefore
\[ \frac{({\cL}-\tilde{{\cL}})  g}{g} = \sum_i (b_i-\tilde{b}_i) r_i + \frac12 \sum_{i,j} (a_{ij}-\tilde{a}_{ij}) \left( \partial_j  r_i +  r_i  r_j\right). \]
This expression coincides with the expression for the likelihood given in Proposition 1 of \cite{schauer2017guided}, but the proof given here is much shorter. 
\end{remark}
\begin{remark}
Note that the solving the ODEs in the pullback operation scales cubically in the dimension of the diffusion process. Improved scaling can be obtained by introducing sparsity in $B$ and/or $\tilde\sigma$. 
\end{remark}


\subsection{Continuous time Markov chains  with  countable state space}\label{subsec:ctmc}

Let $X$ denote a continuous time Markov process taking values in  the countable  set $E$.  Let $\cQ = (q(x,y),\, x, y \in E)$ be a $Q$-matrix, i.e.\ its elements $q(x,y)$ satisfy $q(x,y) \ge 0$ for all $x\neq y$ and $\sum_y q(x,y) =0$. Define $c(x)=-q(x,x)$. If $\sup_x c(x)<\infty$, then $\cQ$ uniquely defines a continuous time Markov chain with values in $E$. Its transition probabilities $p_u(x,y)$ are continuously differentiable in $u$ for all $x$ and $y$ and satisfy Kolmogorov's backward equations:
\[ \frac{\partial}{\partial u} p_u(x,y) = \sum_x q(x,z) p_u(z,y) \]
(cf.\ \cite{Liggett2010}, Chapter 2, in particular Corollary 2.34). The infinitesimal generator  acts upon functions $f\colon E \to \RR$ by
\begin{equation}\label{eq:genCTTMchain} \scr{L} f(x) = \sum_{y \in E} q(x,y) \left(f(y) - f(x)\right) = \sum_{y\in E} q(x,y) f(y),\qquad u\in [0,\infty), \quad x \in E. \end{equation}
In the specific setting where $E$ is finite with states labeled by $1, 2, \ldots, R$, $\cQ$ is an $R\times R$-matrix. Then, by evaluating $\scr{L}f(x)$ for each $x\in E$ we can identify  $\scr{L} f$ by $\cQ \bs{f}$,
where $\bs{f}$ is the column vector with $i$-th element $\bs{f}_i =f(i)$. 
\medskip

Next, we explain how the law of the process $X$ changes under the $h$-transform. Thus, 
suppose  $h\colon [0,\infty) \times E \to \RR$ is specified. This function can be used to define a change of measure as detailed in Section \ref{subsec:continuous}. It induces a guided process $X^\circ$ for which the generator of the space-time process can be derived by combining Equation \eqref{eq:infgen-guided} and  the preceding display. Some simple calculations reveal that 
 the generator of $X^\circ$ acts upon functions $f\colon [0,\infty) \times  E \to \RR$ as
\begin{equation}\label{matrixcirc}
(\cL^\circ_u f)(x)= \sum_{y \in E} q(x,y) (f(y)-f(x))  \frac{h(u, y)}{h(u,x)}.
\end{equation}
Therefore, comparing with \eqref{eq:genCTTMchain}, we see that the guided process $X^\circ$ is a  time-inhomogenuous Markov process $X^\circ$ with time dependent generator matrix $\cQ_u^\circ = (q_u^\circ(x,y),\, x, y \in E)$, where
\[ q_u^\circ(x,y) =  q(x,y)  \frac{h(u, y)}{h(u,x)} \quad \text{if} \quad y\neq x\] 
and $q_u^\circ(x,x) = 1- \sum_{y\neq x} q^\circ_u(x,y)$. 

For finite-state Markov chains transition densities can be computed via $(p_u(x,y),\, x, y \in E) = e^{u \cQ}$ and this would immediately give $h$, implying no need for constructing a guided process via an approximation to Doob's $h$-transform. While strictly speaking this is true, in case $|E|$ is large, numerically approximating  the matrix exponential can be computationally demanding. In that case, by simplifying the dynamics of the Markov process, one may choose an approximating continuous time Markov process $\tilde X$ (on $E$) where $\tilde \cQ$ is block diagonal, as this simplifies evaluation of matrix exponentials.
 
Hence, let  $\tilde X$ be a  continuous time Markov process on $E$ with $Q$-matrix $\tilde\cQ = (\tilde q(x,y),\, x, y \in E)$. Then the guided process has $Q$-matrix $\mathcal{Q}^\circ=(q^\circ(x,y),\, x, y \in E)$, where  for $y\neq x$, $q_u^\circ(x,y) =  q(x,y) g(u, y)/g(u,x)$. 
If the guided process is simulated on an edge $e$, its log-weight equals 
\[ \int_0^{\tau_e} \frac{({\cL}-\tilde{{\cL}}) g}{g} (u, X^\circ_u) \dd u =  \int_0^{\tau_e} \frac{ \sum_{y \in E} (q(X^\circ_{u},y)-\tilde{q}(X^\circ_{u},y)) g(u,y)}{g(u,X^\circ_u)} \dd u.\]

 We summarise the key ingredients of BFFG in case the state space is given by  $E=\{1,\ldots, R\}$. Suppose $g \colon E \to \RR$.  Then we can idenfity $g \param{\bs{g}}$, with $\bs{g} \in \RR^R$ given by $\bs{g}_i=g(i)$.

  \begin{theorem} Assume a directed tree, where each  edge not ending in a leave node is a continuous edge. On any such edge $e=(s,t)$, the process evolves as a continuous-time Markov chain  with generator matrix $\cQ$ over time interval $[0,\tau_e]$. 
\begin{enumerate}
\item {\it Computing $\scr{G}$:}
\begin{itemize}
	\item {\it Pullback for $\tilde\kappa$:} if $g_t \param{\bs{g}_t}$, then for $u\in (0,\tau_e]$, $g_u \param{e^{\tilde \cQ(\tau_e-u)}\bs{g}_t}$. 
		\item {\it Fusion:} if $g_i  \param{\bs{g}_i}$, $i = 1, \dots, k$, then
\[  \prod_{i=1}^k g_i   \param{\bigcirc_{i=1}^k \bs{g}_i},\] where $\bigcirc$ denotes the Hadamard (entrywise) product.

	\item {\it Initialisation from leaves:} Upon observing state $i$ at a leaf $v$ set  $g_{\pa(v),v} \param{\bs{e}_i}$, where $\bs{e}_i$ is the $i$th unit vector in $\RR^R$.

\end{itemize}
\item {\it Sampling under $\PP^\circ$:}  on the edge $e$ the guided process has  time dependent generator matrix $\cQ_u^\circ = (q_u^\circ(x,y),\, x, y \in E)$, where
\[ q_u^\circ(x,y) =  q(x,y)  \frac{g(u, y)}{g(u,x)} \quad \text{if} \quad y\neq x\] 
and $q_u^\circ(x,x) = 1- \sum_{y\neq x} q^\circ_u(x,y)$. 

\item {\it Computation of the weight:}
\[ \log w_{s,t}(X^\circ) = \int_0^{\tau_e} \frac{ \sum_{y \in E} (q(X^\circ_{u},y)-\tilde{q}(X^\circ_{u},y)) g(u,y)}{g(u,X^\circ_u)} \dd u.\]
\end{enumerate}
\end{theorem}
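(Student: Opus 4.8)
The plan is to recognise this theorem as the instantiation of the general continuous-edge machinery of Section~\ref{subsec:continuous} and Proposition~\ref{prop:continuousguided} to the finite-state case, where the infinitesimal generator \eqref{eq:genCTTMchain} is the matrix $\cQ$ and every $g_u$ is encoded by the vector $\bs{g}_u\in\RR^R$ with $(\bs{g}_u)_i=g(u,i)$. Under this identification each of the four items reduces to a short verification, and the only genuinely computational ingredient is the pullback.

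For the pullback I would start from the defining relation \eqref{eq:kolm_tilde}, namely that $g_u$ solves $(\tilde\cL+\partial_u)g_u=0$ subject to the terminal value $g_{\tau_e}=g_t$. Since $\tilde\cL$ acts on functions $E\to\RR$ as multiplication by $\tilde\cQ$ (by \eqref{eq:genCTTMchain} applied to $\tilde X$), the equation becomes the linear matrix ODE $\partial_u\bs{g}_u=-\tilde\cQ\,\bs{g}_u$ with terminal condition $\bs{g}_{\tau_e}=\bs{g}_t$. Integrating backwards in time gives $\bs{g}_u=e^{\tilde\cQ(\tau_e-u)}\bs{g}_t$, which I would confirm by differentiating: $\partial_u e^{\tilde\cQ(\tau_e-u)}\bs{g}_t=-\tilde\cQ e^{\tilde\cQ(\tau_e-u)}\bs{g}_t$ and $e^{\tilde\cQ\cdot 0}\bs{g}_t=\bs{g}_t$. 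This is exactly the stated parametrisation. Fusion is then immediate, since pointwise multiplication of the functions $g_i$ corresponds to the entrywise product of the vectors $\bs{g}_i$; and the initialisation from leaves is the finite-state reading of \eqref{eq:hleaf}: observing state $i$ makes the leaf potential the indicator of $\{x=i\}$, i.e.\ the unit vector $\bs{e}_i$.

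The sampling and weight items follow by substituting the CTMC generator into the general formulas already derived. For sampling I would simply read off \eqref{matrixcirc}: inserting $\cL f(x)=\sum_y q(x,y)(f(y)-f(x))$ into \eqref{eq:infgen-guided} and cancelling the cross terms yields $\cL^\circ_u f(x)=\sum_y q(x,y)\frac{g(u,y)}{g(u,x)}(f(y)-f(x))$, so $X^\circ$ is the time-inhomogeneous chain with the stated off-diagonal rates $q_u^\circ(x,y)=q(x,y)g(u,y)/g(u,x)$. For the weight I would invoke Proposition~\ref{prop:continuousguided}, which gives $\log w_{s,t}(X^\circ)=\int_0^{\tau_e}\frac{\cA g}{g}(u,X^\circ_u)\dd u$; using $\cA=\partial_u+\cL$ together with $\partial_u g=-\tilde\cL g$ from \eqref{eq:kolm_tilde} gives $\cA g=(\cL-\tilde\cL)g$, and since both $\cQ$ and $\tilde\cQ$ have vanishing row sums the terms proportional to $g(u,x)$ cancel, leaving $(\cL-\tilde\cL)g(x)=\sum_y(q(x,y)-\tilde q(x,y))g(u,y)$ and hence the displayed integrand.

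Because so much has been prepared in advance, I do not expect a serious obstacle; the proof is largely bookkeeping. The one place that warrants care is the pullback: one must keep the time orientation straight, solving the backward equation \eqref{eq:kolm_tilde} from the terminal data at $\tau_e$ so that the exponent carries $\tau_e-u$ rather than $u$, and one should note that the boundedness assumption $\sup_x c(x)<\infty$ guarantees that $\tilde\cQ$ generates a genuine conservative semigroup, so that the matrix exponential is the correct and well-defined solution operator.
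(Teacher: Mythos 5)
Your proposal is correct and follows essentially the same route as the paper, which states this theorem as a summary of the derivations immediately preceding it in Section~\ref{subsec:ctmc}: the pullback as the backward equation $\partial_u\bs{g}_u=-\tilde\cQ\bs{g}_u$ solved by the matrix exponential, the guided rates read off from \eqref{eq:infgen-guided} (i.e.\ \eqref{matrixcirc} with $g$ in place of $h$), and the weight from Proposition~\ref{prop:continuousguided} with $\cA g=(\cL-\tilde\cL)g$ and the row-sum-zero cancellation. Your added care about the time orientation in the exponent and the well-posedness under $\sup_x c(x)<\infty$ is consistent with, and slightly more explicit than, the paper's treatment.
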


\subsection{Model by \cite{stoltz2021bayesian}}\label{subsec:stoltz}

Consider $N$ diploid individuals (assume red/green, or, more typically, alleles $a$ and $A$). At time $i$, let $X_i$ denote the number of red alleles. The classical Wright-Fisher model with mutation is given by
\[ X_{i+1} \mid X_i \sim Bin(2N, (1-u)X_i/(2N) + v(1-X_i/(2N)), \]
where $u$ is probability of mutating from red to green and $v$ is the probability of mutating from green to red. 
A diffusion approximation (using a rescaling of time) gives
\begin{equation}\label{eq:difflimit} \dd X_u = \left(\beta_1(1-X_u) + \beta_2 X_u\right) \dd t + \sqrt{X_u(1-X_u)} \dd W_u, \end{equation}
where $\beta_1= 2Nu$, $\beta_2=2Nv$.
Denote $b(x) = \beta_1(1-x) + \beta_2 x$ and $a(x)=x(1-x)$. 
\cite{stoltz2021bayesian} consider a stochastic process evolving on a tree, where along branches the process evolves according to \eqref{eq:difflimit} and at the leaves one observes $v_i \sim Bin(n_i, x_i)$.

\medskip

\cite{stoltz2021bayesian} explain an efficient procedure to compute the backward information filter. The basic idea is to expand $g$ at any time instance as a linear combination of shifted Chebyshev polynomials. Hence, 
\begin{equation}\label{eq:gseries}
	g(x) = \sum_{k=0}^K \lambda_k \psi_k(x), \qquad x\in [0,1],
\end{equation} 
where $\psi_k(x) = T_k(2x-1)$ with $\{T_k\}$ the Chebyshev polynomials of the first kind. 
Define 
$\scr{S}_k = \mbox{span}(\psi_0,\ldots, \psi_K)$. 
If $g$ is expanded as in \eqref{eq:gseries}, then we write $g\param{\lambda_0,\ldots, \lambda_K}$
Alternatively, we can parametrise by the values of $g$ in Che\-by\-shev-\-Lo\-bat\-to points on $[0,1]$, which we denote by $x_0, x_1, \ldots, x_K$.  In this case we write $ g \parambar{g(x_0), g(x_1),\ldots, g(x_K)}$. If $g \param{\lambda_0,\ldots, \lambda_K}$ is given, it is trivial to obtain $\parambar{g(x_0), g(x_1),\ldots, g(x_K)}$. The reverse operation can be done  in $O(K\log K)$-time using the FFT (details are given in in the appendix of \cite{stoltz2021bayesian}).
With these facts, we can fully describe the key steps for backwards filtering.

\begin{theorem} Suppose $X$ is a a process on a directed tree evolving according to the SDE \eqref{eq:difflimit} over each edge $e$ for time $[0,\tau_e]$.
\begin{enumerate}
\item {\it Computing $\scr{G}$:}
\begin{itemize}
	\item {\it Pullback for $\tilde\kappa$:} suppose over the edge $(s,t)$ the process $X$ defined by \eqref{eq:difflimit} evolves for time $u\in [0,\tau_e]$. If $g_t\param{\lambda_0,\ldots, \lambda_K}$, then 
\[ g_u \param{\lambda_0(u),\ldots, \lambda_K(u)},\qquad u\in (0,\tau_e],\]
where  $\bs{\lambda}(u)=[\lambda_0(u), \ldots, \lambda_K(u)]$, $u\in (0,\tau_e]$ satisfies the ordinary differential equation
\[ \partial_t \bs{\lambda}(u)+ Q \bs{\lambda}(u) = 0, \qquad \bs{\lambda}(\tau_e)=[\lambda_0,\ldots, \lambda_K].\]
The matrix $Q$ is upper-triangular and determined by \eqref{eq:defQ}. 
	\item {\it Fusion:} if $g_i \parambar{g_i(x_0),\ldots, g_i(x_K)}$, then 
\[ \prod_i g_i \parambar{\prod_i g_i(x_0),\ldots,\prod_i g_i(x_K)}. \]
	\item {\it Initialisation from leaves:} Upon observing at a leaf $v \sim Bin(n, x)$ set $g\parambar{o(x_0), o(x_1),\ldots, o(x_K)}$ with $o(x)={n \choose v} x^v (1-x)^{n-v}$. 

\end{itemize}
\item {\it Sampling under $\PP^\circ$:} not considered in \cite{stoltz2021bayesian}.
\item {\it Computation of the weight:} not considered in \cite{stoltz2021bayesian}.

\end{enumerate}
\end{theorem}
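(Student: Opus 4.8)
The plan is to realise the Wright--Fisher diffusion as an instance of the spectral (truncated-series) construction of $g$ described in Section~\ref{subsec:choice_g}, taking $\{\psi_k\}$ to be the shifted Chebyshev polynomials $\psi_k(x)=T_k(2x-1)$ and $\scr{S}_K=\mathrm{span}(\psi_0,\ldots,\psi_K)$. The three claims to verify are the pullback ODE, the fusion rule, and the leaf initialisation; the sampling and weight items are explicitly excluded.

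The key structural observation, which drives the pullback step, is that the generator $\cL f = b f' + \tfrac12 a f''$ with affine drift $b(x)=\beta_1(1-x)+\beta_2 x$ and quadratic diffusion coefficient $a(x)=x(1-x)$ preserves polynomial degree: since $\deg b=1$ and $\deg a=2$, both $b f'$ and $\tfrac12 a f''$ have degree at most $\deg f$, so $\cL(\scr{S}_K)\subseteq \scr{S}_K$. First I would record this and conclude that $P_K\cL = \cL$ on $\scr{S}_K$, so that the projected dynamics of Section~\ref{subsec:choice_g} are in fact exact on the truncation space. Writing $\cL\psi_k = \sum_{j\le k} Q_{jk}\psi_j$, degree preservation forces $Q_{jk}=0$ for $j>k$, i.e.\ $Q$ is upper-triangular, and its entries are obtained by expanding $\cL\psi_k$ back in the Chebyshev basis. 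Substituting the ansatz $g_u=\sum_k \lambda_k(u)\psi_k$ into the Kolmogorov backward equation \eqref{eq:kolmbackward} for $\cA=\partial_u+\cL$ then gives, coefficient by coefficient, $\partial_u\bs{\lambda}(u)+Q\bs{\lambda}(u)=0$ with terminal value $\bs{\lambda}(\tau_e)=[\lambda_0,\ldots,\lambda_K]$, solved backwards.

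For fusion I would use that \eqref{eq:split_tochilds_g} is pointwise multiplication and that, in the nodal parametrisation $\parambar{g(x_0),\ldots,g(x_K)}$ at the Chebyshev--Lobatto points, the value of a product is the product of values, $(\prod_i g_i)(x_j)=\prod_i g_i(x_j)$, which is the asserted entrywise product (the resulting nodal values determine the unique degree-$K$ interpolant, so fusion silently composes with reprojection onto $\scr{S}_K$). For the leaf initialisation I would note that, by \eqref{eq:hleaf}, the message from a leaf with observation $v\sim\mathrm{Bin}(n,x)$ is the emission density $o(x)=\binom{n}{v}x^v(1-x)^{n-v}$ in the parent state $x$, a polynomial that I record by its nodal values $o(x_0),\ldots,o(x_K)$. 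Throughout, passing between the coefficient parametrisation $\param{\cdot}$ and the nodal parametrisation $\parambar{\cdot}$ is the invertible evaluation/interpolation map, computable by FFT in $O(K\log K)$ as cited.

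The main obstacle is the explicit determination of $Q$: while the degree-preservation argument cleanly yields upper-triangularity, computing the actual entries $Q_{jk}$ requires re-expanding $b\psi_k'$ and $a\psi_k''$ in the shifted Chebyshev basis via the differentiation identity and three-term recurrence for $\{T_k\}$, which is where the bookkeeping concentrates. Everything else reduces to elementary interpolation facts together with the backward information filter already established in Section~\ref{sec:BIF}.
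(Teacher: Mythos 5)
Your proposal is correct and follows essentially the same route as the paper's proof: substituting the ansatz $g_u=\sum_k\lambda_k(u)\psi_k$ into the Kolmogorov backward equation, observing that $b\psi_k'+\tfrac12 a\psi_k''$ lies in $\scr{S}_k$ (your degree-counting argument, $\deg b=1$, $\deg a=2$, is exactly the justification behind the paper's ``by choice of the drift and diffusivity'' step and the upper-triangularity of $Q$), and concluding via linear independence of the $\psi_k$. Your extra remarks---that the projected dynamics are exact on $\scr{S}_K$, and that fusion in the nodal parametrisation implicitly reprojects the degree-$2K$ product onto its degree-$K$ interpolant---go slightly beyond the paper, which leaves the fusion and initialisation items unproved as immediate, but they are accurate and change nothing in substance.
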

\begin{proof}
	
The pullback operation consists of  solving the Kolmogorov backward equation
\begin{equation}\label{eq:kolmbackw}
	\scr{L} g_u(x) +\partial_u g_u(x)=0,\qquad g_{\tau_e} = g_T, . 
\end{equation} 
where $u\in (0,\tau_e]$. 
The ansatz is that 
$g_u(x)= \sum_{k=0}^K \lambda_k(u) \psi_k(x)$. Plugging this expression into \eqref{eq:kolmbackw} gives
\[ \sum_{k=0}^K \left[ \partial_u \lambda_k(u) \psi_k(x) + b(x)\lambda_k(u) \frac{\partial \psi_k(x)}{\partial x} + \frac12 a(x) \lambda_k(u) \frac{\partial^2 \psi_k(x)}{\partial x^2} \right]=0. \]
By choice of the drift and diffusivity in \eqref{eq:difflimit},  for each $k$, the map $x\mapsto b(x)\frac{\partial \psi_k(x)}{\partial x} + \frac12 a(x)  \frac{\partial^2 \psi_k(x)}{\partial x^2}$ is in $\scr{S}_k$. Therefore,  there exist coefficients $Q_{ik}$, $0\le i \le K$ (with $Q_{ik}=0$ for $i>k$) such that
\begin{equation}\label{eq:defQ}
	b(x)\frac{\partial \psi_k(x)}{\partial x} + \frac12 a(x)  \frac{\partial^2 \psi_k(x)}{\partial x^2} = \sum_{i=0}^K Q_{ik} \psi_i(x).
\end{equation}   
This implies 
\[ \sum_{k=0}^K \partial_u\lambda_k(u) \psi_k(x) + \sum_{i=0}^K \sum_{k=0}^K  \lambda_k(u) Q_{ik} \psi_i(x)=0\]
which can be rewritten to 
\[ \sum_{k=0}^K \left( \partial_u\lambda_k(u) + (Q \bs{\lambda}(u))_k \right)\psi_k(x)=0. \]
The result follows by linear independence of the basis functions  $\{\psi_k\}_{k=0}^N$. 

\end{proof}

\section{Extension to a Directed Acyclic Graph}\label{sec:extension_to_DAG}

In this section we extend our approach from a directed tree to a true DAG. Whereas on the former each vertex has a unique parent vertex, on a DAG a vertex can have multiple parent vertices.
Note that on a DAG, contrary to a directed tree, conditioning on the values at leaf-vertices changes the dependency structure (there is a conditional process $X^\star$ defined on the vertex set, but its dependency graph is different from that of $X$, in particular there is
no meaningful transition $\kappa^\star_{\pa(s), s}$ for all $s$). Also, there is no BIF-type recursion that enables computing the likelihood $h_r(x_r)$.   We retain the definition of the guided process as in Definition \ref{defn: guided process}. In particular, unlike the conditioned process, the guided process  has the same dependency structure as the (unconditional) forward process. Crucially this means a sampler for the guided process has a similar complexity as a sampler for  $X$, predisposing guided process for sampling based inference. While this enhances automatic translation of sampling unconditional to conditional processes, it need not necessarily lead to computational gains in specific examples when compared to approaches that break the dependency structure. 

On a DAG, with not much effort we can get a similar statement as Theorem \ref{thm:lr_xstar_xcirc_tree} by redefining the denominator of the  weights for $s\in \cS$. 

\begin{theorem}\label{thm:lr_xstar_xcirc_dag}
Theorem \ref{thm:lr_xstar_xcirc_tree} applies to a DAG as well if 
the weights are redefined by
\begin{equation}\label{eq:dagweights}
w_{\pa(s), s}(x) =
\begin{cases}
\int g_s(y)\kappa_{\pa(s), s}(x,\dd y)   \Big/ \prod_{u \in \pa(s)} g_{u, s}(x_u)    & \text{if} \quad s\in \cS\\
h_{\pa(v) , v}(x) \Big/ g_{\pa(v), v}(x) &\text{if} \quad s\in \cV. 
\end{cases}
\end{equation} 
\end{theorem}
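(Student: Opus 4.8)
The plan is to follow the proof of Theorem~\ref{thm:lr_xstar_xcirc_tree} essentially verbatim, reading $\pa(s)$ as the (possibly multi-element) parent set of $s$, $X_{\pa(s)}$ as the joint state of those parents, and $\kappa_{\pa(s),s}(x,\cdot)$ as the transition kernel from the joint parent configuration to $X_s$. The only structural change relative to the tree is that a non-leaf vertex may now have several parents, and the redefinition \eqref{eq:dagweights} is engineered precisely so that the cancellation argument underlying \eqref{eq:cancellation} survives this change. Since on a DAG there is no BIF recursion, I would first fix the meaning of the symbol $h_r(x_r)$ appearing in Theorem~\ref{thm:lr_xstar_xcirc_tree}: I define it as the evidence $h_r(x_r):=\E\bigl[\prod_{v\in\cV} k_{\pa(v),v}(X_{\pa(v)},x_v)\bigr]$, the marginal density of $X_\cV$ at $x_\cV$.

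With this convention, part~(1), i.e.\ \eqref{eq:evidence_}, is just Bayes' rule for the DAG factorisation and needs no guiding machinery. Concretely, the joint density of $(X_\cS,X_\cV)$ factorises over vertices; because leaves are sinks, the factors attached to $\cS\cup\{r\}$ integrate to the marginal law $\P$ of $X_\cS$, while the leaf factors contribute $\prod_{v\in\cV} k_{\pa(v),v}(X_{\pa(v)},x_v)$. Conditioning on $\{X_\cV=x_\cV\}$ and normalising by $h_r(x_r)$ then yields \eqref{eq:evidence_} directly. This replaces the tree argument, where \eqref{eq:evidence_} was read off by evaluating the weights at $g=h$; that route is unavailable here, and this is the one genuinely new ingredient.

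For part~(2) I would reproduce the tree computation. Since the guided process retains the dependency graph of $X$ (Definition~\ref{defn: guided process}, now with joint parents), its likelihood ratio is the product of per-vertex transition ratios,
\[
\frac{\dd\P^\circ}{\dd\P}(X_\cS)=\prod_{s\in\cS}\frac{g_s(X_s)}{\int g_s(y)\,\kappa_{\pa(s),s}(X_{\pa(s)},\dd y)}.
\]
Substituting the denominator from \eqref{eq:dagweights}, namely $\int g_s(y)\kappa_{\pa(s),s}(X_{\pa(s)},\dd y)=w_{\pa(s),s}(X_{\pa(s)})\prod_{u\in\pa(s)}g_{u,s}(X_u)$, factoring out the weights, and expanding each $g_s$ by fusion \eqref{eq:split_tochilds_g}, I would be left with the same ratio as in \eqref{eq:cancellation}: the numerator $\prod_{s\in\cS}\prod_{t\in\ch(s)}g_{s,t}(X_s)$ is a product over all edges with source $\neq r$, while the denominator $\prod_{s\in\cS}\prod_{u\in\pa(s)}g_{u,s}(X_u)$ is now a genuine product over all edges with target not a leaf. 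These are exactly the two edge sets from the tree proof, so the cancellation telescopes identically to $\prod_{v\in\cV}g_{\pa(v),v}(X_{\pa(v)})/g_r(x_r)$, leaving \eqref{eq:Pcirc-P} unchanged.

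Finally I would combine the two parts exactly as on the tree: forming $\dd\P^\star/\dd\P^\circ=(\dd\P^\star/\dd\P)(\dd\P^\circ/\dd\P)^{-1}$ from \eqref{eq:evidence_} and \eqref{eq:Pcirc-P}, and recognising via \eqref{eq:hleaf} that $k_{\pa(v),v}(X_{\pa(v)},x_v)/g_{\pa(v),v}(X_{\pa(v)})$ is precisely the leaf weight $w_{\pa(v),v}$, yields the claimed identity for $\E^\star f(X_\cS)$. The main obstacle is conceptual rather than computational: one must recognise that the redefined denominator $\prod_{u\in\pa(s)}g_{u,s}$ turns the weight denominators into a bona fide product over incoming edges, which is the single property the telescoping in \eqref{eq:cancellation} relies on, and one must supply the direct Bayes derivation of \eqref{eq:evidence_} to compensate for the absent BIF recursion.
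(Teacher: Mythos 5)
Your proposal is correct and follows essentially the same route as the paper's proof: the paper likewise obtains \eqref{eq:evidence_} directly from Bayes' theorem (since the $h$-weight argument from the tree is unavailable on a DAG) and then repeats the tree computation with \eqref{eq:cancellation} modified so that the denominator becomes $\prod_{s\in\cS}\prod_{u\in\pa(s)} g_{u,s}(X_u)$, telescoping exactly as you describe. Your additional care in fixing the meaning of $h_r(x_r)$ as the evidence and in identifying the two edge sets in the cancellation is a welcome elaboration of steps the paper leaves implicit, but not a different argument.
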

Note the change in weights is only in the denominator for $s\in \cS$. 
\begin{proof}
The expression for $(\dd \PP^\star/\dd \PP)(X_\cS)$ now follows from Bayes' theorem. 
The remainder of the proof is the same as for Theorem \ref{thm:lr_xstar_xcirc_tree}, except that \eqref{eq:cancellation} takes the form
\[ \frac{ \prod_{s\in\cS}\prod_{t\in \ch(s)} g_{s, t}(X_s) }{ \prod_{s\in\cS}\prod_{u \in \pa(s)} g_{u, s}(X_u)} = \frac{\prod_{v\in \cV} g_{\pa(v), v}(X_{\pa(v)}) }{g_0(x_0)}. \]
\end{proof}

\begin{figure}
\begin{center}
\begin{tikzpicture}
	\tikzstyle{empty}=[fill=white, draw=black, shape=circle,inner sep=1pt, line width=0.7pt]
	\tikzstyle{solid}=[fill=black, draw=black, shape=circle,inner sep=1pt,line width=0.7pt]
	\begin{pgfonlayer}{nodelayer}
		\node [style=solid,label={$1a$},] (-1) at (-4.75, 1) {};
		\node [style=solid,label={$1b$},] (0) at (-4.75, -1) {};
		\node [style=solid,label={$2$}] (1) at (-3, 0) {};
		\node [style=solid,label={$4a$}] (2) at (-0.25, 1) {};
		\node [style=solid,label={$4b$}] (3) at (-0.25, -1) {};
		\node [style=empty,label={$5a$}] (4) at (1.75, 1) {};
		\node [style=empty,label={$5b$}] (5) at (1.75, -1) {};
		
	\end{pgfonlayer}
	\begin{pgfonlayer}{edgelayer}
		\draw [style=edge] (-1) to (1);
		\draw [style=edge] (0) to (1);
		\draw [style=edge] (1) to (2);
		\draw [style=edge] (1) to (3);
		\draw [style=edge] (2) to (4);
		\draw [style=edge] (3) to (5);
	\end{pgfonlayer}
\end{tikzpicture}
\end{center}
\caption{A  DAG with two leaves and two roots.}
\label{fig:collider}
\end{figure}
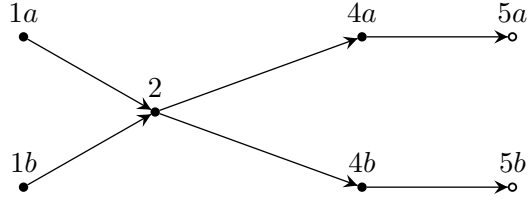
 
Consider the DAG in  Figure \ref{fig:collider}. The rightmost picture shows the situation corresponding to a collider in graphical models. Here  $\kappa_1 = \kappa_{1a}\otimes \kappa_{1b}$ has product form, but  $\kappa_2$ does not. The transformed kernel $\kappa_1^\star$ typically won't have product form, the states become entwined \citep{jacobs2019structured}.   

We explain a way to deal with  a kernel with multiple parents, say $\kappa_{\pa(s), s}$ in backward filtering.    
Its input $x = \{x_u,\, u\in\pa(s)\}$ is a concatenation of values of the process at the parent vertices. For an approximation $\tilde\kappa_{\pa(s), s}$ of $\kappa_{\pa(s), s}$, we have pullback 
\begin{equation}\label{eq:pullback multipleparents1}
g(x_{\pa(s)}):=(\tilde\kappa_{\pa(s), s} g)(x_{\pa(s)}).\end{equation} To backpropagate this function, we want to have a function in product form $\prod_{u\in \pa(s)} g_u(x_u)$ to each parent individually.  To this end, write $x_{\pa(s)}=(x_1,\ldots, x_d)$ and suppose the density of $x_{\pa(s)}$ is given by $\pi(x_1,\ldots, x_d)$. Then the joint distribution of $X_{\pa(s)}$ and its leaf descendants is  proportional to $\pi(x_1,\ldots, x_d)g(x_1,\ldots, x_d)$. Let $\pi_i$ denote the marginal distribution of $X_i$. Denote $x_{-i}$ the vector $x$ without its $i$-th component. Recall that for $\lambda$-probability densities $p$ and $q$, the Kullback-Leibler divergence of $p$ to $q$ is given by $\on{KL}(p,q) = \int p \log p/q \dd \lambda$. 
\begin{proposition}\label{prop: kl_backward}
The minimiser of $\on{KL}(\pi g, \prod_{i=1}^d \pi_i g_i)$ over $g_1,\ldots, g_d$, subject to \[\int g_i(x_i) \pi_i(x_i) \dd \lambda(x_i) =c_i, \qquad  1\le i \le d \] is given by 
\[ g_i(x_i) = c_i^{-1} \EE_\pi [g(X_1,
\ldots, X_d \mid X_i=x_i]. \]
\end{proposition}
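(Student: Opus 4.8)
The plan is to recognize this as a constrained optimization over the functions $g_1,\ldots,g_d$ and to exploit the product structure of the approximating density to decouple the problem across coordinates. First I would expand the Kullback--Leibler divergence
\[
\on{KL}\Bigl(\pi g,\ \textstyle\prod_{i=1}^d \pi_i g_i\Bigr)
= \int \pi(x) g(x) \log \frac{\pi(x) g(x)}{\prod_{i=1}^d \pi_i(x_i) g_i(x_i)} \dd\lambda(x),
\]
and observe that the terms involving $\pi$, $g$, and the $\pi_i$ are additive constants (independent of the $g_i$), while the only genuine dependence on the decision variables sits in the term $-\int \pi(x) g(x) \sum_{i=1}^d \log g_i(x_i) \dd\lambda(x)$. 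Thus minimizing the divergence is, up to constants, equivalent to maximizing $\sum_{i=1}^d \int \pi(x) g(x) \log g_i(x_i) \dd\lambda(x)$, subject to the normalization constraints $\int g_i(x_i)\pi_i(x_i)\dd\lambda(x_i)=c_i$.

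The key structural step is that this objective separates across $i$: the $i$-th summand depends only on $g_i$. For each fixed $i$ I would integrate out the remaining coordinates $x_{-i}$, writing
\[
\int \pi(x) g(x) \log g_i(x_i)\dd\lambda(x)
= \int \Bigl(\int \pi(x) g(x) \dd\lambda(x_{-i})\Bigr) \log g_i(x_i)\, \dd\lambda(x_i).
\]
Recognizing that $\int \pi(x)g(x)\dd\lambda(x_{-i}) = \pi_i(x_i)\,\E_\pi[g(X_1,\ldots,X_d)\mid X_i=x_i]$ by the definition of conditional expectation, each coordinate problem becomes: maximize $\int \pi_i(x_i)\, m_i(x_i)\log g_i(x_i)\dd\lambda(x_i)$ over $g_i$ subject to $\int g_i \pi_i \dd\lambda = c_i$, where $m_i(x_i):=\E_\pi[g\mid X_i=x_i]$.

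I would then solve each scalar problem by a Lagrange multiplier / calculus-of-variations argument. Introducing a multiplier $\mu_i$ for the constraint and taking the functional derivative of $\int \pi_i m_i \log g_i \dd\lambda - \mu_i(\int g_i\pi_i\dd\lambda - c_i)$ with respect to $g_i(x_i)$ yields the stationarity condition $\pi_i(x_i) m_i(x_i)/g_i(x_i) = \mu_i \pi_i(x_i)$, hence $g_i(x_i) = \mu_i^{-1} m_i(x_i)$. Imposing the normalization $\int g_i\pi_i\dd\lambda = c_i$ fixes the constant, and since $\int m_i(x_i)\pi_i(x_i)\dd\lambda(x_i) = \E_\pi[g]$ one finds the minimizer is proportional to $\E_\pi[g\mid X_i=x_i]$ with the stated normalization $g_i(x_i)=c_i^{-1}\E_\pi[g\mid X_i=x_i]$ (absorbing the common factor $\E_\pi[g]$ into $c_i$'s role as the prescribed integral). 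The main obstacle I anticipate is purely technical rather than conceptual: justifying the interchange of integration order and the functional differentiation rigorously (measurability, integrability of $g$ and $\log g_i$ against $\pi_i$, and ensuring the stationary point is a genuine minimum rather than a saddle). Convexity of $t\mapsto -\log t$ guarantees the stationary point is the global minimizer, so I would invoke strict convexity of KL divergence in its first argument to conclude optimality without a second-order computation.
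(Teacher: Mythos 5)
Your proposal is correct and takes essentially the same route as the paper's proof: discard the KL terms not involving the $g_i$, separate the cross-entropy objective across coordinates, and solve each constrained problem by a Lagrange-multiplier first-variation argument using $\int \pi(x)g(x)\,\dd\lambda(x_{-i})=\pi_i(x_i)\,\EE_\pi[g\mid X_i=x_i]$ (the paper varies $\tilde g_i=\log g_i$ where you vary $g_i$ directly, an immaterial difference, and you additionally justify global optimality via convexity, which the paper omits). One caveat: your stationarity computation correctly gives $g_i=(c_i/\EE_\pi[g])\,\EE_\pi[g\mid X_i=x_i]$, and the closing ``absorbing the common factor'' step is not legitimate since the $c_i$ are prescribed constraints; in fact the paper's own proof (its step $\lambda_j c_j=1$, which presumes $\EE_\pi[g]=1$, i.e.\ $\pi g$ normalised) yields $g_i=c_i\,\EE_\pi[g\mid X_i=x_i]$, so the factor $c_i^{-1}$ in the proposition's statement appears to be a typo, immaterial in the paper's application where $c_i=1$.
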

The proof is given in Section \ref{sec:remaining proofs}. 

As a consequence to this proposition (with $c_j=1$ for all $j$) we define $g_u$ by 
\begin{equation}\label{eq:pullback multipleparents2}
		g_u(x_u)   = \int g(x_{\pa(s)}) \pi(x_{-u}\mid x_u) \dd x_{-u},\qquad u\in \pa(s).	
\end{equation}
Upon defining 
\begin{equation}\label{eq:kappa_u_to_s}	 
	\tilde\kappa_{u, s}(x_u, \dd y) := \int \tilde\kappa_{\pa(s), s}(x_{\pa(s)}, \dd y) \pi(x_{-u}\mid x_u) \dd x_{-u}, 
\end{equation}
it follows that we can equivalently define $g_u$ in terms of these kernels by
\begin{align*}
g_u(x_u) =& \iint  \tilde\kappa_{\pa(s), s}(x_{\pa(s)}, \dd y) g(y) \dd y\, \pi(x_{-u}\mid x_u) \dd x_{-u} \\
	= & \int \tilde\kappa_{u, s}(x_u, \dd y) g(y) \dd y = (\tilde\kappa_{u, s}g)(x_u)
\end{align*}
(use Fubini's theorem). 
Here, the  prediction kernels $\tilde\kappa_{u, s}(x_u,\cdot)$, $u \in \pa(s)$ serve as \emph{a priori predictor} of $X_s$ if only $x_u$ is known. (From this perspective,  $\kappa_{\pa(s), s}(x,\cdot)$ is the prior predictive distribution of $X_s$ if  $x = \{x_u,\, u\in\pa(s)\}$ is known.)

\begin{definition}
	\label{defn:pullback_dag}
Assume $\vert\pa(s)\vert\ge 2$. 	
To the kernel $\kappa_{\pa(s), s}$ we define the pullback-kernel  $\bar\kappa_{\pa(s), s}$  	 induced by $\pi(x_{\pa(s)})$ 
  by its action on functions $g \in \bB(S_{s})$ 
  \[(\bar\kappa_{\pa(s), s}g)(x) = \prod_{u\in \pa(s)} (\tilde\kappa_{u, s} g)(x_u). \]
 Here, the Markov kernels $\tilde\kappa_{u, s}$, $u\in \pa(s)$ are defined in Equation \eqref{eq:kappa_u_to_s}.
\end{definition}

 The operator $\bar\kappa_{\pa(s), s}$ is {\it not} a  Markov kernel operator and therefore we denote it with a bar rather than a tilde.

\section{Numerical Example: diffusion process on a tree}\label{sec:sde_tree}

We consider a directed tree where a along each edge a diffusion process generated by an stochastic differential equation (SDE) evolves. 
As the transition densities of the process are intractable, we adopt the approach we have outlined:
\begin{enumerate}
\item on each of the edges we define a function $g$ by backward filtering;
\item the process $X^\circ$ is forward simulated.
\end{enumerate}
We illustrate the described methods using two examples of SDEs on directed trees. The backward filtering and forward guiding is based on Theorem \ref{thm:sde}. 
All experiment and figures were produced using the Hyperiax framework \url{https://github.com/computationalevolutionarymorphometry/hyperiax/}.

\subsection{An MCMC algorithm for parameter estimation}\label{sec:num}

We start with a low dimensional example where the process evolves on each branch  according to the SDE
\begin{equation}\label{eq:tanh_sde}
	\dd X_t = \tanh.\left( \begin{bmatrix} -\th_0 & \th_0 \\ 
 \th_1 & -\th_1 \end{bmatrix} X_t \right) \dd t + \begin{bmatrix} \sigma_0 & 0 \\ 0 & \sigma_1 \end{bmatrix} \dd W_t,
\end{equation}   where ``$.$'' appearing in the drift means that the $\tanh$ function is applied coordinatewise.
 Here $\bs{\th}:=(\theta_0, \theta_1, \sigma_0, \sigma_1) \in (0,\infty)^4$. Note that $\theta_0$ is the force at which the first component is drawn towards the second component (and $\theta_1$ vice versa). We forward simulated from the model with parameters $(\theta_0, \theta_1, \sigma_0, \sigma_1)=(0.0,0.65,0.1,0.4)$ on a tree with 5 levels and 121 nodes of which 81 are leaf nodes. The edge lengths are uniform random sampled in the interval $[1.2,2.2]$. The resulting sampled values for both coordinates are visualised in Figure \ref{fig:sdetreepaths}.
 \begin{figure}
\begin{center}
\includegraphics[scale=0.35,trim=20 0 0 20,clip=true]{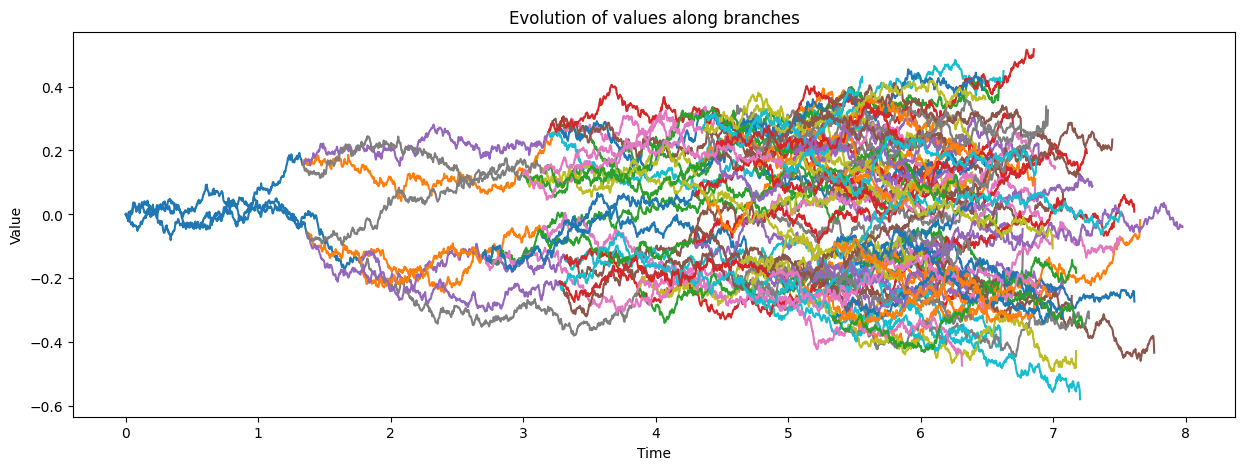}	
\includegraphics[scale=0.35,trim=20 0 0 20,clip=true]{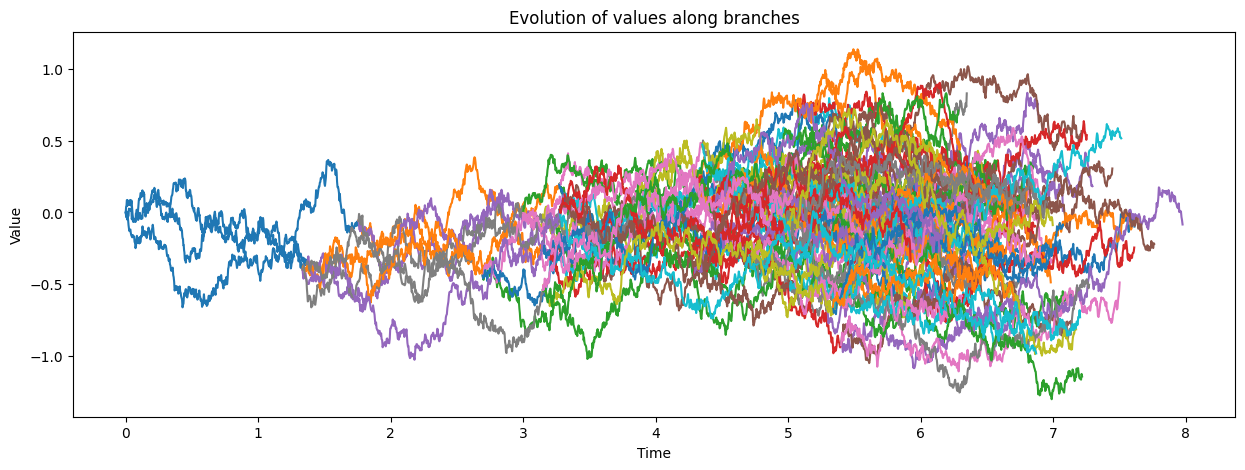}	
\caption{Forward simulated paths from \eqref{eq:tanh_sde} with $(\theta_0, \theta_1, \sigma_0, \sigma_1)=(0.0,0.65,0.1,0.4)$ on a 5 level tree with 121 nodes of which 81 are leaf nodes. Only the values at the leaf nodes are observed.  }  \label{fig:sdetreepaths}
\end{center}
\end{figure}

We assume, as throughout, that only the values at the leaf-vertices are observed. The leaf observations happens with added uncorrelated Gaussian noise in each component with variance $1e-3$. This is incorporated into the backwards filter by setting $\Sigma_v=1e-3$ in Equation \eqref{eq:sde_initleaves}. The tree-structure itself is assumed known though. We aim to estimate the parameters $\bs{\th}$.  We employ flat priors and use an MCMC-algorithm that iteratively updates the unobserved paths conditional on $\bs{\theta}$ and the observations, and $\bs{\theta}$ conditional on the unobserved paths. Elements of $\bs{\th}$ were updated using random-walk Metropolis-Hastings steps. The missing paths were updated using the BFFG-algorithm, where $\tilde{X}$ is chosen as in \eqref{eq:auxiliary_sde}, with 
\[ B = \begin{bmatrix} -\th_1 & \th_1 \\ 
 \th_2 & -\th_2 \end{bmatrix} \qquad \beta =\begin{bmatrix} 0 \\ 0 \end{bmatrix} \qquad \tilde\sigma = \begin{bmatrix} \sigma_1 & 0 \\ 0 & \sigma_2 \end{bmatrix}. \]
 Rather than the missing paths, we updated the innovations driving the SDE using a preconditioned Crank-Nicolson (pCN) update--defined in \eqref{eq:pCN}-- with $\lambda=0.9$, for otherwise the resulting chain would be reducible (the pCN updates are just as in \cite{mider2021continuous}, see \cite{roberts2001inference} for the necessity of updating innovations rather than paths directly). A detailed description of the algorithm used is given in the appendix in Section \ref{app:alg}.
Traceplots after running the algorithm for $20\_000$ iterations are shown in Figure \ref{fig:traceplots} and corresponding density plots after a burn in of $2000$ iterations are shown in Figure \ref{fig:densities}. The acceptance rate rate was $0.58$. It is remarkable that especially $\theta_0$ and $\sigma_0$ can be recovered quite well. Note that this example serves to illustrates BFFG; the chosen MCMC-sampler for updating the parameter $\theta$ is about the simplest choice possible and more efficient proposals and samplers can be exploited. 

 \begin{figure}
\begin{center}
	\includegraphics[scale=0.5,trim=0 0 360 0,clip=true]{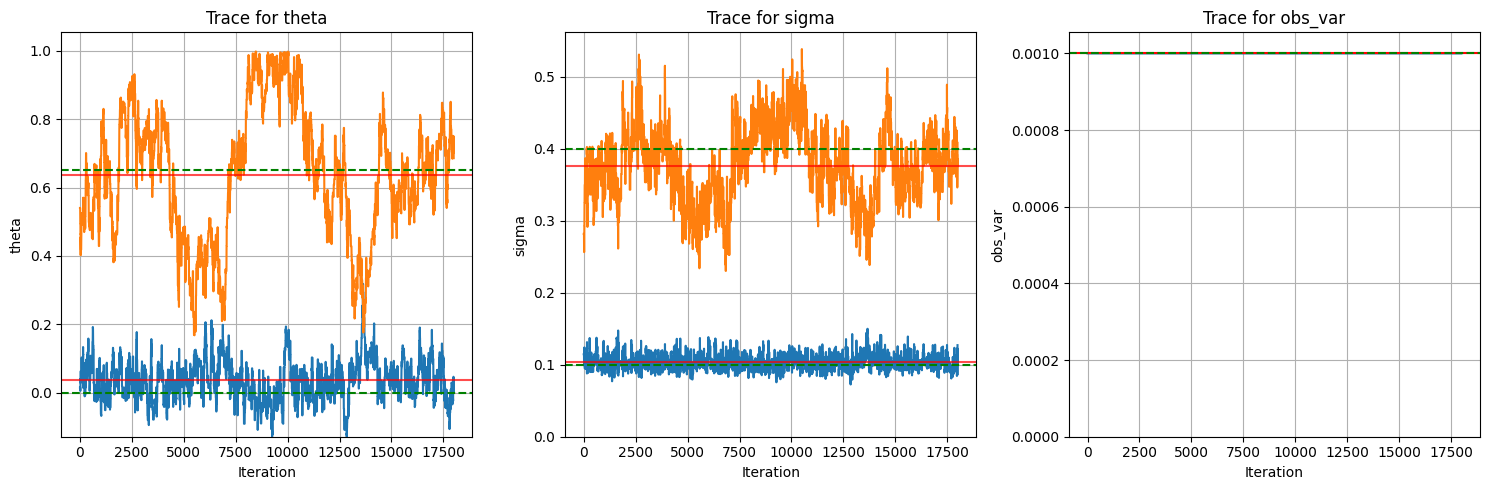}
	\caption{Traceplots for the parameters $\theta_0,\theta_1,\sigma_0,\sigma_1$. Traceplots for $\theta_0$ and $\sigma_0$ are in blue; traceplots for $\theta_1$ and $\sigma_1$ are in orange. Green horizontal lines indicate true valeus; red horizontal lines show posterior mean after removing burnin samples.\label{fig:traceplots}}
\end{center}
\end{figure}
 \begin{figure}
\begin{center}
	\includegraphics[scale=0.37]{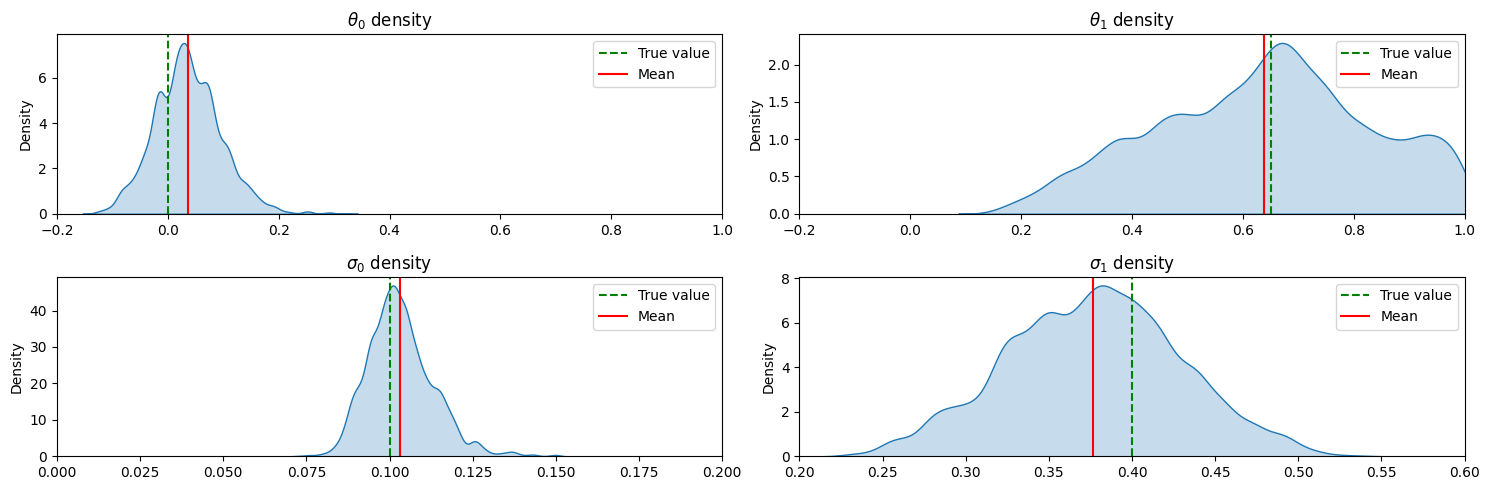}
	\caption{Densities after removing the first 2000 iterations which are considered burnin samples. \label{fig:densities} }
\end{center}
\end{figure}

\subsection{Application to shape analysis of butterfly wings}
We then proceed to a higher dimensional example where we model evolution of butterfly wing shapes perturbed by a Kunita-like flow. The use of Kunita flows in evolutionary models and application of BFFG for biological data is further explored in \cite{stroustrupStochasticPhylogeneticModels2025}, and the theoretical foundation for use of Kunita flows for shape stochastic is further detailed in \cite{bakerStochasticShapesKunitaFlows2025}. In evolutionary biology, trait evolution is commonly modelled with Brownian motions \cite{felsensteinPhylogeniesComparativeMethod1985}, traditionally for low dimensional data, i.e. small numbers of indiviual traits. Kunita flows provide a way of moving from this case to shape data that posses large numbers of points with high correlation between points. The Kunita flow model can thus be seen as a Brownian motion equivalent for morphological data. The idea is now given observations of the morphology at current time, i.e. at the leafs of the phylogenetic tree, to estimate properties of the evolutionary process at prior time, particularly parameters for the inter-point covariance and node values at prior time, e.g. the morphology at the start of the evolution at the root of the tree. To exemplify this, we consider the outline of butterfly wings represented by point configurations $x=(x_1, \ldots, x_n)$, $x_i \in \mathbb{R}^d$, $d=2$, and the Kunita flow
\begin{equation}
    dx_t^i
    =
    \int_{\mathbb{R}^d}k(x_t^i,\zeta)dW_t(\zeta)d\zeta
    \label{eq:kunita_flow_landmark}
\end{equation}
where $k$ is a kernel that describes the diffusivity operator of the Kunita flow in integral form. We let the forward model be a finite dimensional version of this with the SDE
\begin{equation}
    dx_t
    =
	K(x_t)dW_t
    \label{eq:landmark_sde}
\end{equation}
where $K$ is now the kernel matrix $[K(x)]_{ij} = k_{\bs{\th}}(x_i, x_j)\mathrm{Id}_d$, and $W_t$ is an $nd$-dimensional Wiener process. We let the kernel $k_{\bs{\th}}(x_i,x_j)=k_{\bs{\th}}(\|x_i-x_j\|)$ be a Laplacian kernel $k_{\bs{\th}}(r) = \alpha 4(3+3r+r^2)\exp(-r/\sigma)$. Conditioned on the leaf values, the shapes are observed with independent Gaussian noise with variance $\epsilon$ so that the parameter vector is $\bs{\th}=(\alpha,\sigma,\epsilon)$. 

{\it Backwards filtering: }For the auxiliary process \eqref{eq:auxiliary_sde}, we backwards filter on each  edge $e=(s,t)$  by taking $ B\equiv 0$ and $\beta\equiv 0$. On each edge, we redefine $\tilde\sigma_e$ iteratively. Let the choice of $\tilde\sigma_e$ on edge $e$ in iteration $i$  be denoted by $\tilde\sigma_e^{(i)}$, $i\ge 1$. Let $v(u)=H^{-1}(u) F(u)$ ($0\le u \le \tau_e$) and denote its value in iteration $i$ by $v^{(i)}(u)$. 
We set $\tilde\sigma_e^{(1)} = K\left(v^{(1)}(\tau_e)\right)$ and for $i\ge 1$
 \[\tilde\sigma^{(i+1)}(u)=\frac{u}{\tau_e}K\left(v^{(i+1)}(\tau_e)\right)+\left(1-\frac{u}{\tau_e}\right)K\left(v^{(i)}(0)\right).\] In this we, we aim to  iteratively refine the backwards filter.

The tree and observed leaf-node shapes are shown in Figure \ref{fig:butterfly_tree_and_shapes}. We then run an MCMC chain to estimate the parameters $\bs{\th}$ using Algorithm \ref{alg:mcmc} in Appendix \ref{app:alg}. The ``full conditional'' for updating the parameter $\epsilon$ is proportional to 
\[ (2\pi\eps)^{-|\scr{V}|/2} \exp\left(-\frac1{2\eps} \sum_{v\in \scr{V}} \|x_v-x_{\pa(v)}\|^2 \right) \pi(\epsilon)\]
Assuming the inverse-Gamma prior with parameter $(A_\eps, B_\eps)$ on $\epsilon$, it is easily seen that the full conditional distribution of $\epsilon$ is inverse-Gamma with parameter \[\left(A_\epsilon+\frac{|\scr{V}|}{2}, B_\epsilon +  \sum_{v\in \scr{V}} \|x_v-x_{\pa(v)}\|^2 \right).\] 
We took $A_\epsilon=2$ and $B_\epsilon=0.005$.
Trace plots for $\bs{\theta}$ are shown in Figure \ref{fig:butterfly_traces}. In Figure \ref{fig:butterfly_sample}, we show samples from the posterior root of the tree together with the original leaf values, and posterior samples of the parent of the leaves representing butterflies papilio xuthus and papilio zelicaon (two rightmost leaves of the tree).

\begin{figure}
\begin{center}
\includegraphics[width=0.48\textwidth]{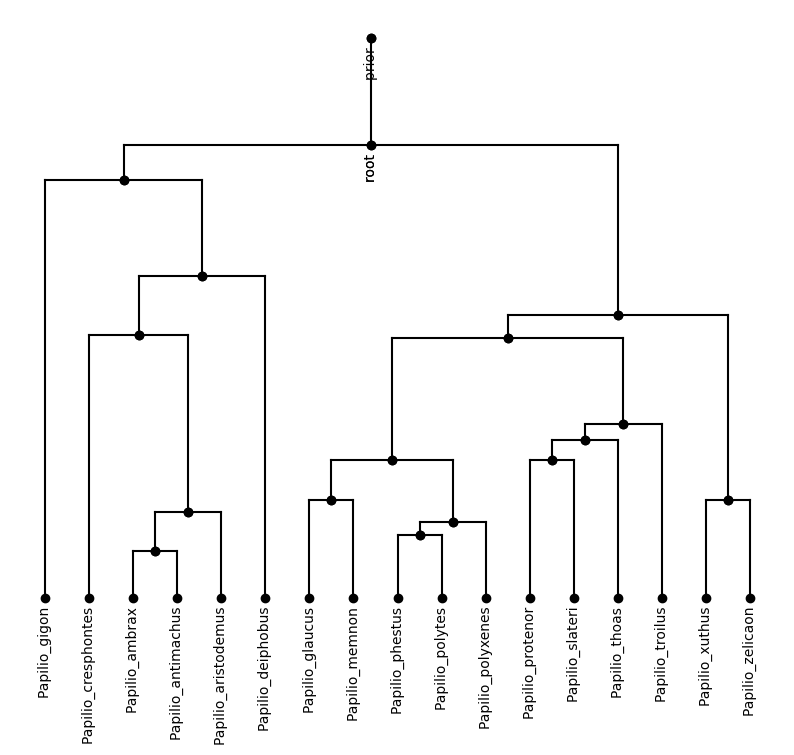}
\includegraphics[width=0.48\textwidth]{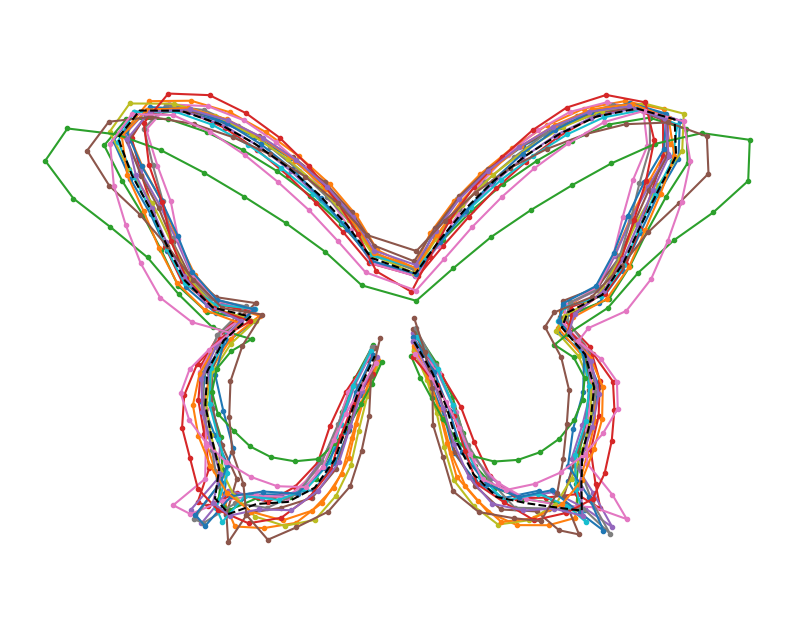}
\caption{Tree and observed shapes.} 
\label{fig:butterfly_tree_and_shapes}
\end{center}
\end{figure}

 \begin{figure}
\begin{center}
	\includegraphics[width=0.90\textwidth]{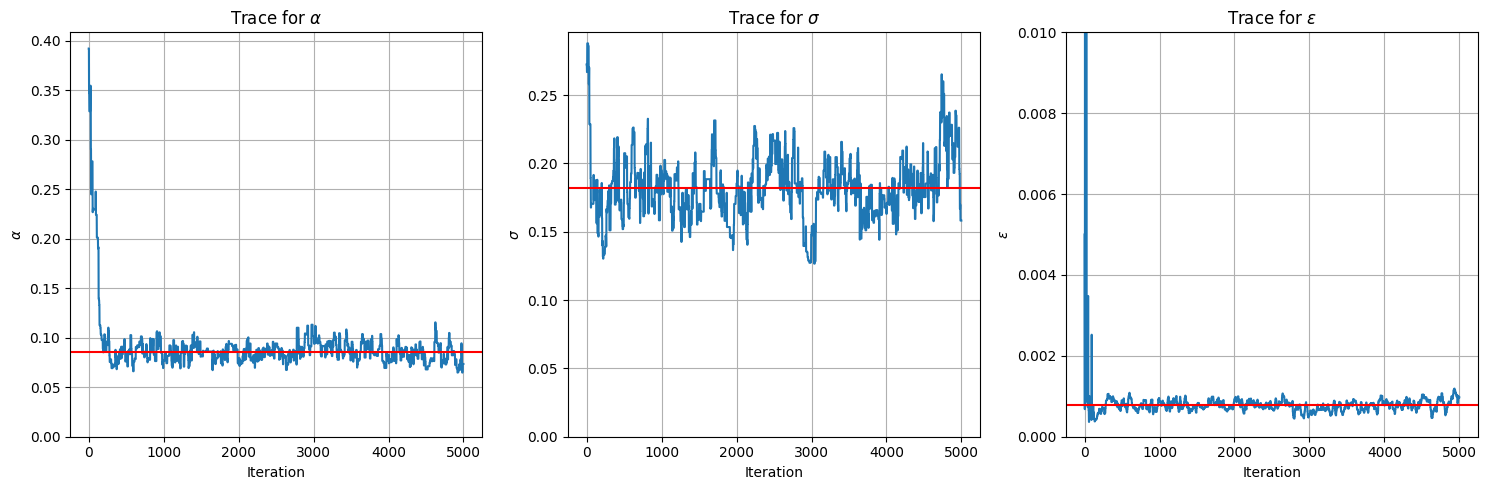}
	\includegraphics[width=0.90\textwidth]{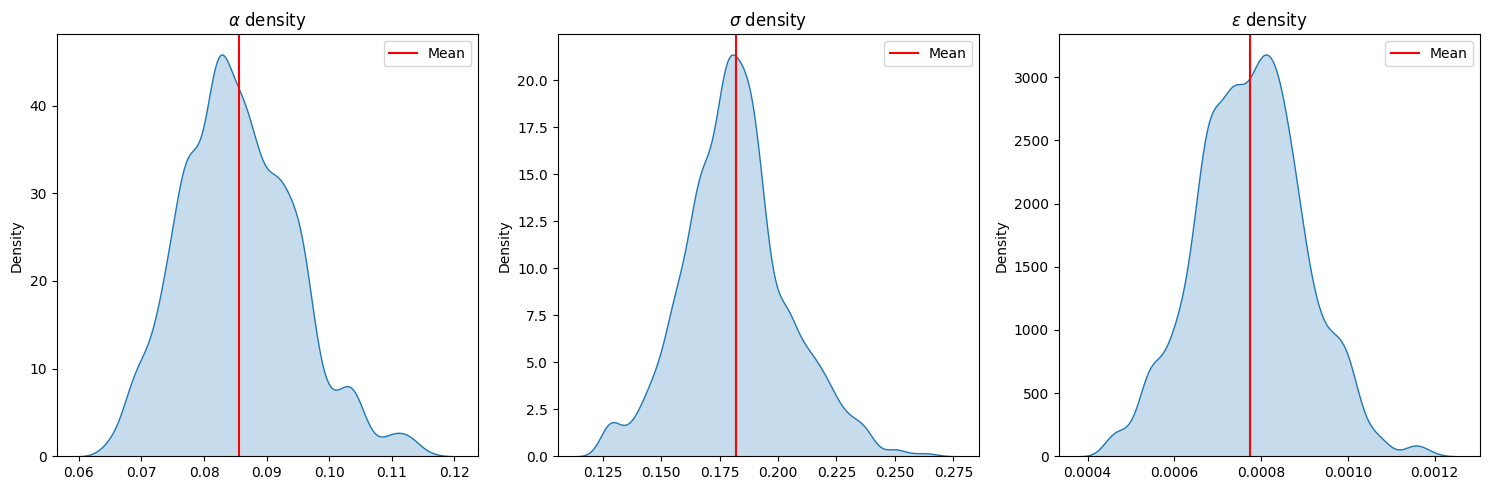}
	\caption{Trace and density plots for the parameters $\alpha$, $\sigma$ for the kernel used in the Kunita flow and the observation noise variance $\epsilon$. Red lines show the mean values of the samples.}
	\label{fig:butterfly_traces}
\end{center}
\end{figure}

 \begin{figure}
\begin{center}
\includegraphics[scale=0.4,trim=60 30 40 30,clip=true]{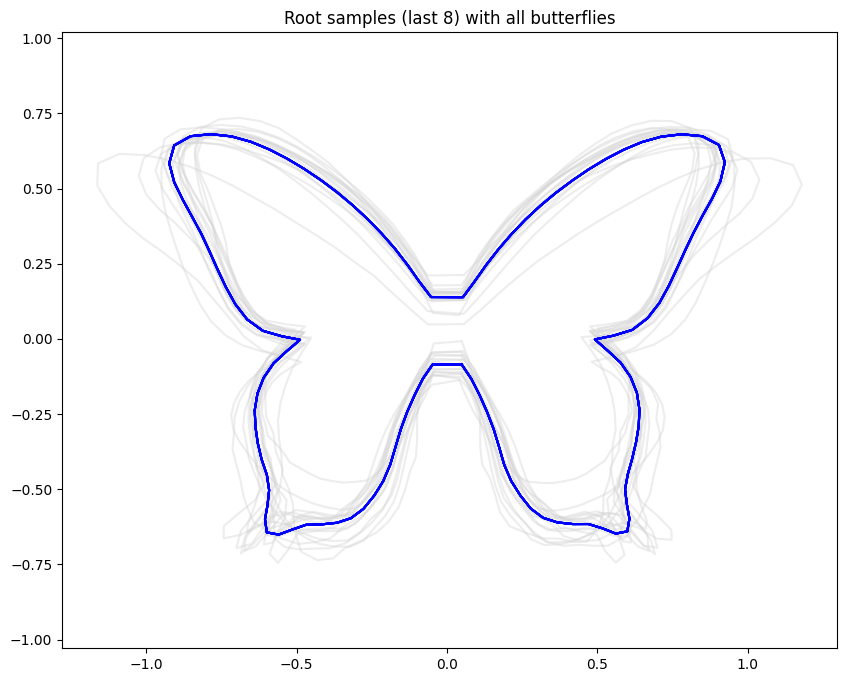}
\includegraphics[scale=0.4,trim=60 30 30 30,clip=true]{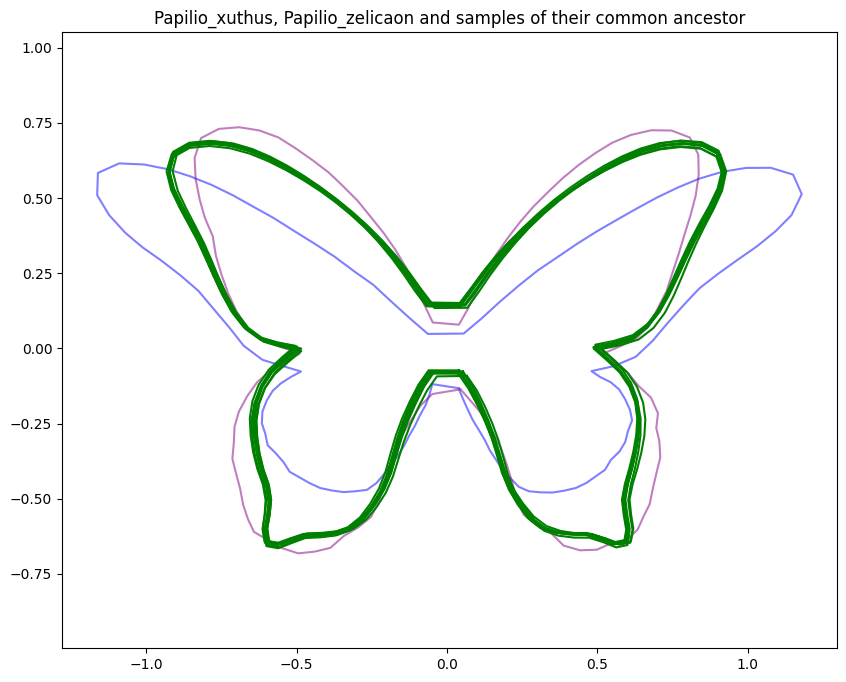}
\caption{(left) Posterior samples of the root of the tree (leaf values in the background). (right) Posterior samples of the parent of the leaves representing butterflies papilio xuthus and papilio zelicaon (continuous lines).}
\label{fig:butterfly_sample}
\end{center}
\end{figure}

\bigskip

\noindent {\bfseries Acknowledgement:}   

We thank Richard Kraaij for discussions on the topic of $h$-transforms in the early stage of this work and valuable feedback from Marc Corstanje.  The data for the butterfly example was kindly provided and preprocessed by Michael Baand Severinsen. We thank Frank Sch\"afer for preliminary simulation experiments in Julia language (\texttt{MitosisStochasticDiffEq.jl}-package) for the first  experiment in Section \ref{sec:sde_tree}.

\bibliographystyle{harry}
\bibliography{literature}

\appendix \label{sec:appendix}


\section{Auxiliary results for exponential change of measure}

\begin{theorem}[Corollary 3.3 of Chapter 2 in \cite{EthierKurtz1986}] \label{thm:loc_martingales}
Let $U$ and $Y$ be real-valued, right-continuous, $\cF_t$-adapted processes on the probability space $(\Omega, \scr{F}, \{\scr{F}_t,\, t\ge 0\}, \PP)$.
	Suppose that for each $t$, $\inf_{s\le t} U_s>0$. Then 
	\[ M_1(t)= U_t -\int_0^t Y_s \dd s \]
	is an $\cF_t$-local martingale if and only if 
\[ M_2(t)= U_t \exp\left(-\int_0^t \frac{Y_s}{U_s} \dd s\right)	\]
	is an $\cF_t$-local martingale	
\end{theorem}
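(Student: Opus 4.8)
The plan is to recognise $M_2$ as a multiplicative transform of $M_1$ by the continuous finite-variation process
\[
A_t = \exp\left(-\int_0^t \frac{Y_s}{U_s}\dd s\right),
\]
and to pass between the two statements by the product rule for semimartingales. First I would note that the hypothesis $\inf_{s\le t} U_s > 0$ guarantees, for each $t$, that $s\mapsto 1/U_s$ is bounded on $[0,t]$, so (granting the local integrability of $Y$ implicit in the definition of $M_1$) the process $A$ is well defined, continuous, strictly positive, adapted, and of finite variation, with $A_0 = 1$ and $\dd A_t = -(Y_t/U_t)A_t\dd t$. Its reciprocal $B_t = A_t^{-1}$ shares these properties and satisfies $\dd B_t = (Y_t/U_t)B_t\dd t$; both $A$ and $B$ are locally bounded, being continuous and adapted.

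For the forward implication I would assume $M_1$ is a local martingale and write $U_t = M_1(t) + \int_0^t Y_s\dd s$, which exhibits $U$ as a semimartingale. Applying integration by parts to $M_2 = UA$, and using that the covariation $[U,A]$ vanishes because $A$ is continuous of finite variation, gives
\[
\dd M_2(t) = A_t\dd U_t + U_t\dd A_t = A_t\bigl(\dd M_1(t) + Y_t\dd t\bigr) - A_t Y_t\dd t = A_t\dd M_1(t).
\]
Hence $M_2(t) = M_2(0) + \int_0^t A_s\dd M_1(s)$; since $A$ is locally bounded and predictable, this stochastic integral against the local martingale $M_1$ is again a local martingale, so $M_2$ is a local martingale.

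The reverse implication is symmetric. I would assume $M_2$ is a local martingale, write $U_t = M_2(t)B_t$ so that $U$ is again a semimartingale, and apply integration by parts, using the identity $M_2(t)B_t = U_t$ to collapse the drift term:
\[
\dd U_t = B_t\dd M_2(t) + M_2(t)\dd B_t = B_t\dd M_2(t) + M_2(t)\frac{Y_t}{U_t}B_t\dd t = B_t\dd M_2(t) + Y_t\dd t.
\]
Integrating yields $M_1(t) = U_t - \int_0^t Y_s\dd s = M_1(0) + \int_0^t B_s\dd M_2(s)$, a local martingale by the same locally-bounded-integrand argument.

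The main obstacle is justifying the stochastic calculus: a priori $U$ is only right-continuous and adapted, not a semimartingale, so the product rule cannot be invoked directly. The resolution, carried out separately in each direction, is that the assumed local-martingale hypothesis supplies the missing semimartingale decomposition of $U$ (namely $U = M_1 + \int Y$ in one direction and $U = M_2 B$ in the other), after which the product with the continuous finite-variation process $A$ or $B$ is governed by ordinary It\^o calculus with a vanishing covariation term and $\dd A$-negligible jumps. The positivity hypothesis $\inf_{s\le t}U_s>0$ is exactly what makes $A$ and $B$ finite and locally bounded, so that the resulting stochastic integrals preserve the local-martingale property in both directions.
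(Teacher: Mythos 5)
Your proof is correct; note that the paper itself gives no proof of this statement, which is quoted with attribution as Corollary~3.3 of Chapter~2 of Ethier and Kurtz (1986), so there is no internal argument to diverge from. Your product-rule derivation is in substance the classical argument behind the cited result: with $A_t=\exp\left(-\int_0^t (Y_s/U_s)\,\mathrm{d}s\right)$ and $B=A^{-1}$ continuous, adapted and of finite variation, the covariations $[U,A]$ and $[M_2,B]$ vanish, integration by parts gives $\mathrm{d}M_2=A\,\mathrm{d}M_1$ in one direction and $\mathrm{d}M_1=B\,\mathrm{d}M_2$ in the other, and the stochastic integrals preserve the local martingale property because continuous adapted integrands are predictable and locally bounded. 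You also correctly isolate the genuine technical points rather than glossing over them: (i) $U$ is a priori only right-continuous, and in each implication it is the assumed local-martingale hypothesis that upgrades $U$ to a c\`adl\`ag semimartingale (local martingales being c\`adl\`ag by convention, and $\int_0^\cdot Y_s\,\mathrm{d}s$, $B$ continuous), so integration by parts is legitimately available exactly when you use it; (ii) the substitution of $U_s$ for $U_{s-}$ in the drift term is harmless since $\mathrm{d}A_s$ is absolutely continuous and a c\`adl\`ag path has at most countably many jumps; and (iii) the local integrability of $Y$, which you flag as an implicit assumption, is indeed forced by the finiteness of $\int_0^t Y_s\,\mathrm{d}s$ in the definition of $M_1$, and together with $\inf_{s\le t}U_s>0$ it makes $A$ and $B$ well defined, strictly positive and locally bounded. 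In short, the proposal is a complete and correct self-contained proof of the imported result.
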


For a function $f$ in the domain of the infinitesimal generator $\scr{L}$, we have that 
\[ M_t=f(X_t)-f(X_0)-\int_0^t (\scr{L} f)(X_s) \dd s \]
is an  $\cF_t$-martingale (Dynkin's martingale). We can extend this to the space-time process: for $f$ in the domain of $\cA = \cL + \partial_t$ the process 
\[ \bar{M}_t=f(t,X_t)-f(0,X_0)-\int_0^t (\scr{A} f)(s,X_s) \dd s \]
is an  $\cF_t$-martingale. 

\begin{corollary}\label{cor:sufficient_localmartingale} Suppose $X$ takes values in a  metric space $E$  and $g\colon [0,T] \times E \to \RR$. 
If $\inf_{s\le t} g(s,X_s)>0$, then 
\[ g(t,X_t) \exp\left(-\int_0^t \frac{(\cA g)(s,X_s)}{g(s,X_s)} \dd s\right) \]
is an $\cF_t$-local martingale 
\end{corollary}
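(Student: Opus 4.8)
The plan is to reduce the claim to the equivalence recorded in Theorem~\ref{thm:loc_martingales}, using the space-time Dynkin martingale as the bridge between the additive and multiplicative forms.

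First I would invoke the standing assumption (Section~\ref{subsec:continuous}) that $g$ lies in the domain $\cD_\cA$ of the space-time generator $\cA=\cL+\partial_t$. Applying the space-time Dynkin martingale to $f=g$ then shows that
\[ \bar M_t = g(t,X_t) - g(0,X_0) - \int_0^t (\cA g)(s,X_s)\dd s \]
is an $\cF_t$-martingale.

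Next I would set $U_t := g(t,X_t)$ and $Y_s := (\cA g)(s,X_s)$ and consider $M_1(t)=U_t-\int_0^t Y_s\dd s$. Since $M_1(t)=\bar M_t + g(0,X_0)$ and $g(0,X_0)$ is $\cF_0$-measurable, $M_1$ is again a (local) martingale. The processes $U$ and $Y$ are right-continuous and $\cF_t$-adapted, inheriting these properties from the right-continuity and adaptedness of $X$ together with the regularity of $g$ and $\cA g$; and the only remaining hypothesis of Theorem~\ref{thm:loc_martingales}, namely $\inf_{s\le t}U_s>0$ for each $t$, is precisely the standing assumption $\inf_{s\le t}g(s,X_s)>0$ of the corollary. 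The equivalence in Theorem~\ref{thm:loc_martingales} then forces
\[ M_2(t)=U_t\exp\left(-\int_0^t \frac{Y_s}{U_s}\dd s\right) = g(t,X_t)\exp\left(-\int_0^t \frac{(\cA g)(s,X_s)}{g(s,X_s)}\dd s\right) \]
to be an $\cF_t$-local martingale, which is exactly the assertion.

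I expect the only delicate point to be the regularity bookkeeping: confirming that $g\in\cD_\cA$ so Dynkin's martingale genuinely applies, and that $s\mapsto(\cA g)(s,X_s)$ is path-integrable on $[0,t]$ and right-continuous, so that $U$ and $Y$ meet the hypotheses of the Ethier--Kurtz equivalence. Once these are granted, the crucial positivity hypothesis is handed to us directly and the remainder is a purely mechanical substitution.
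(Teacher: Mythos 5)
Your proposal is correct and takes essentially the same route as the paper: the paper's proof consists precisely of applying Theorem~\ref{thm:loc_martingales} with $U_s=g(s,X_s)$ and $Y_s=(\cA g)(s,X_s)$, and the Dynkin-martingale step you make explicit (showing $U_t-\int_0^t Y_s\dd s$ is a martingale for $g\in\cD_\cA$) is exactly the preparatory remark the paper states immediately before the corollary. Your version merely fills in the bookkeeping the paper leaves implicit, so there is nothing to correct.
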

\begin{proof}
Apply Theorem \ref{thm:loc_martingales} with   $U_s=g(s,X_s)$ and $Y_s =(\cA g)(s,X_s)$ 	.
\end{proof}

\begin{proposition}[Proposition 3.2 in \cite{PalmowskiRolski2002}]\label{prop_goodfunction}
 Suppose $g$ is a positive function that is in the domain of $\cA$. Then  $g$ is a good function if   either of the following conditions holds: \begin{itemize}
	\item $g\in \bB(E)$ and $(\cA g)/g \in \bB(E)$;
	\item $g, \scr{A}g \in \bB(E)$ and $\inf_{t,x}g_t(x)>0$. 
\end{itemize}
\end{proposition}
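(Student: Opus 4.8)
The plan is to upgrade the \emph{local} martingale property, already supplied by Corollary~\ref{cor:sufficient_localmartingale}, to the \emph{genuine} martingale property, the only gap being integrability. Since $g>0$, the candidate process
\[
Z_u^g=\frac{g_u(X_u)}{g_0(x_s)}\exp\left(-\int_0^u \frac{(\cA g_\tau)(X_\tau)}{g_\tau(X_\tau)}\,\dd\tau\right)
\]
is strictly positive, so any local-martingale version is a supermartingale by Fatou, and it is a true martingale as soon as one can exhibit martingales obtained by stopping, say $Z^g_{\cdot\wedge\rho_n}$, together with an $L^1$-domination that lets one pass $\E[Z^g_{u\wedge\rho_n}\ind_A]=\E[Z^g_{s\wedge\rho_n}\ind_A]$ to the limit for $A\in\cF_s$. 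Thus the whole argument reduces to (i) a localisation producing bounded stopped martingales and (ii) a uniform bound justifying the limit.

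The key observation is that both hypotheses force $g$ to be bounded above and $(\cA g)/g$ to be bounded. For the first bullet this is the assumption itself; for the second bullet $g,\cA g\in\bB(E)$ and $\inf_{u,x}g_u(x)>0$ give $\|(\cA g)/g\|_\infty\le \|\cA g\|_\infty/\inf_{u,x}g_u(x)<\infty$. Setting $M:=\|(\cA g)/g\|_\infty$, one obtains the deterministic bound
\[
0<Z_u^g\le \frac{\|g\|_\infty}{g_0(x_s)}\,e^{Mu},\qquad u\in[0,\tau_e],
\]
which also bounds every stopped value $Z^g_{u\wedge\rho_n}$ by a constant on the finite interval $[0,\tau_e]$. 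A family dominated by a constant is uniformly integrable, so the passage to the limit in step (i) is legitimate.

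It remains to produce the stopped martingales. Under the second bullet this is immediate: $\inf_{u,x}g_u(x)>0$ yields $\inf_{s\le u}g_s(X_s)>0$ pathwise, so Corollary~\ref{cor:sufficient_localmartingale} (i.e.\ Theorem~\ref{thm:loc_martingales} with $U_s=g_s(X_s)$ and $Y_s=(\cA g_s)(X_s)$) shows $Z^g$ is a local martingale, which by the bound above is bounded on $[0,\tau_e]$ and hence a martingale directly. Under the first bullet, where $g$ need not be bounded below, I would localise by $\rho_n:=\inf\{u:\ g_u(X_u)\le 1/n\}$: on $[0,\rho_n]$ the path $s\mapsto g_s(X_s)$ stays above a positive constant, so Corollary~\ref{cor:sufficient_localmartingale} applies to the stopped process and, combined with the bound, shows each $Z^g_{\cdot\wedge\rho_n}$ is a genuine martingale with $\E Z^g_{u\wedge\rho_n}=1$.

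The main obstacle is the limit $n\to\infty$ under the first bullet: since $g$ is only positive, not bounded below, one must verify that no mass escapes to $\{g\approx 0\}$ in finite time, i.e.\ that $\rho_n$ eventually exceeds $\tau_e$, equivalently $\inf_{s\le\tau_e}g_s(X_s)>0$ almost surely. This follows from positivity of $g$ and right-continuity of $X$ (so that $s\mapsto g_s(X_s)$ has a strictly positive infimum over the compact interval $[0,\tau_e]$), and is the one place where path regularity, rather than mere boundedness, is needed. Granting it, $Z^g_{u\wedge\rho_n}\to Z^g_u$ almost surely, the uniform bound gives $L^1$-convergence, and the martingale identity passes to the limit, so $Z^g$ is a martingale and $g$ is good. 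The stopping-time bookkeeping (behaviour of $g$ at a jump of $X$) is routine and I would not dwell on it.
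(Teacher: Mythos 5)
The paper offers no proof of this proposition---it is imported verbatim as Proposition 3.2 of \cite{PalmowskiRolski2002}---so your argument must stand on its own. Your treatment of the second bullet is correct and complete: there $(\cA g)/g$ is bounded by $\|\cA g\|_\infty/\inf_{u,x}g_u(x)$, the pathwise hypothesis $\inf_{s\le u}g_s(X_s)>0$ of Corollary \ref{cor:sufficient_localmartingale} holds trivially, and a local martingale dominated on $[0,\tau_e]$ by the deterministic constant $\|g\|_\infty e^{M\tau_e}/g_0(x_s)$ is a true martingale. (You could in fact have reduced the work: the second bullet implies the first, since $g,\cA g\in\bB(E)$ together with $\inf_{t,x} g_t(x)>0$ give $(\cA g)/g\in\bB(E)$, so only the first bullet requires an argument.) Your localisation with $\rho_n=\inf\{u:g_u(X_u)\le 1/n\}$ and the constant domination are also sound as far as they go.

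The genuine gap is exactly at the step you flag as the crux and then dismiss: positivity of $g$ plus right-continuity of $X$ does \emph{not} imply $\inf_{s\le\tau_e}g_s(X_s)>0$ a.s. A strictly positive right-continuous path can have infimum $0$ on a compact interval---take the path $u\mapsto \tfrac12-u$ on $[0,\tfrac12)$ and $\equiv 1$ on $[\tfrac12,\tau_e]$---and since $g$ is only assumed measurable (in $\cD_\cA$), $u\mapsto g_u(X_u)$ need not even be right-continuous. So the passage $n\to\infty$, i.e.\ the claim that no mass escapes to $\{g\approx 0\}$, is unproven, and under the first bullet this is precisely the whole content of the proposition. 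The gap can be closed without any localisation: under the first bullet $\cA g=g\cdot\bigl((\cA g)/g\bigr)$ is automatically bounded, so Dynkin's martingale $M_u=g_u(X_u)-g_0(x_s)-\int_0^u(\cA g_\tau)(X_\tau)\dd\tau$ is a martingale bounded on $[0,\tau_e]$, while $D_u=\exp\bigl(-\int_0^u \frac{(\cA g_\tau)(X_\tau)}{g_\tau(X_\tau)}\dd\tau\bigr)$ is adapted, continuous, of finite variation and bounded \emph{because the integrand is globally bounded}, with no control on the path infimum of $g$ required. Integration by parts (legitimate modulo the same right-continuity of $u\mapsto g_u(X_u)$ that Theorem \ref{thm:loc_martingales} already presupposes) gives $g_0(x_s)\,Z^g_u=g_0(x_s)+\int_0^u D_\tau\,\dd M_\tau$, the drift terms cancelling by the pointwise identity $g\cdot\bigl((\cA g)/g\bigr)=\cA g$, valid since $g>0$ everywhere; a stochastic integral of a bounded predictable process against a bounded martingale is a true martingale, so $g$ is good. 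The a.s.\ positivity of $\inf_{s\le\tau_e}g_s(X_s)$ then comes out a posteriori, via the minimum principle for nonnegative supermartingales; it cannot serve as an input the way your argument uses it.
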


\section{Proof of Proposition \ref{prop: kl_backward}}\label{sec:remaining proofs}

We aim to find $g_i$ minimising $-\sum_{i=1}^d \int \pi(x) g(x)  \log g_i(x_i) \dd x$. Writing $\log g_i = \tilde g_i$ and introducing Lagrange-multiplier we consider
\[ \cL(\tilde g_1,\ldots, \tilde g_d, \lambda_1,\ldots, \lambda_d) = -\sum_{i=1}^d \pi(x) g(x) \tilde g_i(x_i) \dd x +\sum_{i=1}^d \lambda_i \left(\int e^{\tilde g_i(x_i)} \pi_i(x_i)-c_i\right). \]
Now consider $\cL$ with its $j$-th argument taken to be $\tilde g_j(x_j) + \epsilon \delta(x_j)$, subtract $\cL$ and collect terms of $\scr{O}(\epsilon)$ to get
\[ -\int \pi(x) g(x) \delta(x_j) \dd x + \lambda_j \int g_j(x_j) \pi_j(x_j) \delta(x_j) \dd x_j. \]
Equating this to zero gives
\[ \int \delta(x_j) \left( \lambda_j g_j(x_j) \pi_j(x_j)  -\int \pi(x) g(x) \dd x_{-j}\right) \dd x_j = 0\]
which implies 
\[  \lambda_j g_j(x_j) \pi_j(x_j)  =\int \pi(x) g(x) \dd x_{-j}. \]
Integrating over $x_j$ gives $\lambda_j c_j = 1$ from which the result easily follows.

\section{Algorithm for numerical results of Section \ref{sec:sde_tree}}\label{app:alg}

Assume a directed tree with inner edges that are all continuous edges. Denote the collection of all such edges by $\scr{E}$.  On edge $e$, assume the continuous-time process evolving on it is parametrised by the parameter $\theta$ and that there exists a map $\scr{F}_\theta$ and random process $Z^e$ such that $X^{\circ,e}=(X^\circ_u,\, u \in [0,\tau_e])=\scr{F}_\theta(x_0,Z^e)$, assuming $X^{\circ,e}_0=x_0$.  In the following, we assume the value of the root vertex, denoted by $x_r$ is known. Let $\scr{X}^\circ=(X^{\circ,e},\, e \in \scr{E})$, $\scr{Z} = (Z^e,\, e \in \scr{E})$ and write (with abuse of notation) $\scr{X}^\circ = \scr{F}_\theta(x_r, \scr{Z})$. 
Let
\begin{equation}
    \Psi(\scr{X}^\circ;\theta) =g_r(x_r;\theta) \exp\left( \sum_{e\in \mathcal{E}} \int_0^{\tau_e} \frac{\scr{A} g}{g}(u,X^\circ_u) \dd u    \right).
\end{equation}
Algorithm \ref{alg:mcmc} defines a Markov chain that has  the posterior distribution of $(\theta, \scr{Z})$ as invariant distribution. It is an MCMC algorithms that updates $\theta$ and $\scr{Z}$ from their conditional distributions. 

Let $\Pi$ denote the distribution of $\scr{Z}$ and $Q(\scr{Z}, \dd \scr{Z}^\circ)$ the Markov kernel that samples $\scr{Z}^\circ$ conditional on $\scr{Z}$. Define
\[ w(\scr{Z}, \scr{Z}^\circ) = \frac{\Pi(\dd \scr{Z}^\circ) Q(\scr{Z}^\circ, \dd \scr{Z})}{\Pi(\dd \scr{Z}) Q(\scr{Z}, \dd \scr{Z}^\circ)}.\]
Let $\overline{\Pi}$ denote the distribution of $\theta$ and $\overline{Q}(\theta, \dd \theta^\circ)$ the Markov kernel that samples $\theta^\circ$ conditional on $\theta$. Define
\[ \bar w(\theta, \theta^\circ) = \frac{\overline{\Pi}(\dd \theta^\circ) \overline{Q}(\theta^\circ, \dd \theta)}{\overline{\Pi}(\dd \theta) \overline{Q}(\theta, \dd \theta^\circ)}.\]
Finally, let $N$ denote the total number of iterations that the chain is simulated. \begin{algorithm}
\caption{MCMC-implementation of BFFG with parameter estimation.\label{alg:mcmc}}
\begin{algorithmic}[1]
\State \textbf{Initialize}: Sample $(\scr{Z},\theta) \sim \Pi \otimes  \overline{\Pi}$. Backward filter on the tree with $\theta$.   Set $\scr{X}=\scr{F}_\theta(x_r, \scr{Z})$. 
\For{$i = 0$ to $N$}
\State \textbf{Update path}
\State Sample  $\scr{Z}^\circ \sim Q(\scr{Z},\cdot)$. Set $\scr{X}^\circ=\scr{F}_\theta(x_r, \scr{Z}^\circ)$. Accept $\scr{Z}^\circ$ with probability $A\wedge 1$ where   \begin{equation*}
        A = \frac{\Psi(\scr{X}^\circ;\theta)}{\Psi(\scr{X};\theta)} w(\scr{Z}, \scr{Z}^\circ),
    \end{equation*}  
\State If accepted, set $\scr{Z}:=\scr{Z}^\circ$ and $\scr{X}:=\scr{X}^\circ$. 
\State \textbf{Update parameter $\theta$}
\State Sample $\theta^\circ$ from $\overline{Q}(\theta, \cdot)$.
\State Backward filter on the tree with $\theta^{\circ}$. 
\State Set $\scr{X}^\circ = \scr{F}_{\theta^\circ}(x_r, \scr{Z})$. 
\State Accept $\theta^\circ$ with probability $\bar A\wedge 1$, where 
    \[ \bar A = \frac{\Psi(\scr{X}^\circ;\theta^\circ)}{\Psi(\scr{X};\theta)} \bar w(\theta, \theta^\circ).\]
\State If accepted, set $\theta:=\theta^\circ$ and $\scr{X}:=\scr{X}^\circ$. 
\EndFor
\end{algorithmic}
\end{algorithm}

\subsection{Special case: stochastic differential equation}
If $X^{\circ,e}$ solves \eqref{eq:Xcirc_sde}, $Z^e$ is the driving Wiener process of the stochastic differential equation on edge $e$. Here, we assume a strong solution to the SDE exists.
In that case, one choice for $Q$ is the preconditioned Crank-Nicolson scheme, that proposes $Z^\circ_e$ conditional on $Z^e$ as follows
\begin{equation}\label{eq:pCN} Z^{\circ,e} := \lambda Z^e + \sqrt{1-\lambda^2} \bar{W}^e,\end{equation} where $W^e$ is a Wiener process on edge $e$ that is independent of $Z^e$ and  $\lambda \in [0,1)$ is a tuning parameter. For this choice $w(\scr{Z}, \scr{Z}^\circ)=1$.

\end{document}